\newcommand{\typeleaf}[1]{\type{#1} = \leaf}
\newcommand{\typeline}[1]{\type{#1} = \linenode}
\newcommand{\leaf}{\textsc{leaf}}
\newcommand{\linenode}{\textsc{line}}
\newcommand{\type}[1]{\textsc{type}(#1)}
\newcommand{\round}[1]{\textsc{round}(#1)}
\newcommand{\roundis}[2]{\round{#1} = #2}
\newcommand{\parent}[1]{\textsc{parent}(#1)}
\newcommand{\parentis}[2]{\parent{#1} = #2}
\newcommand{\lltree}{\textsc{Line-Leaf Tree}\xspace}
\newcommand{\transition}{\textsc{Transition}\xspace}
\newcommand{\here}{\textsc{here}\xspace}
\newcommand{\nil}{\textsc{Nil}\xspace}
\newcommand{\upc}{\textsc{Up Correct}\xspace}
\newcommand{\downc}{\textsc{Down Correct}\xspace}
\newtheorem{theorem}{Theorem}
\newtheorem{corollary}[theorem]{Corollary}
\newtheorem{lemma}[theorem]{Lemma}
\newtheorem{property}[theorem]{Property}
\newtheorem{claim}[theorem]{Claim}
\theoremstyle{definition}
\newtheorem{definition}[theorem]{Definition}
\begin{document}

\title{Searching in Dynamic Tree-Like Partial Orders}

\author{Brent Heeringa}
\address{Department of Computer Science \\
Williams College}
\email{heeringa@cs.williams.edu}
\thanks{Brent Heeringa is supported by NSF grant IIS-08125414.}
\author{Marius C\u{a}t\u{a}lin Iordan}
\address{Department of Computer Science \\
Stanford University }
\email{mci@cs.stanford.edu}
\thanks{Marius C\u{a}t\u{a}lin Iordan is supported by the William R. Hewlett Stanford Graduate Fellowship.}
\author{Louis Theran}
\address{Department  of Mathematics. \\
Temple University}
\email{theran@temple.edu}
\thanks{Louis Theran is supported by CDI-I grant DMR 0835586 to Igor Rivin and M. M. J. Treacy.}

\begin{abstract}
We give the first data structure for the problem of maintaining a dynamic set of $n$ elements drawn from a partially ordered universe described by a tree.  We define the \lltree, a linear-sized data structure that supports the operations: {\em insert; delete; test membership; and predecessor}.  The performance of our data structure is within an $O(\log w)$-factor of optimal. Here $w \leq n$ is the width of the partial-order---a natural obstacle in searching a partial order.
\end{abstract}

\maketitle

\clearpage

\section{Introduction}

A fundamental problem in data structures is maintaining an ordered set $S$ of $n$ items drawn from a universe $\mathcal{U}$ of size $M \gg n$.  For a totally ordered $\mathcal{U}$, the dictionary operations: \emph{insert}; \emph{delete}; \emph{test membership}; and \emph{predecessor} are all supported in $O(\log n)$ time and $O(n)$ space in the comparison model via balanced binary search trees.  Here we consider the relaxed problem where $\mathcal{U}$ is partially ordered and give the first data structure for maintaining a dynamic partially ordered set drawn from a universe that can be described by a tree.

As a motivating example, consider an email user that has stockpiled years of messages into a series of hierarchical folders.  When searching for an old message, filing away a new message, or removing an impertinent message, the user must navigate the hierarchy.  Suppose the goal is to minimize, in the worst-case, the number of folders the user must consider in order to find the correct location in which to retrieve, save, or delete the message.  Unless the directory structure is completely balanced, an optimal search does not necessarily start at the top---it might be better to start farther down the hierarchy if the majority of messages lie in a sub-folder.  If we model the hierarchy as a rooted, oriented tree and treat the question ``is message $x$ contained somewhere in folder $y$?'' as our comparison, then maintaing an optimal search strategy for the hierarchy is equivalent to maintaining a {\em dynamic} partially ordered set under insertions and deletions.

\subsubsection*{{\bf Related Work}}

The problem of searching in trees and partial orders has recently received considerable attention.  Motivating this research are practical problems in filesystem synchronization, software testing and information retrieval~\cite{ben-asher-etal:sjc1999}.  However, all of this work is on the {\em static} version of the problem.  In this case, the set $S$ is fixed and a search tree for $S$ does not support the insertion or deletion of elements.
For example, when $S$ is totally ordered, the optimal minimum-height solution is a standard binary
search tree.  In contrast to the totally ordered case, finding a minimum height static search tree for an arbitrary partial order is
NP-hard~\cite{carmo-etal:tcs2004}.  Because of this, most recent work has focused on partial orders that can be described by rooted, oriented trees.  These are called {\em tree-like} partial orders in the literature.  For tree-like partial orders, one can find a minimum height search tree in linear time~\cite{mozes-etal:soda2008,onak-parys:focs2006,dereniowski-dariusz:dam2008}.  In contrast, the weighted version of the tree-like problem (where the elements have weights and the goal is to minimize the {\em average} height of the search tree) is NP-hard~\cite{jacobs-etal:icalp2010} although there is a constant-factor approximation~\cite{laber-molinaro:icalp2008}.  Most of these results operate in the edge query model which we review in Sec.~\ref{sec:model}.

Daskalakis et al. have recently studied the problem of \emph{sorting} partial orders~\cite{daskalakis-etal:soda2009,daskalakis-etal:arxiv2007} and, in~\cite{daskalakis-etal:arxiv2007}, ask for analogues of balanced binary search trees for dynamic partially ordered sets.  We are the first to address this question.

\subsubsection*{{\bf Rotations do not preserve partial orders.}}

\begin{figure}[tb]
\centering\includegraphics[scale=0.35]{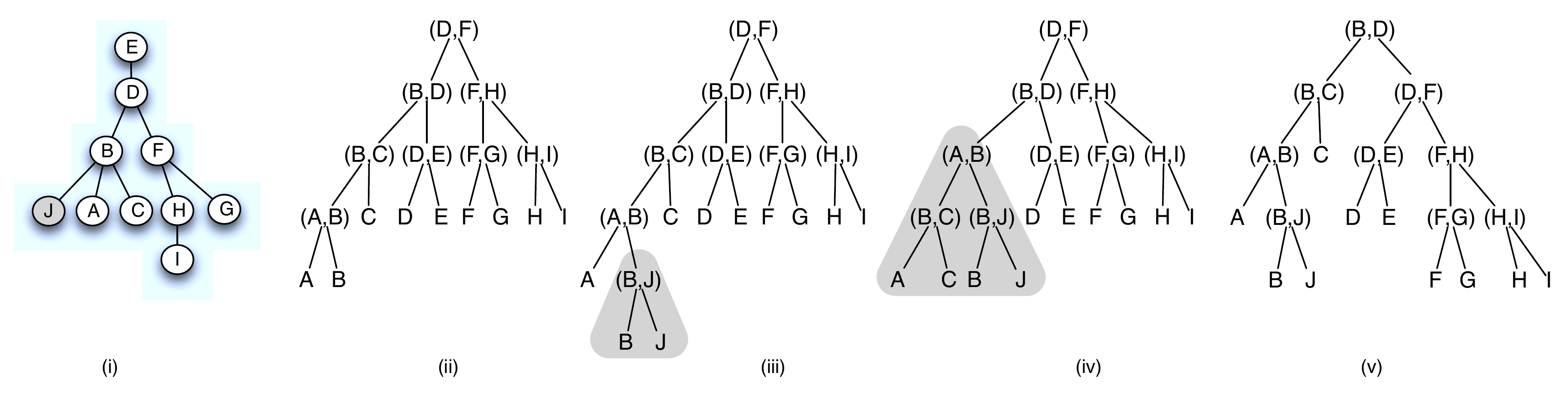}
\caption{(i) A partially ordered set $\{A, B, C, D, E, F, G, H, I, J\}$.  A downward path from node $X$ to node $Y$ implies $X \prec Y$.  Note that, for example, $E \prec F$ and $G$ and $I$ are incomparable.  (ii) An optimal search tree for the set $\{A, B, \ldots, I\}$.  For any query $(X,Y)$ an answer of $X$ means descend left and an answer of $Y$ means descend right.  (iii) After adding the element $J$, a standard search tree would add a new query $(B,J)$ below $(A,B)$ which creates an imbalance.  (iv) The search tree after a rotation; the subtree highlighted in grey is not a correct search tree for the partial order (i).  (v) An optimal search tree for the set $\{A, B, \ldots, J\}$.  \label{fig:rotations}}
\vspace{-7mm}
\end{figure}

Traditional data structures for dynamic ordered sets ({\em e.g.}, red black trees, AVL trees)  appear to rely on the total order of the data.  All these data structures use binary tree rotations
as the fundamental operations; applied in an unrestricted manner, rotations \emph{require} a totally ordered universe.  For example, consider Figure~\ref{fig:rotations} (ii) which gives an optimal search tree for the elements $\{A, B, \ldots, I\}$ depicted in the partial order of Figure~\ref{fig:rotations} (i).  If we insert node J (colored grey) then we must add a new test $(B,J)$ below $(A,B)$ which creates the sub-optimal search tree depicted in Figure~\ref{fig:rotations} (iii).  Using traditional rotations yields the search tree given in Figure~\ref{fig:rotations} (iv) which does not respect the partial order; the leaf marked $C$ should appear under the right child of test $(A,B)$.  Figure~\ref{fig:rotations} (v) denotes a correct optimal search for the set $\{A, B, \ldots, J\}$.
The key observation is that, if we imagine the leaves of a binary search tree for a total order partitioning the real line,
rotations preserve the order of the leaves, but not any kind of subtree relations on them.  As a consequence, blindly applying rotations to a search tree for the static problem does not yield a viable dynamic data structure.   To sidestep this problem, we will, in essence, decompose the tree-like partial order into totally ordered chains and totally incomparable stars.

\subsubsection*{{\bf Techniques and Contributions}}

We define the \lltree, the first data structure that supports the fundamental dictionary operations for a \emph{dynamic} set $S \subseteq \mathcal{U}$ of $n$ elements drawn from a universe equipped with a \emph{partial order} $\preceq$ described by a rooted, oriented tree.

Our dynamic data structure is based on a static construction algorithm that takes as input the {\em Hasse diagram} induced by $\preceq$ on $S$ and in $O(n)$ time and space produces a \lltree for $S$.  The Hasse diagram $H_{S}$ for $S$ is the directed graph that has as its vertices the elements of $S$ and a directed edge from $x$ to $y$ if and only if $x \prec y$ and no $z$ exists such that $x \prec z \prec y$.  We build the \lltree inductively via a natural contraction process which starts with $H_{S}$ and, ignoring the edge orientations, repeatedly performs the following two steps until there is a single node:
\begin{enumerate}
\item Contract paths of degree-two nodes into balanced binary search trees (which we can binary search efficiently); and
\item Contract leaves into linear search structures associated with their parents (which are natural search structures since the children of an interior node are mutually incomparable).
\end{enumerate}
One of these steps always applies in our setting since $H_{S}$ is a rooted, oriented tree.  We give an example of each step of the construction in Figure~\ref{fig:bstlst}.   We show that the contraction process yields a search tree that is provably within an $O(\log w)$-factor of the minimum-height search tree for $S$.  The parameter $w$ is the {\em width} of $S$---the size of the largest subset of mutually incomparable elements of $S$---which represents a natural obstacle when searching a partial order.  Our construction algorithm and analysis appear in Section~\ref{sec:construction}.

The intuition behind the proof of the approximation ratio is that an optimal search tree for any minor of $H_{S}$ gives a lower bound on an optimal search tree for $H_{S}$.  Since optimal search trees are easy to describe for paths of degree-two nodes as well as for stars, the approximation ratio follows by bounding the number of rounds in the contraction process.  We also show that our analysis is tight.

\begin{figure}[tb]
\centering\includegraphics[scale=0.17]{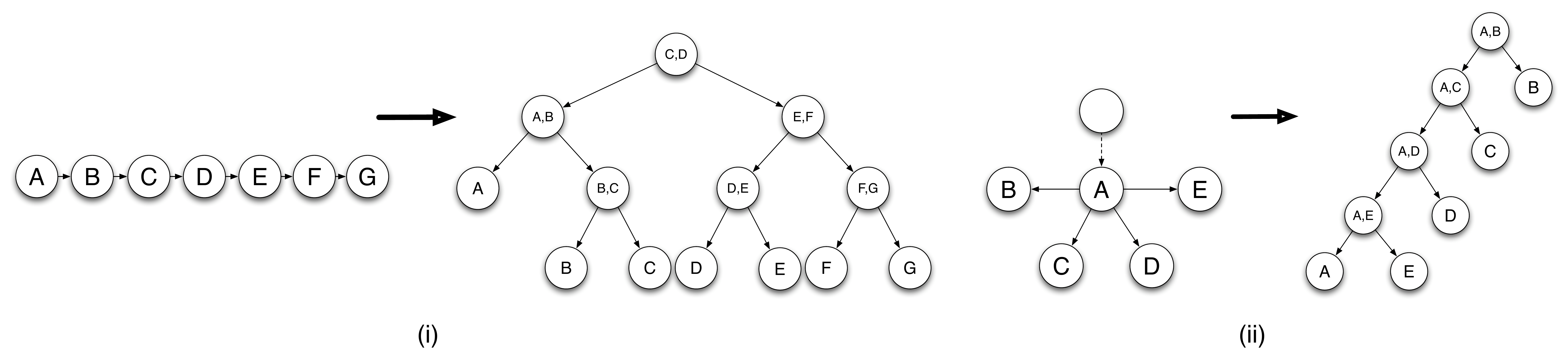}
\caption{Examples of (i) a line contraction where we build a balanced binary search tree from a path and (ii) a leaf contraction where we build a linear search tree from the leaves of a node. \label{fig:bstlst}}
\end{figure}

To make the $\lltree$ \emph{fully dynamic}, in Section~\ref{sec:operations} we give procedures  to update it under insertions and deletions.  All the operations, take $O(\log w) \cdot OPT$ comparisons and RAM operations where $OPT$ is the height of a minimum-height search tree for $S$. Additionally, \emph{insertion} requires only $O(h)$ comparisons, where $h$ is the height of the \lltree being updated.  (The non-restructuring operations \emph{test membership} and \emph{predecessor} also require at most $O(h)$ comparisons since the \lltree is a search tree).  Because $w$ is a property of $S$, in the dynamic setting it changes under insertions and deletions.  However, the \lltree maintains the $O(\log w)\cdot OPT$ height bound \emph{at all times}.  This means it is well-defined to speak of the $O(\log w)\cdot OPT$ upper bound without mentioning $S$.

The insertion and deletion algorithms maintain the invariant that the updated {\sc Line-Leaf} {\sc Tree} is structurally equivalent to the one that we would have produced had the static construction algorithm been applied to the updated set $S$.  In fact, the heart of insertion and deletion is {\em correcting} the contraction process to maintain this invariant.  The key structural property of a \lltree---one that is not shared by constructions for optimal search trees in the static setting---is that its sub-structures essentially represent either paths or stars in $S$, allowing for updates that make only local changes to each component search structure.  The $O(\log w)$-factor is the price we pay for the additional flexibility.  The dynamic operations, while conceptually simple, are surprisingly delicate.  As with many data structures, our proofs perform a case analysis which mimics the underlying algorithmic definitions of {\sc Insert} and {\sc Delete} respectively.

In Section~\ref{sec:empirical} we provide empirical results on both random and real-world data that show the \lltree is strongly competitive with the static optimal search tree.

\section{Models and Definitions} \label{sec:model}

Let $\mathcal{U}$ be a finite set of $M$ elements and let $\preceq$ be a partial order, so the pair $(\mathcal{U}, \preceq)$ forms a \emph{partially ordered set}.  We assume the answers to $\preceq$-queries are provided by an oracle. (Daskalakis, et al.~\cite{daskalakis-etal:soda2009} provide a space-efficient data structure to answer $\preceq$-queries in $O(1)$ time.)

In keeping with previous work, we say that $\mathcal{U}$ is {\em tree-like} if $H_{\mathcal{U}}$ forms a rooted, oriented tree.  Throughout the rest of this paper, we assume that $\mathcal{U}$ is tree-like and refer to the vertices of $H_{\mathcal{U}}$ and the elements of $\mathcal{U}$ interchangeably. For convenience, we add a dummy minimal element $\nu$ to $\mathcal{U}$.  Since any search tree for a set $S \subseteq \mathcal{U}$ embeds with one extra comparison into a corresponding search tree for $S \cup \{\nu\}$, we assume from now on that $\nu$ is always present in $S$.

Given these assumptions it is easy to see that tree-like partial orders have the following properties:

\begin{property} \label{prop:tree-like}
Any subset $S$ of a tree-like partially ordered universe $\mathcal{U}$ is also tree-like.
\end{property}

\begin{property} \label{prop:pred}
Every non-root element in a tree-like partially ordered set $S \subseteq \mathcal{U}$ has exactly one predecessor in $H_{S}$.
\end{property}

\begin{figure}[tb]
\begin{center}
\includegraphics[scale=.5]{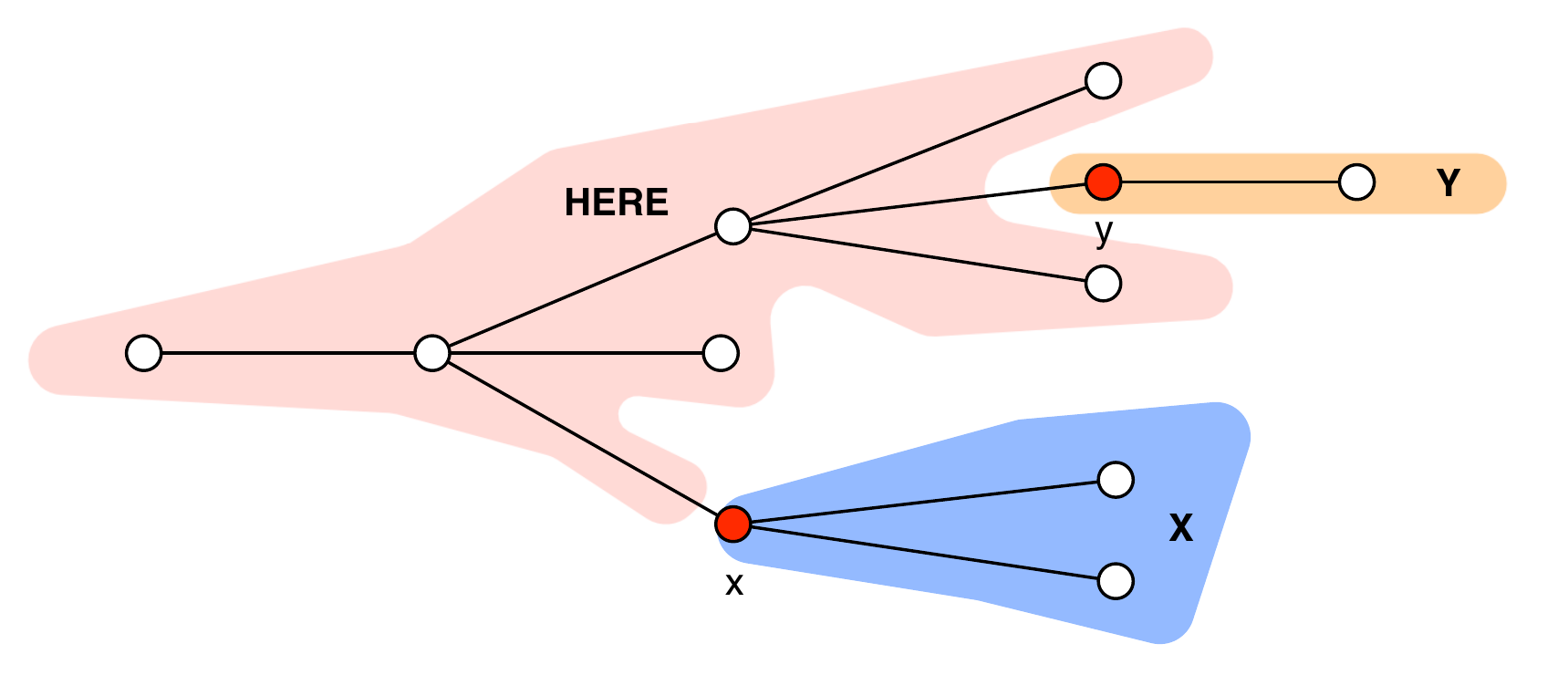}
\caption{\label{fig:dynamic-edge-queries} Given two nodes $x$ and $y$ in $S$ and a third node $u \in \mathcal{U}$, a dynamic edge query on $(x,y)$ with respect to $u$ can answer (i) {\sc y}, in which case $u$ falls somewhere in the shaded
area labelled Y; (ii) {\sc x}, in which case $u$ falls somewhere in the shaded area labelled X; or (iii) \here, in which case $u$ falls somewhere in the shaded area labelled HERE.  Notice that if $(x,y)$ forms an actual edge then the query reduces to a standard edge query}.
\end{center}
\end{figure}

Let $T_S$ be the undirected (but still rooted and oriented) Hasse diagram for $S$.

We extend edge queries to {\em dynamic edge queries} by allowing queries on arbitrary pairs of nodes in $T_{S}$ instead of just edges in $T_{S}$.

\begin{definition}[\textbf{Dynamic Edge-queries}]  \label{def:edge-query}
Let $u$ be an element in $\mathcal{U}$ and $x$ and $y$ be nodes in $T_{S}$.  Let $S' = S \cup \{ u \}$ and consider the edges $(x,x')$ and $(y,y')$ bookending the unique path from $x$ to $y$ in $T_{S'}$.  Define $T_{S'}^x$, $T_{S'}^y$ and $T_{S'}^{\here}$ to be the three connected components of $T_{S'} \setminus \{(x,x'), (y,y')\}$ containing $x$, $y$, and neither $x$ nor $y$, respectively.  A {\em dynamic edge query} on $(x,y)$ with respect to $u$ has one of the following three answers:
\begin{enumerate}
\item {\sc x}:   if $u \in T_{S'}^x$ ($u$ equals or is closer to $x$)
\item {\sc y}:   if $u \in T_{S'}^y$ ($u$ equals or is closer to $y$)
\item \here:    if $u \in T_{S'}^{\here}$ ($u$ falls between, but is not equal to either, $x$ or $y$)
\end{enumerate}
\end{definition}

Figure~\ref{fig:dynamic-edge-queries} gives an example of a dynamic edge query.   Any dynamic edge query can be simulated by $O(1)$ standard comparisons when $H_{S}$ is tree-like.  This is not the case for more general orientations of $H_S$ and an additional data structure is required to implement either our algorithms or algorithms of~\cite{mozes-etal:soda2008,onak-parys:focs2006}.  Thus, for a tree-like $S$, the height of an optimal search tree in the dynamic edge query model and the height of an optimal decision tree for $S$ in the comparison model are always within a small constant factor of each other.  For the rest of the paper, we will often drop {\em dynamic} and refer to {\em dynamic edge queries} simply as {\em edge queries}.

\section{\lltree Construction and Analysis} \label{sec:construction}

We build a $\lltree$ $\mathcal{T}$ inductively via a contraction process on $T_{S}$.   Each contraction step builds a {\em component search structure} of the $\lltree$. These component search structures are either linear search trees or balanced binary search trees.  A linear search tree $LST(x)$ is a sequence of dynamic edge queries, all of the form $(x,y)$ where $y \in S$, that ends with the node $x$.  A balanced binary search tree $BST(x,y)$ for a path of contiguous degree-2 nodes between, but not including, $x$ and $y$ is a tree that {\em binary searches} the path using edge queries.

Let $T_0 = T_{S}$.  If the contraction process takes $m$ iterations total, then the final result is a single node which we label $\mathcal{T}=T_{2m}$.    In general, let $T_{2i-1}$ be the partial order tree after the line contraction of iteration $i$ and $T_{2i}$ be the partial order tree after the leaf contraction of iteration $i$ where $i \geq 1$.  We now show how to construct a \lltree for a fixed tree-like set $S$.

\begin{description}

\item [Base Cases]  Associate an empty balanced binary search tree $BST(x,y)$ with every actual edge $(x,y)$ in $T_0$.   Associate a linear search tree $LST(x)$ with every node $x$ in $T_0$.  Initially, $LST(x)$ contains just the node itself.

\item [Line Contraction] Consider the line contraction step of iteration $i \geq 1$: If $x_2, \dots, x_{t-1}$ is a path of contiguous degree-2 nodes in $T_{2(i-1)}$ bounded on each side by non-degree-2 nodes $x_1$ and $x_t$ respectively, we contract this path into a balanced binary search tree $BST(x_{1},x_{t})$ over the nodes $x_2, \dots, x_{t-1}$. The result of the path contraction is an edge labeled $(x_1, x_t)$. This edge yields a dynamic edge query.

\item[Leaf Contraction] Consider the leaf contraction step of iteration $i \geq 1$: If $y_1, \dots, y_{t}$ are all degree-1 nodes in $T_{2i-1}$ adjacent to a node $x$ in $T_{2i-1}$, we contract them into the linear search tree $LST(x)$ associated with $x$.  Each node $y_{j}$ contracted into $x$ adds a dynamic edge query $(x,y_{j})$ to $LST(x)$.  If nodes were already contracted into $LST(x)$ from a previous iteration, we add the new edge queries to the front (top) of the LST.
\end{description}

\noindent After $m$ iterations we are left with $\mathcal{T}=T_{2m}$ which is a single node.  This node is the root of the \lltree.

\subsection{Example Construction}

\begin{figure}[tb]
\begin{center}
\includegraphics[scale=0.3]{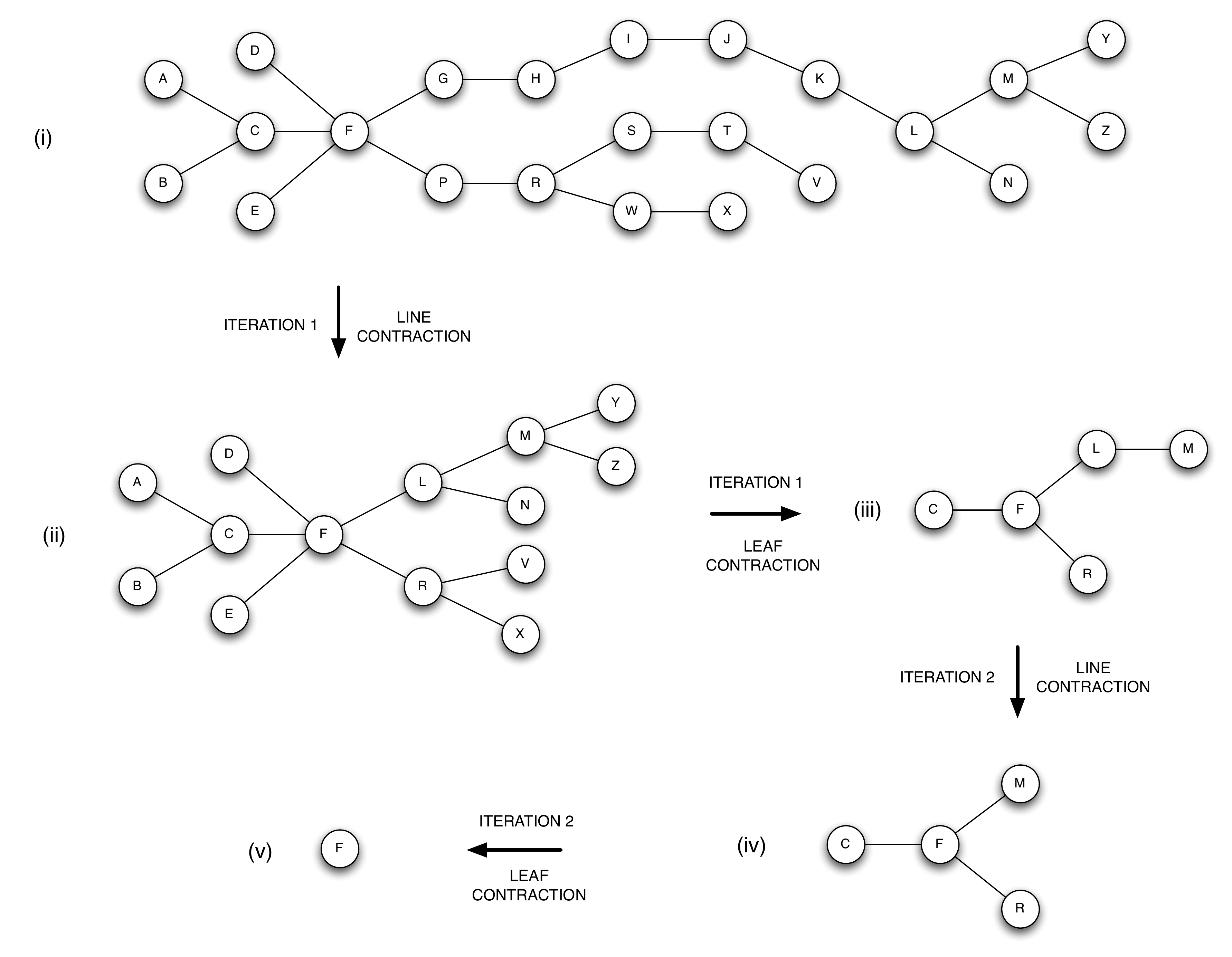}
\end{center}
\caption{The set $S$ under several iterations of the construction algorithm. \label{fig:construction-process}}
\end{figure}

\begin{figure}[tb]
\begin{center}
\includegraphics[scale=0.35]{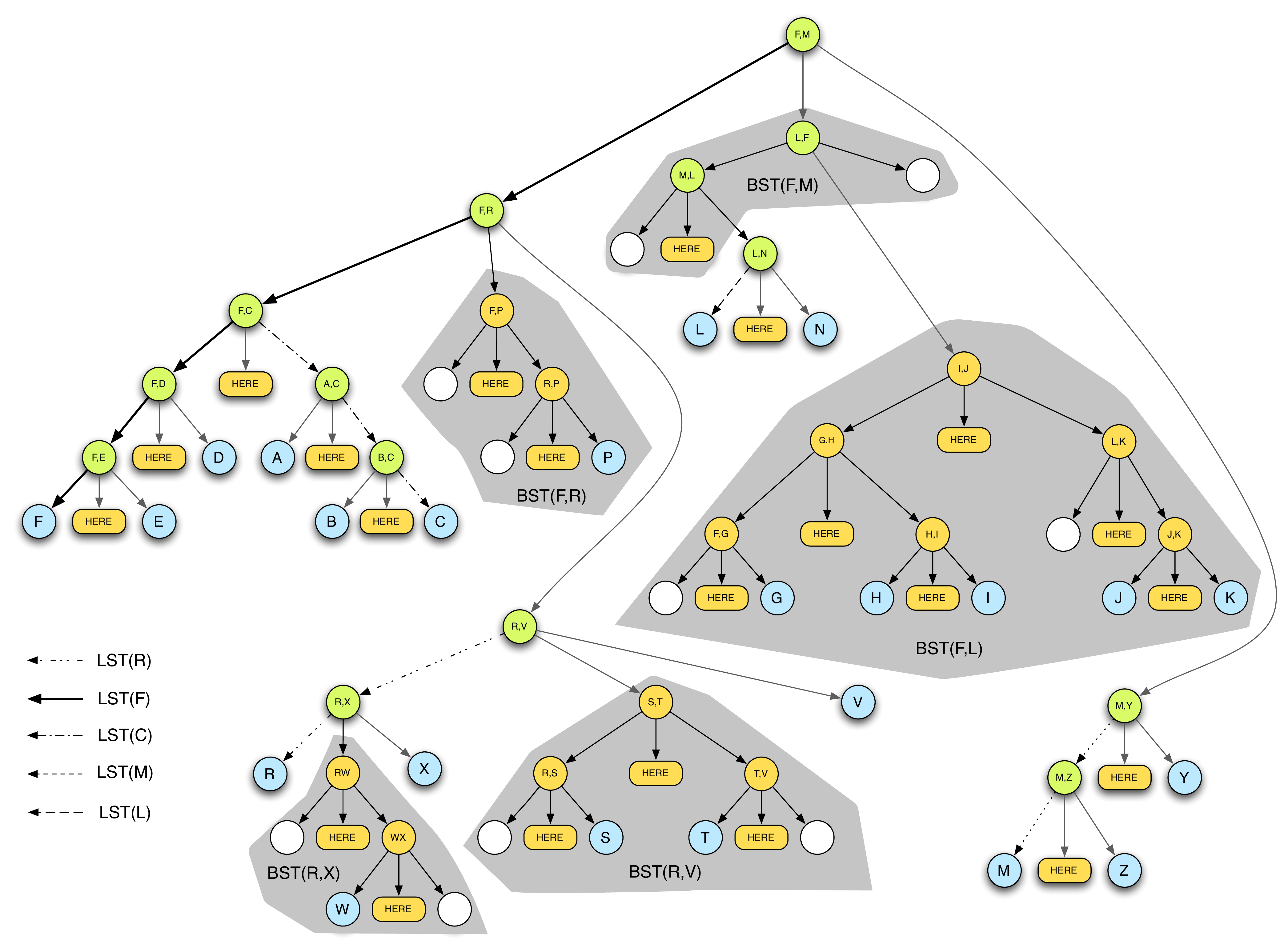}
\caption{\lltree for the set $S$ given in Figure~\ref{fig:construction-process}. The empty nodes indicate impossible answers. \label{fig:lltree}}
\end{center}
\end{figure}

Here we provide an example $\lltree$ construction for a partial order on a set $S$ with 23 elements.  Figure~\ref{fig:construction-process} shows $T_{S}$ after each step of each round of the contraction process.  Figure~\ref{fig:lltree} shows the final $\lltree$.

Suppose $S$ has the tree structure illustrated in Figure~\ref{fig:construction-process} (i). We associate an empty balanced binary search tree (BST) with every edge in $T_0 = T_{S}$ and a linear search tree (LST) comprised of only the node itself with every node in $T_0$. The first path contraction creates BSTs for the chains $\{ G, H, I, J, K \}$, $\{ P \}$, $\{ S, T \}$, $\{ W \}$, and associates them with the edges $(F, L)$, $(F, R)$, $(R, V)$, $(R, X)$, respectively. We obtain the tree in Figure~\ref{fig:construction-process} (ii).

The first iteration ends with a leaf contraction step that adds collections of leaves $\{ A, B \}$, $\{ D, E \}$, $\{ N \}$, $\{ V, X \}$, $\{ Y, Z \}$ to the LSTs of elements $C$, $F$, $L$, $R$, $M$, respectively.  This yields the tree in Figure~\ref{fig:construction-process} (iii).

At this point, the next path contraction creates a BST for the single-element chain $\{ L \}$ and associates this BST with edge $(F,M)$. Finally, the second leaf contraction reduces the tree to a single node by contracting the final leaves $C$, $M$, $R$ into the LST of node $F$ as shown in Figure~\ref{fig:construction-process} (v).  This ends the construction process.

Notice that in Figure~\ref{fig:lltree} some answers to edge queries are left empty.  We call these {\em impossible} answers.  This happens because the answer \here to an edge query $(x,y)$ implies that the node $u$ we seek is not equal to either $x$ or $y$, but rather lies between them. However, if there is at least one node on the path between $x$ and $y$, we need to ask the questions of the edges adjacent to nodes $x$ and $y$ on that path in order to determine whether $u$ should be placed between two elements.  Such a question cannot answer $x$ or $y$ since the \here answer eliminated this possibility. Thus these choices are impossible.

\subsection{Searching a \lltree}

Searching a $\lltree$ for an element $u$ is tantamount to searching the component search structures.  A search begins with $LST(x)$ where $x$ is the root of $\mathcal{T}$.   Searching $LST(x)$ with respect to $u$ serially questions the edge queries in the sequence.  Starting with the first edge query, if $(x,y)$ answers {\sc x} then we move onto the next query $(x,z)$ in the sequence.  If the query answers \here then we proceed by searching for $u$ in $BST(x,y)$.  If it answers {\sc y}, then we proceed by searching for $u$ in $LST(y)$.  If there are no more edge queries left in $LST(x)$, then we return the actual element $x$.   When searching $BST(x,y)$, if we ever receive a \here response to the edge query $(a,b)$, we proceed by searching for $u$ in $BST(a,b)$.  That is, we leave the current BST and search in a new BST.  If the binary search concludes with a node $x$, then we proceed by searching $LST(x)$.  Searching an empty BST returns \nil.

\subsection{Implementation Details}

The $\lltree$ is an index into $H_{S}$ but not a replacement for $H_{S}$.  That is, we maintain a separate DAG data structure for $H_{S}$ across insertions and deletions into $S$.  This allows us, for example, to easily identify the predecessor and successors of a node $x \in S$ once we've used the \lltree to find $x$ in $H_{S}$.  The edges of $H_{S}$ also play an essential role in the implementation of the \lltree.  Namely, an edge query $(x,y)$ is actually two pointers: $\lambda_{1}(x,y)$ which points to the edge $(x,a)$ and $\lambda_{2}(x,y)$ which points to the edge $(b,y)$.  Here $(x,a)$ and $(b,y)$ are the actual edges bookending the undirected path between $x$ and $y$ in $T_{S}$.  This allows us to take an actual edge $(x,a)$ in memory, rename $x$ to $w$, and indirectly update all edge queries $(x,z)$ to $(w,z)$ in constant time.  Here the path from $z$ to $x$ runs through $a$.  Note that we are not touching the pointers involved in each edge query $(x,z)$, but rather, the actual edge in memory to which the edge query is pointing.

Edge queries are created through line contractions so when we create the binary search tree $BST(x,y)$ for the path $x,a,\ldots,b,y$, we let $\lambda_{1}(x,y) = \lambda_{1}(x,a)$ and $\lambda_{2}(x,y) = \lambda_{2}(b,y)$.  We assume that every edge query $(x,y)$ corresponding to an actual edge $(x',y')$ has $\lambda_{1}(x,y)=\lambda_{2}(x,y)=(x',y')$.

\subsection{Node Properties} \label{sec:node-props}

We associate two properties with each node in $S$.  The {\em round} of a node $x$ is the iteration $i$ where $x$ was contracted into either an LST or a BST. We say \roundis{$x$}{$i$}.
The {\em type} of a node represents the step where the node was contracted. If node $x$ was {\it line contracted}, we say \typeline{$x$}, otherwise we say \typeleaf{$x$}.

In addition to {\sc round} and {\sc type}, we assume that both the linear and binary search structures provide a {\sc parent} method that operates in time proportional to the height of the respective data structure and yields either a node (in the case of a leaf contraction) or an edge query (in the case of a line contraction).  More specifically, if node $x$ is leaf contracted into $LST(a)$ then \parentis{$x$}{$a$}.  If node $x$ is line contracted into $BST(a,b)$ then \parentis{$x$}{$(a,b)$}.  We emphasize that the {\sc parent} operation here refers to the \lltree and not $T_{S}$.  Collectively, the {\sc round}, {\sc type}, and {\sc parent} of a node help us recreate the contraction process when inserting or removing a node from $S$.

\subsection{Approximation Ratio}

The following theorem gives the main properties of the static construction.

\begin{theorem} \label{thm:height}
The worst-case height of a \lltree $\mathcal{T}$ built from a tree-like $S$ is $\Theta(\log w) \cdot OPT$ where $w$ is the width of $S$ and $OPT$ is the height of an optimal search tree for $S$.  In addition, given $H_{S}$, $\mathcal{T}$ can be built in $O(n)$ time and space.
\end{theorem}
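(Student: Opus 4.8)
The plan is to establish the three assertions—the $O(\log w)\cdot OPT$ upper bound, the matching $\Omega(\log w)\cdot OPT$ lower bound, and the $O(n)$ construction cost—more or less independently, reading the contraction as a \emph{rake-and-compress} process in which a rake removes the current degree-$1$ nodes (leaf contraction) and a compress contracts the current maximal degree-$2$ paths (line contraction). For the construction cost, I would first note that every element of $S$ is placed into exactly one component structure, so the only real bookkeeping is per round. Since rake-and-compress removes a constant fraction of the nodes of the contracted tree $T_i$ each round, the sizes $|T_i|$ shrink geometrically; hence $\sum_i |T_i| = O(n)$ and the number of rounds is $m = O(\log n)$. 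Each round's work—locating degree-$1$ and degree-$2$ nodes and building the corresponding BSTs and LSTs—is linear in $|T_i|$ plus the total length of the contracted paths, and the latter sums to $O(n)$ because distinct line contractions use disjoint node sets. This yields $O(n)$ time and $O(n)$ space.

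For the upper bound the first ingredient is a minor-monotonicity fact: if $M$ is obtained from $H_S$ by edge deletions and contractions then $OPT(M)\le OPT(H_S)$, since any search tree for $H_S$ projects to one for $M$ of no greater height. Using this I would bound the height of each component structure on a root-to-leaf path of $\mathcal{T}$ by $O(OPT)$. A $BST(x,y)$ built from a chain of $\ell$ internal nodes witnesses a path minor $P_\ell$ with $OPT(P_\ell)=\Theta(\log\ell)$, matching the BST height; an $LST(x)$ of size $k$ witnesses a star minor obtained by contracting each of the $k$ subtrees hanging off $x$ to a single vertex, and edge queries force $OPT=\Theta(k)$ on a star, matching the LST height. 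Here $k\le w$ because the children of $x$ form an antichain. The second ingredient bounds the \emph{number} of structures on such a path: because \textsc{parent} in $\mathcal{T}$ always points to a later-round structure, the associated rounds strictly decrease as we descend, so a path meets $O(m)$ structures in all. Finally I would upgrade $m=O(\log n)$ to $m=O(\log w)$ by passing to the topological tree that suppresses degree-$2$ vertices: it has at most $w$ leaves and all internal vertices of degree $\ge 3$, so its leaf count at least halves each round. Multiplying, the height is $O(m)\cdot O(OPT)=O(\log w)\cdot OPT$.

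For the matching lower bound the plan is to force a single long chain to \emph{melt} across $\Theta(\log w)$ rounds. To the nodes of a path $P$ of length $L$ I would attach a scheduled family of side-subtrees of prescribed contraction depths so that $P$ is never compressed in one shot but instead collapses region by region into $\Theta(\log L)$ \emph{nested} BSTs, each of size $\Theta(L/\log L)$ and height $\Theta(\log L)$; a search descending through all of them pays $\Theta(\log^2 L)$. The side-subtrees are made bushy enough to dictate the schedule yet cheap enough to keep $OPT=\Theta(\log L)$, since an optimal strategy simply binary searches $P$ globally and then resolves the one relevant side-subtree. Choosing parameters so the leaf count gives $\log w=\Theta(\log L)$ then produces height $\Theta(\log^2 w)$ against $OPT=\Theta(\log w)$, a ratio of $\Theta(\log w)$.

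The main obstacle is exactly this construction, and it is subtle for an instructive reason: the contraction is adaptive in a \emph{helpful} direction. The LST rule places longest-surviving—hence deeper—subtrees at the front, so apparent loss from linearly scanning a star tends to cancel, and genuinely long chains are compressed rather than scanned. The $\log w$ gap must therefore be extracted from the one place where $\mathcal{T}$ is strictly less flexible than $OPT$: its components are \emph{local}, so splitting a single chain into many nested BSTs makes their logarithmic heights \emph{add}, whereas $OPT$ binary searches the whole chain once with global edge queries. Arranging the melting schedule to yield a \emph{linear} (depth $\Theta(\log w)$) nesting rather than a balanced one, and verifying the $\Theta(\log^2 w)$ lower bound on $\mathcal{T}$ and the $O(\log w)$ upper bound on $OPT$ simultaneously, is the delicate part of the proof.
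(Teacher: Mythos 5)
Your upper bound and your $O(n)$ construction analysis are, modulo packaging, the paper's own proof. The per-component bounds you derive via minor-monotonicity (a path minor certifying $OPT = \Omega(\log \ell)$ for each BST, a star minor certifying $OPT = \Omega(k)$ for each LST) are the same facts the paper uses through its claim $OPT \geq \max\{\Delta(S), \log n, \log D, \log w\}$, bounding each LST by $\Delta(S)$ queries and each BST by $O(\log D)$; the paper even states the minor intuition before its proof, so making it explicit is harmless. Your ``rounds strictly decrease along any root-to-leaf path, hence $O(m)$ structures'' step is exactly Lemma~\ref{lemma:path}, your leaf-halving argument for $m \leq \log w$ is the paper's (each surviving leaf of $T_{2i+2}$ is an internal node of the full tree $T_{2i+1}$ with at least two leaf neighbors, and these sets are disjoint), and the geometric decay of $|T_i|$ giving $\sum_i |T_i| = O(n)$ matches the time/space analysis.

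The genuine gap is the tightness half, and you flagged it yourself. The theorem asserts $\Theta(\log w)\cdot OPT$, so exhibiting a family on which the \lltree costs $\Omega(\log w)\cdot OPT$ is part of the claim, and your proposal stops at a plan: ``attach a scheduled family of side-subtrees of prescribed contraction depths,'' with the linear nesting and the simultaneous verification of the $\Theta(\log^2 L)$ search cost and $OPT = O(\log L)$ explicitly deferred as ``the delicate part.'' Your melting picture is the right mechanism---it is exactly how the paper's tight example works---but as written nothing pins down the schedule, so the lower bound is unproven precisely where the work lies. For comparison, the paper instantiates it concretely: grow the tree for $k$ iterations, at iteration $k$ giving two children to every free node (so the side-trees are complete binary trees whose depths encode the melt schedule) and appending a $k$-th horizontal base tree of $2^{k-1}$ free nodes; then $N = \Theta(k2^k)$, $w = \Theta(N)$, $OPT = \Theta(k)$ (use $k$ queries to isolate one of the $k+1$ base trees, then binary search it, using the fact that a bounded-degree tree always has an edge splitting off at least a third of the nodes), while a search for the node just beyond the rightmost base tree must descend through all $k$ nested BSTs, paying $\sum_{i=1}^{k}\Theta(\log 2^i) = \Theta(k^2)$. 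Note the paper uses geometrically growing regions $2^0,\dots,2^{k-1}$ rather than your uniform $\Theta(L/\log L)$ blocks; both give $\Theta(\log^2 L)$ arithmetically, but the geometric sizing makes the round-by-round collapse and the linear nesting immediate to verify---which is exactly the verification your uniform schedule still owes.
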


\begin{proof}
We begin with some lower bounds on $OPT$.

\begin{claim}
$OPT \geq \max \{ \Delta(S), \log n, \log D, \log w \}$ where $\Delta(S)$ is the maximum degree of a node in $T_{S}$, $n$ is the size of $S$, $D$ is the diameter of $T_{S}$ and $w$ is the width of $S$.
\end{claim}

\begin{proof}
Let $x$ be a node of highest degree $\Delta(S)$ in $T_{S}$. Then, to find $x$ in the $T_{S}$ we require at least $\Delta(S)$ queries, one for each edge adjacent to $x$~\cite{laber-nogueira:endm2001}. This implies $OPT \geq \Delta(S)$. Also, since querying any edge reduces the problem space left to search by at most a half, we have $OPT \geq \log n$.  Because $n$ is an upper bound on both the width $w$ of $S$ and $D$, the diameter of $T_S$ we obtain
the final two lower bounds. 
\end{proof}

Recall that the width $w$ of $S$ is the number of leaves in $T_S$.  Each round in the contraction process reduces the number of remaining leaves by at least half:  round $i$ starts with a tree $T_{2i}$ on $n_i$ nodes with $w_i$ leaves. A line-contraction produces a tree $T_{2i+1}$, still with $w_i$ leaves.  Because $T_{2i+1}$ is full, the number of nodes neighboring a leaf is at most $w_i/2$.  Round $i$ completes with a leaf contraction that removes all $w_i$ leaves, producing $T_{2i+2}$.  As every leaf in $T_{2i+2}$ corresponds to an internal node of $T_{2i+1}$ adjacent to a leaf,  $T_{2i+2}$ has at most $w_i/2$ leaves.  It follows that the number of rounds is at most $\log w$.  The length of any root-to-leaf path is bounded in terms of the number of rounds.

\begin{lemma} \label{lemma:path}
On any root-to-leaf path in the \lltree there is at most one BST and one LST for each iteration $i$ of the construction algorithm.
\end{lemma}

\begin{proof}
On a root-to-leaf path, the \lltree contains LST and BST data structures in decreasing order of the iteration $i$ since the data structure is built incrementally from the bottom up. Suppose we are currently in $LST(a)$. The search structures immediately accessible from this point (aside from ourselves) are:
\begin{itemize}
\item $LST(b)$ for all queries $(a,b) \in LST(a)$
\item $BST(a, c)$ for all queries $(a,c) \in LST(a)$
\end{itemize}
If $(a,b) \in LST(a)$, then \typeleaf{$b$} and so \round{$b$} $<$ \round{$a$} by construction. If $d$ is a node in $BST(a, c)$, then \round{$d$} $\leq$ \round{$c$} $<$ \round{$a$} since $d$ was line contracted before $c$ was leaf contracted into $LST(a)$.  Now suppose we are currently in $BST(a,b)$. All nodes $c$ contracted into this BST have equal {\sc round} $j$ by construction. The next accessible search structures are:
\begin{itemize}
\item $BST(d,e)$ for all edge queries $(d,e) \in BST(a,b)$
\item $LST(c)$ for each leaf $c$ of $BST(a,b)$ (this $LST$ may consist of only node $c$)
\end{itemize}
If $f$ is a node in $BST(d,e)$, then \round{$f$} $< j$ since $f$ was line contracted before all nodes in $BST(a,b)$ (otherwise, $f$ would be in $BST(a,b)$). If $c$ is a node in $BST(a,b)$ then \round{$c$} = $j$.

Finally, consider a root-to-leaf path. Suppose at some point we are in $LST(a)$ and the next search structure we enter is $LST(b)$. It follows from above arguments that \round{$a$} is strictly smaller than \round{$b$}. Suppose at some point we are in $BST(c,d)$ and the next structure on the path is $BST(e,f)$. Then for all nodes $g$ line contracted into $BST(c,d)$ and all nodes $h$ line contracted into $BST(e,f)$, we have \round{$g$} strictly smaller than \round{$h$} and this concludes our proof. 
\end{proof}

For each LST we perform at most $\Delta(S)$ queries. In each BST we ask at most $O(\log D)$ questions.  By the previous lemma, since we search at most one BST and one LST for each iteration $i$ of the contraction process and since there at most $\log w$ iterations, it follows that the height of the \lltree is bounded above by:
\(
(\Delta(S) + O(\log D)) \log w = O(\log w) \cdot OPT
\).

\begin{figure}[t!]
\centering\includegraphics[scale=0.4]{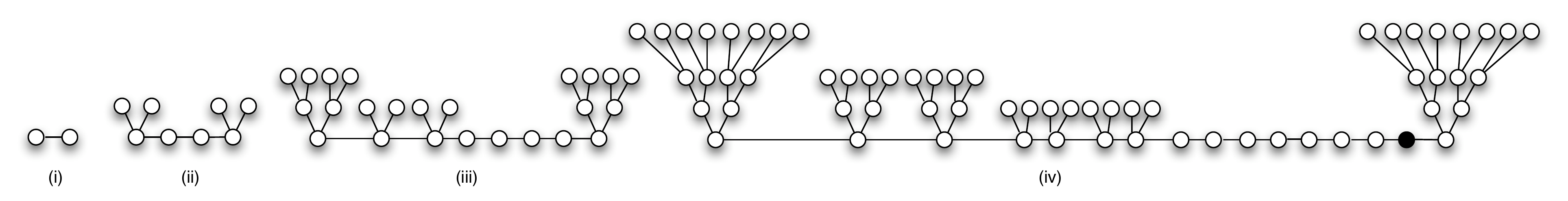}
\caption{ A tight example of our analysis:  starting with a single node (i) we grow the partial order tree (ii)--(iv) by adding nodes both horizontally and vertically.  The contraction process of our \lltree construction algorithm reverses the process that grows the tree. \label{fig:tight-analysis}}
\end{figure}

Now we show that in the worst case, the height of $\mathcal{T}$ is at least $\Omega(\log w) \cdot OPT$. Consider growing a partial-order tree $\mathcal{T}$ both vertically and horizontally according the process depicted in Figure~\ref{fig:tight-analysis}:  call a node {\em free} if it has no edge moving in the vertical direction.  If Figure~\ref{fig:tight-analysis} (i) depicts the tree after iteration 1, Figure~\ref{fig:tight-analysis} (ii) depicts the tree after iteration 2, and so on, then in iteration $k>1$ we add $(k+1)2^{k-1}$ nodes to the tree from iteration $k-1$ according the following rules:
\begin{itemize}
\item add 2 children to each of the $k2^{k-2}$ free nodes in the vertical direction.
\item add $2^{k-1}$ new free nodes just to the left of the rightmost node in the horizontal direction (these new nodes collectively form the $k^{th}$ {\em base tree}).
\end{itemize}
Thus, after $k$ iterations there are $N=\sum_{j=1}^{k} (j+1)2^{j-1} = \Theta(k2^{k})$ nodes.  Since $c \log N -  \log\log N < k < \log N $ for any $k>1$ and constant $c>1$ we have $k$ is $\Theta(\log N)$.  Note also that the width $w$ of $\mathcal{T}$ is $\Theta(N)$.  An optimal search tree for $\mathcal{T}$ uses $k$ edge queries to narrow the search down to one of the $k+1$ base trees and then uses an additional $O(k)$ queries to binary search that base tree.  This binary search is possible because in a tree with constant maximum degree, there is always an edge that cuts the tree into pieces of size at least $\frac{1}{3}n$.  Thus $OPT=\Theta(k)$.  However, the contraction process on $\mathcal{T}$ that inductively defines the \lltree results in a sequence of minors that essentially reverses the process of growing $\mathcal{T}$.  For example, line- and leaf-contracting Figure~\ref{fig:tight-analysis} (iv) yields Figure~\ref{fig:tight-analysis} (iii).  Thus, in the unfortunate case that the node we desire is the node just to the right of the rightmost node on the horizontal line (i.e. the black node in the figure), the \lltree must binary search the horizontal components of each of the $k$ base trees.  In other words, it must perform $\sum_{i=1}^{k} O(\log(2^{i}))=O(k^{2})$ edge queries.  Thus the height of the $\lltree$ is at least within a factor of $\Theta(k)=\Theta(\log N)=\Theta(\log w)$ of the height of the optimal static search tree.

We know prove the time and space bounds.  Consider the line contraction step at iteration $i$:   we traverse $T_{2(i-1)}$, labeling paths of contiguous degree-2 nodes and then traverse the tree again and form balanced BSTs over all the paths.  Since constructing balanced BSTs is a linear time operation, we can perform a complete line contraction step in time proportional to the size of size of $T_{2(i-1)}$.  Now consider the leaf contraction step at iteration $i$:  We add each leaf in $T_{2i-1}$ to the LST corresponding to its remaining neighbor.  This operation is also linear in the size of $T_{2i-1}$.  Since we know the size of $T_{2i}$ is halved after each iteration, starting with $n$ nodes in $T_0$, the total number of operations performed is $\sum_{i=0}^{\log n} O(\frac{n}{2^{i}})=O(n)$.

Given that the construction takes at most $O(n)$ time, the resulting data structure occupies at most $O(n)$ space.
\end{proof}

Theorem~\ref{thm:height} assumes that the Hasse diagram for $S$ is already constructed.  If this is not the case, then one must {\em sort} the partial order, which, for general partial orders requires $\Omega(n(\log n + w))$ comparisons~\cite{daskalakis-etal:soda2009,daskalakis-etal:arxiv2007}.  We are unaware of any work that directly addresses the sorting question for tree-like partial orders.

\section{Operations} \label{sec:operations}

\subsection{Test Membership} \label{sec:test-membership}
To test whether an element $A \in \mathcal{U}$ appears in $\mathcal{T}$, we search for $A$ in $LST(x)$ where $x$ is the root of $\mathcal{T}$.  The search ends when we reach a terminal node.  The only terminal nodes in the \lltree are either leaves representing the elements of $S$ or \nil (which are empty BSTs).  So, if we find $A$ in $\mathcal{T}$ then {\sc test membership} returns {\sc True}, otherwise it returns {\sc False}. Given that {\sc test membership} follows a root-to-leaf path in $\mathcal{T}$, the previous discussion constitutes a proof of the following theorem.

\begin{theorem} \label{theorem:test-membership}
{\sc Test Membership} takes $O(h)$ time.
\end{theorem}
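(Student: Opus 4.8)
The plan is to show that an execution of \testmembership never re-enters a component search structure and in fact traces out a single root-to-leaf path in $\mathcal{T}$, performing $O(1)$ work per node along that path. Granting this, the bound is immediate: a root-to-leaf path has at most $h$ nodes by the definition of the height, so the total work is $O(h)$. This is exactly the ``previous discussion'' the statement alludes to, but I would make the monotone-descent property explicit rather than leave it implicit.

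First I would set up a correspondence between the steps of the search and the nodes of $\mathcal{T}$. Each edge query appearing in an $LST(x)$, and each internal node of a $BST(x,y)$, is a distinct node of $\mathcal{T}$; each actual element of $S$ is a leaf; and each empty BST is a \nil terminal. I would then track the search pointer through the rules in the ``Searching a \lltree'' subsection and verify that every possible answer advances the pointer to a strictly deeper node of $\mathcal{T}$. Inside $LST(x)$, an {\sc x} answer moves to the next serial query (a child node), a \here answer descends into $BST(x,y)$, and a {\sc y} answer descends into $LST(y)$; inside $BST(x,y)$, a left/right branch descends within the tree, a \here answer transitions into the child $BST(a,b)$, and reaching a leaf $x$ transitions into $LST(x)$. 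In each case the depth in $\mathcal{T}$ strictly increases, so the pointer moves monotonically downward and the sequence of visited nodes is a prefix of a single root-to-leaf path; Lemma~\ref{lemma:path} is consistent with this, as it already orders the structures on such a path by decreasing round.

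It remains to charge the cost. Each step either asks one dynamic edge query or advances/branches within a component structure; by the model of Section~\ref{sec:model} a dynamic edge query is simulated by $O(1)$ standard comparisons on a tree-like $S$, and both the serial advance in an LST and the branch in a BST are $O(1)$ RAM operations. Hence the running time is proportional to the number of nodes visited, which is at most the length of the longest root-to-leaf path, namely $h$, giving the $O(h)$ bound. The only delicate point, which I regard as the main obstacle, is verifying the monotone-descent claim exhaustively over all branch types so that no answer can redirect the search into an already-visited or shallower structure; once that case analysis is in hand, the counting argument is routine.
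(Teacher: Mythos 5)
Your proposal is correct and is essentially the paper's own argument made explicit: the paper likewise observes that the search follows a single root-to-leaf path in $\mathcal{T}$ (with elements of $S$ and \nil as the only terminals) and concludes the $O(h)$ bound from the description in the ``Searching a \lltree'' subsection. Your careful verification of monotone descent and the $O(1)$ cost per dynamic edge query simply fills in the details the paper leaves to that preceding discussion.
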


\subsection{Predecessor} \label{sec:test-membership}

Property~\ref{prop:tree-like} guarantees that each node $A \in \mathcal{U}$ has exactly one predecessor in $S$.  Finding the predecessor of $A$ in $S$ is similar to {\sc test membership}.  We search $\mathcal{T}$ until we find either $A$ or \nil.  Traditionally if $A$ appears in a set then it is its own predecessor, so, in the first case we simply return $A$.  In the latter case, $A$ is not in $\mathcal{T}$ and \nil corresponds to an empty binary search tree $BST(y,z)$ for the actual edge $(y,z)$ where, say, $y \prec z$.  We know that $A$ falls between $y$ and $z$ (and potentially between $y$ and some other nodes) so $y$ is the predecessor of $A$.  We return $y$.  Given that {\sc predecessor} also follows a root-to-leaf path in $\mathcal{T}$, the previous discussion yields a proof of the following theorem.

\begin{theorem} \label{theorem:predecessor}
{\sc Predecessor} takes $O(h)$ time.
\end{theorem}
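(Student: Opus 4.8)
The plan is to observe that \textsc{Predecessor}, exactly like \textsc{Test Membership}, never backtracks: it executes a single descent through the \lltree, moving from one component search structure to the next according to the answers of the edge queries, and it terminates the first time it reaches a terminal node (a leaf of $\mathcal{T}$ or an empty BST yielding \nil). Since the search procedure described in the subsection on searching a \lltree always hands control from one component structure to a \emph{child} component structure, the set of structures visited forms a path from the root of $\mathcal{T}$ to a terminal node. Thus the whole computation is charged against a single root-to-leaf path in $\mathcal{T}$, whose length is at most $h$ by definition of the height $h$ of the \lltree.

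First I would make precise that each individual step along this path costs $O(1)$ in the comparison model: every edge query is a dynamic edge query, which by the discussion following Definition~\ref{def:edge-query} can be simulated with $O(1)$ standard comparisons when $H_{S}$ is tree-like, and the RAM bookkeeping to follow the appropriate $\lambda_{1}$/$\lambda_{2}$ pointers into the next structure is also $O(1)$. Next I would note that after the descent halts we do $O(1)$ additional work to report the answer: if the search ends at a leaf $x$ equal to $A$ we return $A$, and if it ends at an empty $BST(y,z)$ corresponding to the actual edge $(y,z)$ with $y \prec z$, we return $y$, using Property~\ref{prop:pred} to guarantee that this predecessor is unique and well defined. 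Summing $O(1)$ over a path of length at most $h$ gives the $O(h)$ bound.

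The only subtlety—and the step I expect to require the most care—is justifying correctness of the reported predecessor, namely that the node $y$ we return is genuinely the $\preceq$-predecessor of $A$ in $S$. Here I would argue that reaching the empty $BST(y,z)$ means the descent answered \here to every query along the way that separated $A$ from the surrounding elements, so $A$ lies strictly between $y$ and $z$ on the unique path in $T_{S'}$ with $S' = S \cup \{A\}$; since $(y,z)$ is an actual Hasse edge with $y \prec z$ and there is no element of $S$ strictly between them, $y$ is exactly the largest element of $S$ below $A$. For the running time this correctness claim is inessential, but it is what makes the returned value meaningful, so I would state it carefully rather than fold it into the timing argument. With correctness in hand, the timing bound follows immediately from the single-path structure of the search, and the proof reduces to the two sentences already given in the excerpt preceding the theorem.
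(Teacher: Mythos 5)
Your proposal matches the paper's own argument: the paper likewise proves the bound by observing that \textsc{Predecessor} follows a single root-to-leaf path in $\mathcal{T}$, returning $A$ itself if found, and otherwise returning $y$ from the empty $BST(y,z)$ at which the search terminates, since $A$ falls between the actual edge $(y,z)$ with $y \prec z$. Your added care about the $O(1)$ cost of each dynamic edge query and the uniqueness of the predecessor (via Property~\ref{prop:pred}) only makes explicit what the paper leaves implicit, so the two proofs are essentially identical.
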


\subsection{Insert} \label{sec:insert}

Let $A$ be the node we wish to insert in $\mathcal{T}$ and let $S'=S \cup \{A\}$.  Our goal is transform $\mathcal{T}$ into $\mathcal{T'}$ where $\mathcal{T'}$ is the \lltree produced through the contraction process when started on $T_{S'}$.  We divide {\sc insert} into three corrective steps:  {\em local correction}, {\em down correction}, and {\em up correction} which we describe below.  Local correction repairs the contraction process for elements that appear near $A$ during the contraction process.   Down correction repairs $\mathcal{T}$ for nodes with round at most $\round{A}$.  Up correction repairs $\mathcal{T}$ for nodes with round at least $\round{A}$.

We begin with some notation.  Let $X$ be a node such that $LST(X)$ has $t$ edge queries $(X,Y_{1}) \ldots (X,Y_{t})$ sorted in descending order by \round{$Y_{i}$}.  In other words, $Y_{1}$ is the last node leaf-contracted into $LST(X)$, $Y_{t}$ is the first node leaf-contracted into $X$ and $Y_{i}$ is the $(t-i+1)^{th}$ node contracted into $LST(X)$.  Define $\rho_{i}(X)=Y_{i}$ and $\mu_{i}(X)=\round{Y_{i}}$.  That is, $\rho_{i}(X)$ yields the $(t-i+1)^{th}$ node contracted into $LST(X)$ and $\mu_{i}(X)$ yields the {\sc round} of $\rho_{i}(X)$. If $i>t$ then $\mu_{i}(X)=0$.

The following lemma relates the {\em type} of a node to the {\em rounds} of the nodes contracted into it.

\begin{lemma} \label{lemma:value-lemma}
Let $B$ be a node in a \lltree such that \roundis{$B$}{$k$}.
\begin{enumerate}
\item If \parentis{$B$}{{\sc null}} then either $\mu_1(B) = k-1 > k-2 = \mu_2(B) = \mu_3(B) \geq \mu_4(B)$ or $\mu_1(B) = \mu_2(B) = \mu_3(B) = k-1 \geq \mu_4(B)$.
\item If $\typeleaf{B}$ then $\mu_1(B) = \mu_2(B) = k-1 \geq \mu_3(B)$.
\item If $\typeline{B}$ then $\mu_1(B) = k-1 \geq \mu_2(B)$.
\end{enumerate}
\end{lemma}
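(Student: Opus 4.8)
The plan is to recast the entire statement in terms of the \emph{degree history} of $B$ through the contraction process, since the sorted sequence $\mu_1(B) \ge \mu_2(B) \ge \cdots$ is completely determined by how many leaves are absorbed into $LST(B)$ in each round. Concretely, for $i \ge 1$ let $d_i$ be the degree of $B$ in $T_{2(i-1)}$ (its degree entering round $i$) and let $\ell_i$ be the number of degree-$1$ neighbors of $B$ that are leaf-contracted into $LST(B)$ during round $i$. By construction every node contracted into $LST(B)$ in round $i$ receives \round\ equal to $i$, so $\ell_i$ is exactly the number of indices $j$ with $\mu_j(B)=i$; reading the $\mu_j(B)$ off of the $\ell_i$ is then immediate, and the whole proof reduces to bounding the $\ell_i$ for $i$ near $k := \round{B}$.

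Two structural facts drive this, and I would establish them first. First, a line contraction replaces each maximal degree-$2$ path by a single edge between its (non-degree-$2$) endpoints, so it \emph{preserves the degree of every node that survives it}; hence the only operation that changes $B$'s degree within a round is the leaf contraction, giving the recurrence $d_{i+1} = d_i - \ell_i$. Second, $B$ survives round $i$ precisely when it is neither line- nor leaf-contracted, i.e. when $d_i \notin \{1,2\}$; since a surviving node has degree at least $3$, we get $d_i \ge 3$ for every $i < k$. I would dispose of the boundary case $k=1$ at once: then $LST(B)$ is empty, every $\mu_j(B)=0=k-1$, and all three statements hold by the convention $\mu_j=0$ for $j>t$.

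For the two non-root cases I then evaluate the recurrence at the final round. A leaf-contracted node has $d_k=1$, so $\ell_{k-1}=d_{k-1}-1\ge 2$; a line-contracted node has $d_k=2$, so $\ell_{k-1}=d_{k-1}-2\ge 1$. In both cases $B$ absorbs no leaf in its own contraction round $k$ (a line-contracted $B$ is removed in round $k$'s line step, before any leaf step, and a leaf-contracted $B$ is itself being absorbed), so no $\mu_j(B)$ equals $k$ and the largest round present is exactly $k-1$. Thus $\ell_{k-1}\ge 2$ forces $\mu_1(B)=\mu_2(B)=k-1$ (statement 2) and $\ell_{k-1}\ge 1$ forces $\mu_1(B)=k-1$ (statement 3), with the trailing inequalities following from monotonicity of the $\mu_j$.

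The root case is where the real work lies, and it is the step I expect to be the main obstacle. Here $B$ is never contracted, and I would adopt the convention that its round is one more than the last iteration $m$, so $k=m+1$ and the final leaf contraction sits in round $k-1$. The key observation is that $T_{2m-1}$ is the output of a line contraction and therefore has \emph{no} degree-$2$ vertex; since round $m$'s leaf contraction collapses it to the single node $B$, the tree $T_{2m-1}$ must be a star centered at $B$ with $p := \deg_{T_{2m-1}}(B)$ leaves, and $p \ne 2$ forces $p=1$ or $p\ge 3$. When $p\ge 3$, three nodes of round $k-1$ enter $LST(B)$, giving $\mu_1=\mu_2=\mu_3=k-1$ (sub-case (b)). When $p=1$ the final configuration is a single edge, so $B$ enters round $m$ already as a leaf ($d_m=1$); running the same recurrence one step earlier gives $\ell_{m-1}=d_{m-1}-1\ge 2$, so exactly one node of round $k-1$ and at least two of round $k-2$ enter $LST(B)$, yielding $\mu_1=k-1>\mu_2=\mu_3=k-2$ (sub-case (a)). The delicate points, all of which I would treat carefully, are justifying that the final tree is forced into one of these two shapes (this rests entirely on the ``no degree-$2$ after line contraction'' fact), handling the degenerate edge tie-break so that the non-chosen endpoint still satisfies statement 2, and aligning the root's round convention so the last contraction genuinely lands in round $k-1$.
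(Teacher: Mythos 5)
Your proposal is correct and takes essentially the same approach as the paper's proof: the paper likewise tracks $B$'s degree through the contraction process, arguing that $B$ is a full (degree-$\geq 3$) node at iteration $k-1$ and has degree $1$ or $2$ at iteration $k$ (or, for the root, is either the center of a final star with at least three leaves or the arbitrarily chosen endpoint of a final single edge), with the degree deficit accounted for exactly by leaves absorbed into $LST(B)$ at round $k-1$. Your recurrence $d_{i+1} = d_i - \ell_i$, the ``no degree-$2$ vertices after a line contraction'' observation, and the star/single-edge split for the root are a more explicit formalization of the paper's terse three-case argument, including the tie-break the paper dispatches with ``we arbitrarily made it root.''
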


\begin{proof}
The proof follows from the contraction process. If~\parentis{$B$}{{\sc null}} then $B$ is a full node at iteration $k-1$ and is the sole remaining node at iteration $k$, or $B$ has degree 1 at iteration $k-1$ and we arbitrarily made it root.  If $\typeleaf{B}$, then $B$ was not contracted at iteration $k-1$, it was a full node. But $B$ is leaf contracted at iteration $k$, thus it has degree 1. Therefore, at least two nodes were leaf contracted into $LST(B)$ at iteration $k-1$.  If $\typeline{B}$, then at iteration $k-1$, $B$ was a full node. But $B$ is line contracted at iteration $k$, thus it has degree 2. Therefore, at least one node was leaf contracted into $LST(B)$ at iteration $k-1$.
\end{proof}

\subsubsection{Local Correction} \label{sec:localcorrection}

We start by finding the predecessor of $A$ in $T_{S}$.  Call this node $B$.  We refer to $B$ as the {\em insertion point}.  $A$ potentially falls between $B$ and any number of its $children(B)$.  That is, $A$ may replace $B$ as the parent of a set of nodes $D \subseteq children(B)$.  We emphasize that the {\em parent} and {\em child} relationship here is over $H_{S}$ and not the \lltree $\mathcal{T}$.  We use $D$ to identify two other sets of nodes $C$ and $L$.  The set $C$ represents nodes that, in $T_{S}$, were leaf-contracted into $B$ in the direction of some edge $(B,D_{j})$ where $D_{j} \in D$.   The set $L$ represents nodes that were involved in the contraction process of $B$ itself.  Depending on {\sc type}$(B)$ the composition of $L$ falls into one of the following two cases:
\begin{enumerate}
\item if \typeline{$B$} then let \parentis{$B$}{$(E,F)$}.  Let $D_{E}$ and $D_{F}$ be the two neighbors of $B$ on the path from $E$ to $F$.  If $D_{E}$ and $D_{F}$ are in $D$ then $L=\{E, F\}$.  If only $D_{E}$ is in $D$, then $L=\{E\}$.  If only $D_{F}$ is in $D$, then $L=\{F\}$. Otherwise, $L=\emptyset$.
\item If \typeleaf{$B$} then let \parentis{$B$}{$E$}.   Let $D_{E}$ be the neighbor of $B$ on the path $B \ldots E$.  Let $L=\{E\}$ if $D_{E}$ is in $D$  and let $L=\emptyset$ otherwise.
\end{enumerate}
We call nodes appearing in either $C$ or $L$ {\em stolen} nodes.

\begin{lemma} \label{lemma:identify}
Identifying $B$, $L$ and $C$ takes at most $O(h)$ time.
\end{lemma}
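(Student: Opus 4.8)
The plan is to bound the running time of each identification step separately and sum. Since $D$ is computed as a prerequisite for both $C$ and $L$, I would account for it explicitly even though the statement names only $B$, $L$, and $C$. Two facts drive everything. First, $\deg_{T_S}(B) \le \Delta(S) \le OPT \le h$, combining the lower bound $OPT \ge \Delta(S)$ established in the Claim above with the trivial $h \ge OPT$. Second, $|LST(B)| \le h$: answering {\sc x} to every edge query of $LST(B)$ in turn walks down to its terminal node $B$, and this descent is a contiguous segment of a root-to-leaf path in $\mathcal{T}$, so the number of edge queries in $LST(B)$ cannot exceed $h$. It is tempting to bound $|LST(B)|$ by $\Delta(S)$, but $LST(B)$ accumulates leaves across several rounds, so this more careful observation is what is needed.

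First, $B$ is by definition the predecessor of $A$ in $T_S$, which by Theorem~\ref{theorem:predecessor} is located in $O(h)$ time along a single root-to-leaf path. Next, I would compute $D$ by reading the children of $B$ off the auxiliary DAG for $H_S$ and issuing, for each child $D_j$, the dynamic edge query $(B,D_j)$ with respect to $A$; I place $D_j \in D$ exactly when the answer is \here, since by Definition~\ref{def:edge-query} this is precisely the condition that $A$ lies strictly between $B$ and $D_j$ on that edge. There are at most $\deg_{T_S}(B) = O(h)$ such queries, each answerable in $O(1)$ comparisons, so this costs $O(h)$. I would mark the edges $(B,D_j)$ with $D_j \in D$ so that membership can be tested in constant time afterward.

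To produce $C$, I would traverse the edge-query sequence of $LST(B)$; by the implementation invariants, each query $(B,Z)$ carries the pointer $\lambda_1(B,Z)$ to the actual first edge $(B,a)$ on the $B$-to-$Z$ path, so $Z$ is placed in $C$ iff that edge is one of the marked $(B,D_j)$. The traversal touches $O(|LST(B)|) = O(h)$ queries, each processed in $O(1)$ time. Finally, $L$ is read off from \type{$B$} and \parent{$B$}; computing the parent takes time proportional to the height of the relevant component structure, hence $O(h)$. If \typeline{$B$} with \parentis{$B$}{$(E,F)$}, the two candidate neighbors $D_E,D_F$ of $B$ on the contracted path are recovered from $H_S$ in $O(1)$ and tested against the marks; the case \typeleaf{$B$} is identical with a single neighbor $D_E$. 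Summing the four steps yields $O(h)$.

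The main obstacle is not any single calculation but pinning down the length bound on $LST(B)$, which is where the $O(h)$ (rather than a naive $O(\Delta(S))$ or unbounded) estimate comes from. A secondary point, which is about correctness rather than timing, is verifying that the $\lambda_1$ pointers genuinely identify the first edge of each contracted path so that $C$ is computed exactly; this follows from the pointer invariants recorded in the implementation details, and I would simply cite those invariants rather than re-derive them.
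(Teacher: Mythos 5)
Your proposal is correct and takes essentially the same route as the paper's proof: locate $B$ by a root-to-leaf search in $O(h)$, form $D$ from the successors of $B$ in $H_S$, determine $L$ via a single {\sc parent} operation plus the $\lambda$-pointers of the bookending edge queries, and form $C$ by traversing $LST(B)$ and using $\lambda_1(B,Y)$ to read off each neighbor of $B$, with each step bounded by $O(h)$. The only difference is that you make explicit two bounds the paper leaves implicit---$\deg_{T_S}(B) \leq \Delta(S) \leq OPT \leq h$ for the cost of forming $D$, and the root-to-leaf argument showing $LST(B)$ has at most $h$ queries---which is a welcome tightening rather than a departure.
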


\begin{proof}
By Theorem~\ref{theorem:test-membership} we can identify $B$ in $O(h)$ time.  Now we can use $H_{S}$ to identify the successors of $B$  which we can use to form $D$. Using a single {\em parent} operation (which is clearly bounded above by $h$), we can find either $LST(E)$ where \parentis{$B$}{$E$} or $BST(E,F)$ where \parentis{$B$}{$(E,F)$}.  We can  use the pointers offered by, in the first case, the dynamic edge query $(B,E)$ to identify $D_{E}$ and, in the second case, the dynamic edge queries $(B,E)$ and $(B,F)$ to identify $D_{E}$ and $D_{F}$.  With these nodes in hand, we can easily form $L$ by checking, in constant time, if, in the first case, $D_{E}$ is in $D$ and, in the second case, if $D_{E}$ and $D_{F}$ are in $D$.   Now we analyze the formation of the set $C$.  For each edge $(B,Y)$ in $LST(B)$, we use $\lambda_{1}(B,Y)=(B,Z)$ to identify the neighbor $Z$ of $B$ along the path $B \ldots Y$.  If $Z \in D$ then add $Z$ to $C$.  Since the height of $LST(B)$ is bounded above by $h$, we have the desired result.

\end{proof}

If $C$ and $L$ are both empty, then $A$ appears as a leaf in $T_{S'}$ and $\roundis{A}{1}$.  In this case, we only need to correct $\mathcal{T}$ upward since the addition of $A$ does not affect nodes contracted in earlier rounds, so we call \upc with $A$ and $B$.  However, if $C$ or $L$ is non-empty, then $A$ is an interior node in $T_{S'}$ and $A$ essentially acts as $B$ to the stolen nodes in $C$.  Thus, for every edge query $(B,C_{i})$ where $C_{i} \in C$, we remove $(B,C_{i})$ from $LST(B)$ and insert it into $LST(A)$.  In addition, we create a new edge $(B,A)$ and add it to $H_{S}$ which yields $H_{S'}$.

\begin{lemma} \label{lemma:steal}
The edge query removals from $LST(B)$ and their insertion into $LST(A)$ collectively take time proportional to the height of $LST(B)$.
\end{lemma}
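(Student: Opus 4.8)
The plan is to realize each linear search tree as a doubly-linked list of edge queries, so that the height of $LST(B)$ is exactly the number of queries it stores and every individual splice or append costs $O(1)$. Under this representation the lemma reduces to bounding the number of constant-time list operations we perform.

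First I would record the targets cheaply. Each $C_i \in C$ was leaf-contracted into $B$ and hence contributes exactly one edge query $(B,C_i)$ to $LST(B)$, with distinct members of $C$ giving distinct queries; thus $|C|$ is at most the height of $LST(B)$. Crucially, the formation of $C$ in Lemma~\ref{lemma:identify} already walks once down $LST(B)$, and during that same walk I would cache a pointer to each stolen query in the order it is met. Since the queries of $LST(B)$ are kept in descending round order, these cached pointers are produced in descending round order as well.

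Next I would carry out the transfer. Processing the cached pointers in order, I splice each stolen query out of $LST(B)$---an $O(1)$ operation in a doubly-linked list once a pointer to the node is in hand---and append it at the tail of the (initially singleton) list $LST(A)$, keeping one running tail pointer so that each append is also $O(1)$. Relabeling the owner of a transferred query from $B$ to $A$ does not change the bound: by the $\lambda_1,\lambda_2$ indirection of the implementation, a single relabeling of the relevant actual edge updates all queries that reference it at once. Hence the whole transfer uses $O(|C|)$ constant-time steps, which is $O(\text{height of }LST(B))$.

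The point that needs the most care is order maintenance rather than the pointer surgery: I must guarantee that $LST(A)$ ends up listing its queries in descending round order---the invariant relied upon by the construction and by Lemma~\ref{lemma:value-lemma}---without incurring any searching or re-sorting cost. What makes this free is that the stolen queries already appear, in their original relative order, as a subsequence of the round-sorted list $LST(B)$; extracting a subsequence preserves relative order, so appending the queries one at a time to a fresh $LST(A)$ in the order encountered reproduces exactly the required descending round order.
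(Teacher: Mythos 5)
Your proof is correct and follows essentially the same route as the paper's: a single traversal of the linked-list $LST(B)$ with $O(1)$ splices into $LST(A)$, and constant-time renaming of each stolen query via the $\lambda_{1}$ pointer to the actual edge. Your explicit argument that the stolen queries, being a subsequence of the round-sorted $LST(B)$, land in $LST(A)$ already in descending round order is a detail the paper leaves implicit, but it is the same underlying mechanism.
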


\begin{proof}
We can traverse $LST(B)$, remove the edge queries involving nodes in  $C$, and insert them in $LST(A)$ in time proportional to the height of $LST(B)$ since LSTs are just linked lists.  For each stolen edge query $(B,C_{i})$ we need only replace $B$ with $A$ in the actual edge $\lambda_{1}(B,C_{i})=(B,X)$ so that it becomes $(A,X)$.  These pointer updates are bounded above by the height of $LST(B)$, so the lemma follows.

\end{proof}

\noindent Lemma~\ref{lemma:identify} and Lemma~\ref{lemma:steal} imply the following corollary.
\begin{corollary} \label{cor:pre}
Local correction takes $O(h)$ time.
\end{corollary}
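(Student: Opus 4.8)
The plan is to observe that \emph{local correction} is nothing more than the sequential composition of the operations whose costs are already bounded by the two preceding lemmas, together with a constant amount of additional bookkeeping, and then simply to add these costs. First I would enumerate the steps that local correction actually performs: (i) locate the insertion point $B$, the predecessor of $A$ in $T_{S}$; (ii) form the sets $D$, $C$, and $L$ of stolen nodes; (iii) when $C$ or $L$ is nonempty, move every edge query $(B,C_{i})$ with $C_{i}\in C$ out of $LST(B)$ and into $LST(A)$; and (iv) create the new edge $(B,A)$ and splice it into the DAG for $H_{S}$ to obtain $H_{S'}$.

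Steps (i) and (ii) are precisely what Lemma~\ref{lemma:identify} bounds, so together they cost $O(h)$. Step (iii) is exactly the transfer analyzed in Lemma~\ref{lemma:steal}, whose cost is proportional to the height of $LST(B)$. Step (iv) inserts a single edge into $H_{S}$ and is clearly $O(1)$. The one remaining point is to relate the height of $LST(B)$ to $h$: since $LST(B)$ is a component search structure lying on some root-to-leaf path of $\mathcal{T}$, its height is at most the height $h$ of the entire \lltree, so the cost of step (iii) is also $O(h)$. Summing yields $O(h)+O(h)+O(1)=O(h)$, which is the claim.

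I do not anticipate any real obstacle here; the corollary is a straightforward aggregation of the two lemmas, and the only substantive observation is that the height of the local structure $LST(B)$ is dominated by the global height $h$. The one point requiring a little care is to check that the bound holds uniformly across both branches of the procedure: when $C$ and $L$ are both empty, step (iii) is skipped entirely and control passes to \upc, so local correction performs only steps (i), (ii), and possibly (iv), all within $O(h)$; when $C$ or $L$ is nonempty all four steps run, and the analysis above applies. Either way the total is $O(h)$.
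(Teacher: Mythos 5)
Your proof is correct and matches the paper's argument, which derives the corollary directly by combining Lemma~\ref{lemma:identify} (identifying $B$, $L$, and $C$ in $O(h)$ time) with Lemma~\ref{lemma:steal} (the edge-query transfer in time proportional to the height of $LST(B)$, which is at most $h$). Your version merely makes explicit the step enumeration and the $O(1)$ cost of adding the edge $(B,A)$ to $H_{S}$, details the paper leaves implicit.
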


\begin{table}[tbh]
\begin{center}
\caption{The \transition function which serves as a helper to {\sc insert}.  \label{tab:transition}}
\scriptsize
\renewcommand{\tabcolsep}{2mm}
\renewcommand{\arraystretch}{1.4}
\begin{tabular}{|c|l|l|}
\hline
\multirow{7}{*}{\begin{sideways} \transition($P,Q$) \end{sideways}}  & \multicolumn{1}{c|}{{\bf Updated Properties}} & \multicolumn{1}{c|}{{\bf Data Structure Updates}} \\ \cline{2-3} \cline{2-3}
& {\sc round}$(P) = \mu_2(P) + 1$ & \\ \cline{2-3}
& \multicolumn{2}{c|}{{\bf if} $\mu_1(P) = \mu_2(P)$ {\bf then} {\bf Up Correct} $P$ at insertion point $Q$} \\ \cline{2-3}
&  \multicolumn{2}{c|}{{\bf else} let $M = \rho_1(P)$} \\ \cline{2-3}
& & remove edge $(P,M)$ from $LST(P)$ \\
& \parent{$M$} $\gets$ $Q$ & $BST(Q,M) \gets$ \downc $BST(Q,P)$ and $BST(P,M)$\\
& & create edge $(Q,M)$ from $BST(Q,M)$ and insert it into $LST(Q)$ \\ \hline
\end{tabular}
\end{center}
\end{table}

\begin{table}
\begin{center}
\caption{The {\sc Insert} algorithm:  after locally correcting $A$ around its predecessor $B$, {\sc insert} uses \transition to either \downc or \upc.  Below, $k$ denotes {\sc round}$(B)$ before insertion and $BST(A,B)$ is an empty BST for the new edge $(A,B)$.
\label{tab:insert}}
\Tiny
\renewcommand{\tabcolsep}{2mm}
\renewcommand{\arraystretch}{1.4}
\begin{tabular}{| l| l|}
\hline

\multicolumn{2}{|l|}{{\sc insert}($A,B$)} \\ \hline \hline

\multicolumn{2}{|l|}{Apply Local Correction which yields candidate versions of $LST(A)$ and $LST(B)$ and a new edge $(A,B)$.} \\ \hline \hline

\multicolumn{1}{|l|}{{\bf Updated Properties}} & \multicolumn{1}{c|}{{\bf Data Structure Updates}} \\ \hline \hline

\multicolumn{2}{|l|}{{\bf Case 1}: \typeleaf{$B$} and \parentis{$B$}{\sc null}} \\ \hline \hline

\multicolumn{2}{|c|}{{\bf if} (1) $\mu_1(B) = \mu_2(B) = k-1$, or (2) $\mu_1(B) = k-1$ and $\mu_2(A) < k-2$, or} \\
\multicolumn{2}{|c|}{(3) $\mu_1(B) = \mu_2(B) = k-2$ and $\mu_2(A) < k-1$ {\bf then}} \\ \hline
{\bf if} $\mu_1(A) = \mu_2(A) = \mu_1(B) = \mu_2(B)$ {\bf then} & \multirow{2}{*}{\transition$(A,B)$} \\
{\sc round}$(B)$ $\gets$ {\sc round}$(B)+1$ & \\ \hline
\multicolumn{2}{|c|}{{\bf else}} \\ \hline
{\sc round}$(A)$ $\gets$ $\mu_1(A) + 1$ & \multirow{2}{*}{$A$ becomes new root of the \lltree} \\
\type{$A$} $\gets$ \leaf & \multirow{2}{*}{\transition$(B,A)$} \\
\parent{$A$} $\gets$ {\sc null} & \\ \hline \hline

\multicolumn{2}{|l|}{{\bf Case 2}: \typeleaf{$B$}, \parentis{$B$}{$E$}, and $L=\emptyset$} \\ \hline \hline

\multicolumn{2}{|c|}{{\bf if} $\mu_1(A) = \mu_2(A) = k-1$ {\bf then}} \\ \hline
{\sc round}$(A)$ $\gets$ $\mu_1(A) + 1$ & remove edge $(B,E)$ from $LST(E)$ \\
\type{$A$} $\gets$ \leaf & $BST(A,E) \gets $ \downc $BST(A,B)$ and $BST(B,E)$\\
\parent{$A$} $\gets$ $E$ & insert edge $(A,E)$ into $LST(E)$ \\ \hline
\multicolumn{2}{|c|}{{\bf else} \transition$(A,B)$} \\ \hline \hline

\multicolumn{2}{|l|}{{\bf Case 3}: \typeleaf{$B$}, \parentis{$B$}{$E$}, and $L=\{E\}$} \\ \hline \hline

{\sc round}$(A)$ $\gets$ $\mu_1(A) + 1$ & replace $B$ with $A$ in $BST(B,E)$ which becomes $BST(A,E)$ \\ \hline
\multicolumn{2}{|c|}{{\bf if} $\mu_2(B) < k-1$ {\bf then}} \\ \hline
\multirow{2}{*}{\type{$A$} $\gets$ \leaf} & remove edge $(B,E)$ from $LST(E)$\\
\multirow{2}{*}{\parent{$A$} $\gets$ $E$} & create edge $(A,E)$ and insert into $LST(E)$  \\
& \transition$(B,A)$ \\ \hline
\multicolumn{2}{|c|}{{\bf else} $BST(B,E) \gets $ \downc $BST(B,A)$ and $BST(A,E)$} \\
\hline \hline

\multicolumn{2}{|l|}{{\bf Cases 4-5}: \typeline{$B$} and \parentis{$B$}{$(E,F)$}} \\ \hline \hline

\multicolumn{2}{|c|}{Let $(N,B)$, $(B,M)$ be edges adjacent to $B$ in $BST(E,F)$ } \\
\multicolumn{2}{|c|}{in the directions of $E$ and $F$ respectively.  W.l.o.g. $A$ falls between $(B,M)$} \\ \hline \hline

\multicolumn{2}{|l|}{{\bf Case 4}: $L=\{E\}$ or $L=\{F\}$} \\ \hline \hline

{\sc round}$(A)$ $\gets$ $\mu_1(A) + 1$ & \multirow{2}{*}{remove edge $(B,M)$ from $BST(E,F)$} \\
{\sc round}$(B)$ $\gets$ $\mu_1(B) + 1$ & \multirow{2}{*}{replace $B$ with $A$ in $BST(B,M)$ which becomes $BST(A,M)$} \\
\type{$A$} $\gets$ {\sc line} & \\ \hline
\multicolumn{2}{|c|}{{\bf if} $\mu_1(B) > \mu_1(A)$ {\bf then}} \\ \hline
&  $BST(B,M) \gets $ \downc $BST(B,A)$ and $BST(A,M)$ \\
&  insert edge $(B,M)$ (with $BST(B,M)$) back in $BST(E,F)$  \\ \hline
\multicolumn{2}{|c|}{{\bf else if} $\mu_1(B) = \mu_1(A)$ {\bf then}} \\ \hline
\parent{$A$} $\gets$ $(E,F)$ & insert edges $(B,A)$ and $(A,M)$ into $BST(E,F)$ \\ \hline
\multicolumn{2}{|c|}{{\bf else} $\mu_1(B) < \mu_1(A)$} \\ \hline
\multirow{4}{*}{\parent{$A$} $\gets$ $(E,F)$}  & insert edge $(A,M)$ into $BST(E,F)$ \\
& remove edge $(N,B)$ from $BST(E,F)$ \\
& $BST(N,A) \gets$ \downc  $BST(N,B)$ and $BST(B,A)$ \\
& create edge $(N,A)$ (with $BST(N,A)$) and insert into $BST(E,F)$ \\ \hline \hline

\multicolumn{2}{|l|}{{\bf Case 5}: $L=\{E,F\}$ or $L = \emptyset$} \\ \hline \hline

\multicolumn{2}{|c|}{{\bf if} $L = \{ E, F \}$ {\bf then}} \\ \hline
\type{$A$} $\gets$ {\sc line} & \multirow{2}{*}{replace $B$ with $A$ in $BST(E,F)$} \\
{\sc round}$(A)$ $\gets$ {\sc round}$(B)$ & \multirow{2}{*}{\transition$(B,A)$} \\
\parent{$A$} $\gets$ $(E,F)$ & \\  \hline
\multicolumn{2}{|c|}{{\bf else} $L =  \emptyset $: \transition$(A,B)$} \\

\hline
\end{tabular}
\end{center}
\end{table}

Local correction leaves us with candidate versions of $LST(A)$ and $LST(B)$ as well as a set of nodes $L$.  The edges in $LST(A)$ and $LST(B)$ remain in their respective LSTs with one small exception:  Stealing edge queries from $B$ and inserting them into $A$ may cause one of $A$ or $B$ to no longer adhere to Lemma~\ref{lemma:value-lemma} and we may need to continue correcting the \lltree upward or downward.

The {\sc insert} procedure uses a helper function, \transition, to identify these situations and transition into either \downc or \upc:  given two nodes $P$ and $Q$ where $\round{P} \leq \round{Q}$, it determines if $P$ was line contracted between $\rho_{1}(P)$ and $Q$ at some earlier round.  If this is the case, then the contraction process has been repaired except for node $P$ which may be out of place on the line from $\rho_{1}(P)$ to $Q$ so it calls \downc on $(\rho_{1}(P),P)$ and $(P, Q)$ to finish the contraction process.  Otherwise, we have patched the contraction process for all rounds up to $\round{P}$ so we \upc $P$ and $Q$ to complete the repair.  A formal description of \transition appears in Table~\ref{tab:transition} and a formal description of {\sc insert} appears in Table~\ref{tab:insert}.

\begin{table}[tb]
\begin{center}
\footnotesize
\renewcommand{\tabcolsep}{2mm}
\renewcommand{\arraystretch}{1.4}
\begin{tabular}{| c| c| c| c|}
\hline
{\bf Case} & \type{$B$} & $\parent{B}$ & {\bf L} \\
\hline
\hline
1 & 	\multirow{3}{*}{\sc leaf} &	\multicolumn{1}{c|}{\sc null} &	\multirow{2}{*}{$L=\emptyset$} \\ \cline{1-1} \cline{3-3}
2 &	&							\multirow{2}{*}{$E$}  &	 \\ \cline{1-1} \cline{4-4}
3 &	&							&									$L=\{E\}$\\ \hline
4 &	\multirow{2}{*}{\sc line} &	\multirow{2}{*}{$(E,F)$} &		$L=\{E\}$ or $L=\{F\}$		\\ \cline{1-1} \cline{4-4}
5 &	&							&									$L = \{E, F\}$ or $L = \emptyset$ \\
\hline
\end{tabular}
\vspace{2mm}
\caption{An exhaustive list of cases for {\sc insert}. \label{tbl:center-correction}}
\end{center}
\end{table}

\begin{theorem} \label{thm:insert}
{\sc Insert} takes $O(h)$ time.
\end{theorem}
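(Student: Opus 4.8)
The plan is to charge the running time of {\sc Insert} round by round and show the total is bounded by the height $h$ of $\mathcal{T}$. {\sc Insert} consists of exactly three phases: local correction, and then---dispatched through \transition according to the case analysis of Table~\ref{tab:insert} and summarized in Table~\ref{tbl:center-correction}---either a \downc cascade or an \upc cascade. Corollary~\ref{cor:pre} already gives that local correction costs $O(h)$, so it suffices to bound the cost of every \transition call together with the \downc and \upc work it triggers. I would argue that each of these touches any fixed round at most a constant number of times and does work there proportional to the height of that round's component structure.

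The key structural fact I would invoke is Lemma~\ref{lemma:path}: along any root-to-leaf path of $\mathcal{T}$ there is at most one BST and one LST per round, and there are at most $\log w$ rounds. Writing $h_i^{\mathrm{BST}}$ and $h_i^{\mathrm{LST}}$ for the heights of the (unique) round-$i$ component structures along the path the correction follows, we get $\sum_i (h_i^{\mathrm{BST}} + h_i^{\mathrm{LST}}) \le h$, since that sum is exactly the number of edge queries on a single root-to-leaf path and $h$ is the maximum such count. The correction phases traverse precisely such a path: \downc repairs nodes of round at most $\round{A}$ and moves toward the leaves (strictly decreasing round), while \upc repairs nodes of round at least $\round{A}$ and moves toward the root (strictly increasing round), so the two phases are disjoint in the rounds they process except possibly at $\round{A}$.

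Next I would bound the per-round work. Reading \transition from Table~\ref{tab:transition}, the only nonconstant operations are: inspecting $\mu_1(P)$ and $\mu_2(P)$ at the front of $LST(P)$; splicing the edge $(P,M)$ out of $LST(P)$ and inserting $(Q,M)$ into $LST(Q)$, each a linked-list operation costing $O(h_i^{\mathrm{LST}})$; and a single \downc that merges $BST(Q,P)$ with $BST(P,M)$, which concatenates two balanced search trees over adjacent contiguous sub-paths and so costs $O(h_i^{\mathrm{BST}})$. The data-structure updates in every row of Table~\ref{tab:insert}---removing and reinserting one edge into a BST or LST, replacing $B$ by $A$ inside a BST, and a bounded number of \downc merges---are each similarly bounded by the height of the single BST or single LST of the active round. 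Hence the work performed while processing round $i$ is $O(h_i^{\mathrm{BST}} + h_i^{\mathrm{LST}})$.

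The hard part will be establishing termination and the ``each round visited $O(1)$ times'' property, namely that the recursion defined by \transition, \downc, and \upc makes monotone progress in the round index. I would prove this by induction mirroring the case analysis of Table~\ref{tab:insert}: in each branch the procedure either finishes locally, calls \transition$(P,Q)$ at a round strictly below the current one (continuing a strictly decreasing \downc cascade while preserving the hypothesis $\round{P}\le\round{Q}$), or invokes \upc at a strictly larger round (continuing a strictly increasing cascade). Lemma~\ref{lemma:value-lemma} is the crux here: it pins down the $\mu$-values so that exactly one branch applies and the resulting configuration again satisfies its invariant, which simultaneously certifies correctness and rules out reprocessing a round. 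Granting this monotonicity, each of the at most $\log w \le h$ rounds contributes $O(h_i^{\mathrm{BST}} + h_i^{\mathrm{LST}})$ work and the phases are disjoint across rounds, so summing gives $\sum_i O(h_i^{\mathrm{BST}} + h_i^{\mathrm{LST}}) = O(h)$. Adding the $O(h)$ cost of local correction from Corollary~\ref{cor:pre} yields the claimed $O(h)$ bound for {\sc Insert}.
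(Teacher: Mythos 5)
Your accounting strategy is, at bottom, the same as the paper's: Corollary~\ref{cor:pre} for local correction, then bounding the correction cascades by charging their work along a root-to-leaf path via Lemma~\ref{lemma:path} and monotonicity of rounds. That charging is exactly how the paper proves Lemma~\ref{lemma:down-correction} (\downc) and Lemma~\ref{lemma:up-correction} (\upc), which its proof of the theorem then simply invokes: each case of \textsc{Insert} performs $O(1)$ BST, LST, \downc, and \upc operations, each costing $O(h)$. The genuine gap is that the part you defer as ``the hard part'' is precisely what the paper calls the heart of its proof. Nearly all of its argument is an exhaustive case analysis on \type{$B$}, \parent{$B$}, and the stolen set $L$ (the five cases of Table~\ref{tbl:center-correction}), verifying via Lemma~\ref{lemma:value-lemma} that after local correction each configuration is repaired by the stated updates and legitimately reaches \transition with its precondition satisfied. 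Your monotone-progress claim is the \emph{conclusion} of that analysis, not a substitute for it: without carrying it out you have neither correctness (that $\mathcal{T}$ ends up equal to the tree the static construction would build on $S'$) nor a proof that the cascades you charge are the ones that actually occur.

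Beyond that, your cost model misstates the cascades in two places. First, inside \transition, \downc is not a single $O(h_i^{\mathrm{BST}})$ concatenation of two balanced trees: Cases 4--5 of Table~\ref{tab:down} recurse through a sequence of BSTs of strictly decreasing creation rounds, and the $O(h)$ total hinges on the observation that each edge removed is a \emph{bordering} edge, so the heights of the visited BSTs sum to $O(h)$ along one root-to-leaf path---the content of Lemma~\ref{lemma:down-correction}'s proof, which you would need to reproduce rather than assume. Second, your ``disjoint phases'' picture fails: the terminal cases of \upc (Cases 1 and 3 of Table~\ref{tab:up}) themselves invoke \downc, revisiting rounds below the current one. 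The paper's argument survives because all \downc calls occur in the \emph{non-recursive} cases of \upc, while the recursive Case 6 is paid for by a token argument---$O(1)$ tokens on each edge query of the root-to-$B$ path, with each LST or BST touched at most twice since the insertion point's round strictly increases. That token argument is what actually substantiates your ``each round visited $O(1)$ times,'' together with the fact, asserted but not justified in your write-up, that every structure insertion manipulates lies on the path from the root to the predecessor of the inserted node (the paper notes this is special to insertion and fails for deletion).
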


\begin{proof}
The heart of our proof is showing that {\sc insert} arrives at a scenario where \transition can be called.  We show this by exhaustively examining how {\sc insert} deals with all possible {\sc round}, {\sc type}, and {\sc parent} values of $B$ as well as the contents of $L$ after executing local correction.  To help this verification, we summarize the list of cases in Table~\ref{tbl:center-correction}.  In all the cases below $\roundis{B}{k}$.

\begin{description}

\item [Case 1: \parentis{$B$}{{\sc null}}]  From Lemma~\ref{lemma:value-lemma} we know that before insertion, either $\mu_1(B) = \mu_2(B) = \mu_3(B) = k-1$ or $\mu_1(B) = k-1 > k-2 = \mu_2(B) = \mu_3(B)$.  Consider the first case where, before insertion, $\mu_1(B) = \mu_2(B) = \mu_3(B) = k-1$.  We have two possibilities after inserting $A$.
\begin{itemize}
\item [(a)] Suppose that after insertion, $\mu_1(B) = \mu_2(B) = k-1$. From Lemma~\ref{lemma:value-lemma}, this implies that $\max \{ \mu_3(B), \mu_1(A) \} = k-1$ and $B$ has degree at least 3 at the beginning of iteration $k-1$. Similarly, if $\mu_1(A) = \mu_2(A) = k-1$, then $A$ also has degree 3 at the beginning of iteration $k-1$. After iteration $k-1$, either $B$ survives alone, or $B$ and $A$ each survive with degree 1. We keep $B$ as the root of the \lltree; if $A$ survives together with $B$, we increment \round{$B$} and this completes the correction.
\item [(b)] Suppose that after insertion, $\mu_2(B) \leq \mu_1(B) < k-1$. This implies that $\mu_1(A) = \mu_2(A) = k-1$. Thus, analogous to above, $A$ survives alone after iteration $k-1$. $A$ becomes the new root of the \lltree.  We can now apply the \transition function with $P=B$ and $Q=A$ to correct $B$ on the path between $A$ and $\rho_{1}(B)$.  This completes the correction.\\
\end{itemize}

Now, suppose that before insertion $\mu_1(B) = k-1 > k-2 = \mu_2(B) = \mu_3(B)$. \\

\begin{itemize}
\item [(a)] Suppose that after insertion $\mu_1(B) = k-1$ where $M=\rho_{1}(B)$.  This implies $k-2 \geq \mu_1(A) \geq \mu_2(A)$. Thus, after $k-2$ iterations, $T_{S'}$ is either a line with endpoints $M$ and $B$ ($A$ was contracted earlier), or a line with endpoints $M$ and $A$ ($B$ may be on the chain of degree 2 nodes connecting $M$ and $A$ or may have been contracted earlier). If $\mu_1(A) = \mu_2(A) = k-2$, then $A$ survives until iteration $k-1$. Even if $B$ survives as well ($\mu_2(B) = k-2$), it is line contracted into $BST(A,M)$. Thus, w.l.o.g. $A$ becomes the new root node. We then apply \transition with $P=B$ and $Q=A$ to correct the path between $A$ and $\rho_{1}(B)$.  This completes the correction. If $\mu_2(A) < k-2$, then $A$ does not survive until iteration $k-1$. This means that $\mu_2(B) = k-2$ and $B$ survives together with $M$.  Without any loss of generality, we keep $B$ as root. We then apply \transition with $P=A$ and $Q=B$ to correct $A$ on the path between $B$ and $\rho_{1}(A)$.  This completes the correction.
\item [(b)] Suppose that after insertion $\mu_1(A) = k-1$. The situation is symmetric to case (a) above: if $\mu_1(B) = \mu_2(B) = k-2$, then $B$ survives until iteration $k-1$. Even if $A$ survives as well ($\mu_2(A)  = k-2$), it is line contracted into $BST(B,M)$. Thus, w.l.o.g. $B$ stays the root node. We then apply \transition with $P=A$ and $Q=B$. If $\mu_2(B)  < k-2$, then $B$ does not survive until iteration $k-1$. This means that $\mu_2(A)  = k-2$ and $A$ survives together with $M$.  Here we make $A$ the new root. We then apply \transition with $P=B$ and $Q=A$ to repair the path from $\rho_{1}(B)$ to $A$.  This completes the correction.\\
\end{itemize}

To review,  if either (1) $\mu_1(B) = \mu_2(B) = k-1$, or (2) $\mu_1(B) = k-1$ and $\mu_2(A) < k-2$, or (3) $\mu_1(B) = \mu_2(B) = k-2$ and $\mu_2(A) < k-1$, then $B$ remains the root of the \lltree after insertion. We then apply \transition with $P=A$ relative to $Q=B$. Otherwise, $A$ becomes the new root and we \transition with $P=B$ and $Q=A$.\\

For cases 2--3, \typeleaf{$B$} and \parentis{$B$}{$E$}. Before insertion, after iteration $k-1$, $B$ had degree 1 and was connected to $E$ through a (possibly empty) chain of degree 2 nodes. The chain was line contracted into $BST(B,E)$ and $B$ was leaf contracted into $E$.\\

\item[Case 2:  $L= \emptyset $]

After insertion, if $\mu_1(A) = \mu_2(A) = k-1$, then after iteration $k-1$, $A$ has degree 1 and is connected to $E$ through a (possibly empty) chain of degree 2 nodes that may contain $B$. Thus, the edge query $(A,E)$ replaces edge $(B,E)$ in $LST(E)$. Since {\sc round}$(B)  = \mu_1(B) + 1 \leq k$, $B$ is line contracted between $A$ and $E$ so we \downc with respect to $(A,B)$ and $(B,E)$ to determine $BST(A,E)$.   Otherwise, if $\mu_2(A)  < k-1$, then $A$ is contracted before iteration $k$ and $\mathcal{T}$ is identical to $\mathcal{T'}$ beginning with {\sc round} $k$.  We keep edge $(B,E)$ in $LST(E)$ and apply the \transition algorithm with $P=A$ and $Q=B$ which completes the correction. 

\item[Case 3: $L=\{ E \}$]

There are two subcases.  First, if $\mu_1(B) = \mu_2(B) = k-1$ after insertion, then after iteration $k-1$, $B$ has degree 1 and is connected to $E$ through a (possibly empty) chain of degree 2 nodes that may contain $A$.  We know $(B,E)$ remains an edge query in $LST(E)$, but it now needs to accommodate the addition of $A$ since $\round{A}= \mu_1(A) +1 \leq k$  Thus, we remove the edge query $(B,E)$ from $LST(E)$ and, as a preliminary step, replace $B$ with $A$ in $BST(B,E)$ to produce $BST(A,E)$.  Then we \downc with respect to $(A,B)$ and $(A,E)$ to determine the new $BST(B,E)$.  Finally, we insert $(B,E)$ back into $LST(E)$.

Second, if after insertion $\mu_2(B)  < k-1$, then $B$ is contracted before iteration $k$. After iteration $k-1$, $A$ has degree 1 and is connected to $E$ through a (possibly empty) chain of degree 2 nodes.  Thus, we remove $(B,E)$ from $LST(E)$, replace $B$ with $A$ in $BST(B,E)$ to yield $BST(A,E)$, and insert the new edge query $(A,E)$.  Now we're in a position to apply the \transition function with $P=B$ and $Q=A$ after which we've repaired the contraction process.\\

For cases 4--5, \typeline{$B$} and \parentis{$B$}{$(E,F)$}. Before insertion, after iteration $k-1$, $B$ had degree 2 and was part of a chain of degree 2 nodes connecting $E$ and $F$. $(E,F)$ became the {\sc parent} of $B$ and the chain together with $B$ was line contracted into $BST(E,F)$. Let $(B,N)$ and $(B,M)$ be the edges representing $B$ in $BST(E,F)$, where w.l.o.g. $A$ falls between $(B,M)$.\\

\item[Case 4: $L=\{E\}$ or $L=\{F\}$]

After insertion, {\sc round}$(B) = \mu_1(B) + 1$ and \roundis{$A$}{$\mu_1(A) + 1$}, since each of the two nodes is line contracted right after it has accumulated all of its leaves. We examine what happens to the \lltree after iteration $k-1$. If $\mu_1(B)  > \mu_1(A)$, then the current \lltree looks identical to the one prior to insertion since $A$ is line contracted at a prior iteration somewhere between $B$ and $M$. If $\mu_1(B) = \mu_1(A)$, then both $A$ and $B$ have degree 2 and are part of a chain of degree 2 nodes connecting $E$ and $F$. $A$ and $B$ are line contracted together into $BST(E,F)$. If $\mu_1(B) < \mu_1(A)$, then $A$ has replaced $B$ in the chain of degree 2 nodes connecting $E$ and $F$.  $A$ is line contracted into $BST(E,F)$.

In the data structure, we always replace $B$ with $A$ in $BST(B,M)$ to create $BST(A,M)$. If $\mu_1(B) > \mu_1(A)$, we \downc $BST(B,A)$ and $BST(A,M)$ which yields a correct version of $BST(B,M)$ which we insert back into $BST(E,F)$. If $\mu_1(B) = \mu_1(A)$, we insert edges $(A,B)$ and $(A,M)$ into $BST(E,F)$. If $\mu_1(B) < \mu_1(A)$, we repair the line between $A$ and $N$ so that $B$ settles in its proper place.  We remove $BST(N,B)$ from $BST(E,F)$ and \downc $BST(N,B)$ and $BST(B,A)$ to produce $BST(A,N)$ which we insert back into $BST(E,F)$.  This concludes the contraction process.

\item[Case 5: $L = \{E, F\}$ or $L = \emptyset$]

If $L = \{E, F\}$ then after iteration $k-1$, $A$ is connected to $E$ and $F$ by chains of degree 2 nodes (identical to the ones for $B$ pre-insertion). If $\mu_1(B) = \mu_2(B) = k-1$, then $B$ also survives as a neighbor of $A$.  Thus, $B$ is a leaf at some point in the contraction process so we \upc $B$ at insertion point $A$ (this happens via the call to \transition).   If $B$ does not survive, then $A$ is line contracted into $BST(E,F)$ analogously to how $B$ was pre-insertion. In the data structure, we replace $B$ with $A$ in $BST(M,B)$ and $BST(N,B)$ and call \transition to potentially repair the path from $\rho_{1}(B)$ to $A$.  This completes the correction.\\

\end{description}

If $L = \emptyset$, then after iteration $k-1$, $B$ is connected to $E$ and $F$ by chains of degree 2 nodes (identical to the ones pre-insertion). If $\mu_1(A) = \mu_2(A) = k-1$, then $A$ also survives as a neighbor of $B$. Thus, $A$ is a leaf at some point in the contraction process so we \upc $A$ at insertion point $B$.  If $A$ does not survive, then $B$ is line contracted into $BST(E,F)$ analogous to pre-insertion. In the data structure, we need only worry about correcting the path from $\rho_{1}(A)$ to $B$ which is done via \transition.  This completes the correction.

What's left to show is that {\sc insert} runs in time proportional to the height of the \lltree.  From Corollary~\ref{cor:pre} local correction operates in $O(h)$ time.  Furthermore, each case of {\sc insert} performs at most $O(1)$ BST, LST, \downc, and \upc operations -- each of which takes at most $O(h)$ time (See Lemma~\ref{lemma:down-correction} for \downc and Lemma~\ref{lemma:up-correction} for \upc).
\end{proof}

\subsubsection{Down Correction} \label{sec:down-correction}

\begin{table}[tb]
\begin{center}
\caption{\downc :  Given $BST(E,B)$ which was created in round $m$ ($m=0$ if $BST(E,B)$ is empty) and $BST(B,F)$ which was created in round $n$ ($n=0$ if $BST(B,F)$ is empty), return $BST(E,F)$ where $B$ appears in the correct position in $BST(E,F)$. \label{tab:down}}
\renewcommand{\tabcolsep}{2mm}
\renewcommand{\arraystretch}{1.4}
\tiny
\begin{tabular}{|l|l|}
\hline

\multicolumn{2}{|l|}{\downc($BST(E,B), BST(B,F)$)} \\ \hline \hline

\multicolumn{1}{|c|}{{\bf Updated Properties}} & \multicolumn{1}{c|}{{\bf Data Structure Updates}} \\ \hline \hline

\multicolumn{2}{|l|}{{\bf Cases 1--5}} \\ \hline \hline

\typeline{$B$} & \\ \hline \hline

\multicolumn{2}{|l|}{{\bf Case 1}: $m$ $<$ {\sc round}$(B)$ $=$ $n$} \\ \hline \hline

\multirow{2}{*}{\parentis{$B$}{$(E,F)$}} & insert  $(E,B)$ into $BST(B,F)$ which becomes $BST(E,F)$  \\
& return $BST(E,F)$ \\ \hline \hline

\multicolumn{2}{|l|}{{\bf Case 2}: $m,n$ $<$ {\sc round}$(B)$} \\ \hline \hline

\multirow{2}{*}{\parentis{$B$}{$(E,F)$}} & insert edges $(E,B)$ and $(B,F)$ into a new (empty) $BST(E,F)$  \\
& return $BST(E,F)$ \\ \hline \hline

\multicolumn{2}{|l|}{{\bf Case 3}: $m$ $=$ {\sc round}$(B)$ $=$ $n$} \\ \hline \hline

\multirow{2}{*}{\parentis{$B$}{$(E,F)$}} & merge $BST(E,B)$ and $BST(B,F)$ into $BST(E,F)$\\
&  return $BST(E,F)$ \\ \hline \hline

\multicolumn{2}{|l|}{{\bf Case 4}: $m$ $\leq$ {\sc round}$(B)$ $<$ $n$} \\ \hline \hline

\multicolumn{2}{|c|}{Let $(B,N)$ be the edge queries representing $B$ in $BST(B,F)$, where $N \neq F$} \\ \hline
\multirow{5}{*}{ \parent{$N$} $\gets$ $(E,F)$}& remove edge $(B,N)$ from $BST(B,F)$ which becomes becomes $BST(N,F)$ \\
& $BST(E,N) \gets$ \downc  $BST(E,B)$ and $BST(B,N)$ \\
& create edge $(E,N)$ with $BST(E,N)$ \\
& insert edge $(E,N)$ into $BST(N,F)$ which becomes $BST(E,F)$  \\
& return $BST(E,F)$ \\ \hline \hline

\multicolumn{2}{|l|}{{\bf Case 5}: $m,n$ $>$ {\sc round}$(B)$} \\ \hline \hline

\multicolumn{2}{|c|}{Let $(M,B)$ and $(B,N)$ be the edge queries} \\
\multicolumn{2}{|c|}{representing $B$ in $BST(E,B)$, $BST(B,F)$, respectively.} \\ \hline
& remove edge $(M,B)$ from $BST(E,B)$ which becomes $BST(E,M)$ \\
\parent{$M$} $\gets$ $(E,F)$ & remove edge $(B,N)$ from $BST(B,F)$ which becomes $BST(N,F)$ \\
& $BST(M,N) \gets$ \downc $BST(M,B)$ and $BST(B,N)$ \\ \hline

\multicolumn{2}{|c|}{{\bf if} $m > n$ (w.l.o.g) {\bf then}} \\ \hline

\multirow{4}{*}{\parent{$N$} $\gets$ $(M,F)$} & create edge $(M,N)$ with $BST(M,N)$ \\
& insert edge $(M,N)$ into $BST(N,F)$ which becomes $BST(M,F)$ \\
& insert edge $(M,F)$ into $BST(E,M)$ which becomes $BST(E,F)$ \\
& return $BST(E,F)$ \\ \hline
\multicolumn{2}{|c|}{{\bf else} $m = n$} \\ \hline
\multirow{2}{*}{\parent{$N$} $\gets$ $(E,F)$} & merge $BST(E,M)$, $BST(M,N)$, and $BST(N,F)$ into $BST(E,F)$ \\
& return $BST(E,F)$ \\

\hline
\end{tabular}
\end{center}
\end{table}

Down correction repairs the contraction process along a path in the partial order tree.  More specifically, down correction takes two binary search trees $BST(E,B)$ and $BST(B,F)$ where $\round{B}$ does not respect the contraction process and returns a third search tree $BST(E,F)$ where $B$ has been {\em floated} down to the BST created in same round as $B$.  In all cases, we know the edge $(E,F)$ appears at some point in the contraction process.  We assume that if $X$ is a node on the path from $E$ to $F$ then both $\round{X}$ and $LST(X)$ are well-formed and correct.  This includes includes $B$---it is simply out of place structurally with respect to the contraction process.  Moreover, and without loss of generality, we assume that $\round{B} \leq \round{E} \leq \round{F}$ and that if $\round{E}=\round{B}$ then $\typeleaf{E}$.  Down correction is used as a subroutine by both {\em insertion} and {\em deletion}.  A formal description of the algorithm is given in Table~\ref{tab:down}.

\begin{lemma} \label{lemma:down-correction}
Let $BST(E,B)$ and $BST(B,F)$ be binary search trees along a path from $E$ to $B$ to $F$ where $LST(X)$ and $\round{X}$ are correct for every node $X$ on the path from $E$ to $F$.  Furthermore let $\round{B} \leq \round{E} \leq \round{F}$ and $\round{E}=\round{B}$ only when $\typeleaf{E}$.  If $B$ does not respect the contraction process with respect to the path from $E$ to $F$ then \downc returns $BST(E,F)$ in $O(h)$ time where $B$ occupies the correct position.
\end{lemma}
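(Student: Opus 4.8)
The plan is to prove correctness and the time bound by following the five-case structure of \downc from Table~\ref{tab:down}, arguing by induction on the number of nodes on the path from $E$ to $F$. To streamline the bookkeeping, write $k$ for the round \round{$B$} of $B$, and let $m,n$ be the creation rounds of $BST(E,B)$ and $BST(B,F)$. Since the two inputs play symmetric roles in assembling $BST(E,F)$ (Case~5 already invokes an ``$m>n$ w.l.o.g.'' relabeling), I may assume $m \le n$ by swapping the roles of $E$ and $F$. With this normalization I would verify the five cases are exhaustive: classifying $(m,n)$ by its position relative to $k$ yields exactly Case~2 ($m,n<k$), Case~1 ($m<k=n$), Case~3 ($m=n=k$), Case~4 ($m\le k<n$), and Case~5 ($m,n>k$), and a short check shows these partition $\{(m,n): m\le n\}$ (for instance $m=k,n<k$ is impossible when $m\le n$). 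The hypotheses \round{$B$} $\le$ \round{$E$} $\le$ \round{$F$} (with \typeleaf{$E$} whenever \round{$E$} $=$ \round{$B$}) are precisely what guarantee that $B$ is the lowest-round node among $\{E,B,F\}$, so that floating $B$ downward is well defined and the tie configuration is consistent.

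For correctness I would treat Cases~1--3 as base cases and Cases~4--5 as the inductive step. In the base cases the correct position of $B$ is forced by $k$: when $k$ equals the round of a flanking BST (Cases~1 and~3) the lower-round edge(s) are inserted or merged so that $B$ lands as a node of the round-$k$ tree $BST(E,F)$; when both flanking BSTs have strictly smaller round (Case~2) a fresh round-$k$ tree holding only the edges $(E,B)$ and $(B,F)$ is created, exactly matching $B$ being line contracted with endpoints $E,F$ after the lower-round structure has already formed. For the inductive Cases~4 and~5, $B$ must descend past the higher round(s): I extract $B$'s neighbour $N$ (and in Case~5 also $M$) inside the flanking round-$n$ (resp.\ round-$m$) BST, recurse to seat $B$ correctly among the strictly smaller-round structures $BST(E,B)$ and $BST(B,N)$ (resp.\ $BST(M,B)$ and $BST(B,N)$), and then reattach the extracted edges. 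Each recursive call runs on a path with strictly fewer nodes, so the induction is well founded; the inductive hypothesis places $B$ correctly in the returned tree, and the reattachment leaves the remaining, already-correct structure intact.

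The main obstacle is the $O(h)$ time bound, since the recursion in Cases~4 and~5 makes a naive ``per-call $O(h)$ times recursion depth'' estimate yield only $O(h^2)$. The key observation I would exploit is that the recursion descends through \emph{nested} BSTs of \emph{strictly decreasing} round: in Case~4 the recursive call operates on $BST(B,N)$, which hangs off the edge query $(B,N)$ inside the round-$n$ tree $BST(B,F)$ and therefore has round $<n$, and similarly on both sides in Case~5. Hence the BSTs touched over the entire recursion form at most two chains of nested component search structures with strictly decreasing rounds. By Lemma~\ref{lemma:path} any root-to-leaf path in the \lltree meets at most one BST per round, so each such chain lies along a single root-to-leaf path and the heights of its BSTs sum to at most $h$. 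Since each invocation of \downc performs only $O(1)$ insert, remove, or merge operations, each costing time proportional to the height of the BST it acts on, the total work telescopes along these chains to $O(h)$, establishing the bound.
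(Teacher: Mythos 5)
Your proposal is correct and follows essentially the same route as the paper's proof: the identical five-case analysis from Table~\ref{tab:down} with Cases~1--3 as base cases and Cases~4--5 handled by structural induction, and the same amortized time argument that the recursion only descends into BSTs hanging off bordering edge queries of strictly decreasing round, so that by Lemma~\ref{lemma:path} their heights sum to $O(h)$ with $O(1)$ BST operations per level. Your explicit exhaustiveness check of the $(m,n)$ classification and the up-front $m \le n$ normalization are minor presentational additions, not a different method.
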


\begin{proof}
We begin by showing that \downc successfully repairs the contraction process.  Suppose $BST(E,B)$ was created in round $m$ where $m=0$ if $BST(E,B)$ is empty and suppose $BST(B,F)$ was created in round $n$ where $n=0$ if $BST(B,F)$ is empty.  The proof is by structural induction.  There are 5 cases.  We begin with the base cases where both $m$ and $n$ do not exceed $\round{B}$.

\begin{description}
\item [Case 1: $m < \round{B} = n$]  $B$ is line contracted after nodes in $BST(E,B)$, but together with all the nodes in $BST(B,F)$. We add edge query $(E,B)$ to $BST(B,F)$ and the result becomes the new $BST(E,F)$.
\item [Case 2: $m,n < \round{B}$]  $B$ is the only node line contracted at {\sc round}$(B)$ between $E$ and $F$. This happens after all other nodes in $BST(E,B)$ and $BST(B,F)$ were already line contracted. We create a new $BST(E,F)$ and populate it with edge queries $(E,B)$ and $(B,F)$.
\item [Case 3: $m = \round{B} = n$] $B$ is line contracted together with all nodes in $BST(E,B)$ and all nodes in $BST(B,F)$. We merge $BST(B,E)$ and $BST(B,F)$ and the result becomes the new $BST(E,F)$.
\item [Case 4: $m \leq \round{B} < n$] $B$ is line contracted either after or at the same time as nodes in $BST(E,B)$, but before any of the nodes in  $BST(B,F)$. Let $(B,N)$ be the edge query representing $B$ in $BST(B,F)$, where $N$ cannot be $F$ since $BST(B,F)$ is not empty. Then, we know that $B$ is line contracted somewhere between $E$ and $N$ at a previous iteration.  Remove $(B,N)$ from $BST(B,F)$ (which becomes $BST(N,F)$).  Recursively down correcting on $BST(E,B)$ and $BST(B,N)$ yields a correct $BST(E,N)$ which we insert into $BST(N,F)$ to produce $BST(E,F)$.
\item [Case 5: $\round{B} < n,m$]  $B$ is line contracted before any of the nodes in either $BST(B,E)$ or $BST(B,F)$. Let $(B,M)$ and  $(B,N)$ be the edges representing $B$ in $BST(E,B)$ and $BST(B,F)$, respectively where $M$ cannot be $E$ and $N$ cannot be $F$, since the two $BST$s are not empty. We remove $(M,B)$ and $(B,N)$ from $BST(E,B)$ and  $BST(B,F)$, obtaining $BST(E,M), BST(N,F)$. Then, we know that $B$ is contracted somewhere between $M$ and $N$ at a previous iteration: inductively down correcting $BST(M,B)$ and $BST(B,N)$ yields $BST(M,B)$.  If w.l.o.g. $m > n$, we proceed similarly to Case 4: we insert $(M,N)$ into $BST(N,F)$ (creating $BST(M,F)$) and then insert $(M,F)$ to $BST(E,M)$. The result is the new $BST(E,F)$. Otherwise, if $m=n$, we proceed similarly to Case 3: we merge $BST(E,M)$, $BST(M,N)$, and $BST(N,F)$ to create the new $BST(E,F)$.
\end{description}

What's left to show is that \downc operates in at most $O(h)$ time.  Each recursive call (in cases 4 and 5) operates on, minimally, $BST(B,N)$ where $(B,N)$ is the edge bordering some path from $B$ to $F$ that was line contracted into $BST(B,F)$ at iteration $n$.  From Lemma~\ref{lemma:path}, $BST(B,N)$ was created at a previous iteration, so the algorithm halts after visiting at most $l$ BSTs.  Because $(B,N)$ is always a bordering edge, the sum of the heights of the $l$ BSTs is bounded above by $O(h)$.  Since we perform at most $O(1)$ BST operations on each of the $l$ BSTs, and these operations run in time proportional to the height, we have the desired bound.  
\end{proof}

\subsubsection{Up Correction} \label{sec:up-correction}

\begin{table}
\begin{center}
\caption{\upc : insert node $A$ at insertion point $B$ where $\round{A} \leq \round{B}$ and $LST(A)$ is fully-formed and correct with respect to $\mathcal{T'}$.  \label{tab:up}}
\renewcommand{\tabcolsep}{2mm}
\renewcommand{\arraystretch}{1.4}
\scriptsize
\begin{tabular}{|l|l|}
\hline

\multicolumn{2}{|l|}{\upc($A,B$)} \\ \hline \hline

\multicolumn{1}{|c|}{{\bf Updated Properties}} & \multicolumn{1}{c|}{{\bf Data Structure Updates}} \\ \hline \hline

\multicolumn{2}{|l|}{{\bf Cases 1--6}} \\ \hline \hline

\type{$A$} $\gets$ \leaf & \\ \hline \hline

\multicolumn{2}{|l|}{{\bf Case 1}: {\sc round}$(A)$ $<$ {\sc round}$(B)$, \parentis{$B$}{{\sc null}}, and $\mu_2(B) < \mu_1(B) = $ {\sc round}$(A) = k-1$} \\ \hline \hline

\multicolumn{2}{|c|}{Let $\rho_1(B) = M$} \\ \hline
\parent{$M$} $\gets$ {\sc null} & remove edge $(B,M)$ from $LST(B)$\\
\parent{$A$} $\gets$ $M$ & $BST(A,M) \gets$ \downc $BST(A,B)$ and $BST(B,M)$ \\
{\sc round}$(B)$ $\gets$ {\sc round}$(B)-1$ & create edge $(A,M)$ from $BST(A,M)$ and insert it into $LST(M)$ \\
{\sc round}$(M)$ $\gets$ {\sc round}$(M)+1$ & $M$ becomes new root of the \lltree \\ \hline \hline

\multicolumn{2}{|l|}{{\bf Case 2}: {\sc round}$(A)$ $<$ {\sc round}$(B)$ (and Case 1 does not apply)} \\ \hline \hline

\parent{$A$} $\gets$ $B$ & insert edge $(A,B)$ into $LST(B)$ \\ \hline \hline

\multicolumn{2}{|l|}{{\bf Case 3}: {\sc round}$(A)$ $=$ {\sc round}$(B)$, \typeleaf{$B$}, and \parentis{$B$}{$E$}} \\ \hline \hline

\multirow{3}{*}{\parent{$A$} $\gets$ $E$} & remove edge $(B,E)$ from $LST(E)$ \\
& $BST(A,E) \gets$ \downc $BST(A,B)$ and $BST(B,E)$ \\
& create edge $(A,E)$ from $BST(A,E)$ and insert it into $LST(E)$ \\ \hline \hline

\multicolumn{2}{|l|}{{\bf Cases 4--6}: {\sc round}$(A)$ $=$ {\sc round}$(B)$, \typeline{$B$}, and \parentis{$B$}{$(E,F)$}} \\ \hline \hline

\parent{$A$} $\gets$ $B$ & insert edge $(A,B)$ into $LST(B)$ \\
{\sc round}$(B)$ $\gets$ {\sc round}$(B)+1$ & split $BST(E,F)$ into $BST(E,B)$ and $BST(B,F)$ \\ \hline \hline

\multicolumn{2}{|l|}{{\bf Case 4}: {\sc round}$(B)+1$ $<$ {\sc round}$(E)$, {\sc round}$(F)$ or} \\
\multicolumn{2}{|c|}{{\sc round}$(B)+1$ $=$ {\sc round}$(E)$ $<$ {\sc round}$(F)$ and \typeleaf{$E$}} \\ \hline \hline

& create new $BST(E,F)$ \\
& insert  edges $(E,B)$ and $(B,F)$ into $BST(E,F)$ \\ \hline \hline

\multicolumn{2}{|l|}{{\bf Case 5}: {\sc round}$(B)+1$ $=$ {\sc round}$(E)$ and \typeline{$E$}} \\ \hline \hline

\multicolumn{2}{|c|}{Let \parentis{$E$}{$(G,H)$}, where $H$ may be $F$} \\ \hline
\multirow{2}{*}{\parent{$B$} $\gets$ $(G,H)$} & remove edge $(E,F)$ from $BST(G,H)$ \\
& insert edges $(E,B)$ and $(B,F)$ into $BST(G,H)$ \\ \hline \hline

\multicolumn{2}{|l|}{{\bf Case 6}: {\sc round}$(E)$ $<$ {\sc round}$(B)+1$ $\leq$ {\sc round}$(F)$} \\ \hline \hline

\multicolumn{2}{|c|}{{\sc type}$(E)$ must be {\sc leaf} and \parentis{$E$}{$F$}} \\ \hline
{\sc type}$(B)$ $\gets$ \leaf & remove edge $(E,F)$ from $LST(F)$ \\
\parent{$E$} $\gets$ $B$ & insert  edge $(B,E)$ into $LST(B)$ \\ \hline
\multicolumn{2}{|c|}{{\bf if} \parentis{$F$}{{\sc null}} and {\sc round}$(F)$ $=$ {\sc round}$(B)+1$ {\bf then}} \\ \hline
\parent{$B$} $\gets$ {\sc null} & \multirow{2}{*}{insert edge $(B,F)$ into $LST(B)$} \\
\parent{$F$} $\gets$ $B$ & \multirow{2}{*}{$B$ becomes new root of the \lltree} \\
\round{$F$} $\gets$ \round{$F$} $-1$ & \\ \hline
\multicolumn{2}{|c|}{{\bf else} \upc $B$ at insertion point $F$} \\

\hline
\end{tabular}
\end{center}
\end{table}

Suppose we are inserting $A$ into $\mathcal{T}$ at insertion point $B$ where $\round{A} \leq \round{B}$ and $\round{A}$ and $LST(A)$ are correct with respect to $\mathcal{T'}$.  Suppose further that we know $\typeleaf{A}$ and that $(A,B)$ appears as an edge during some iteration of the contraction process (initially, $A$ and $B$ are neighbors in $H_{S}$).  The addition of $A$ may change the contraction process with respect to $B$ (and these changes may propagate to later iterations of the contraction process).  Up correction repairs the contraction process in this situation.  A formal description of the \upc algorithm is given in Table~\ref{tab:up}.

\begin{lemma} \label{lemma:up-correction}
\upc repairs the contraction process in $O(h)$ time so that $\mathcal{T}=\mathcal{T'}$.
\end{lemma}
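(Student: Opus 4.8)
The plan is to prove correctness by a case analysis that mirrors the six cases of Table~\ref{tab:up}, together with an induction on the recursion depth---equivalently, on $\round{B}$, which strictly increases across the single recursive call (the tail call in Case~6). The invariant I maintain at every invocation is that the portion of $\mathcal{T}$ describing all contractions strictly before round $\round{A}$ already agrees with $\mathcal{T}'$, and that $A$ together with $LST(A)$ is in its final correct form; the work of the call is to extend this agreement through the round in which $B$ is contracted, leaving behind a smaller instance of the same problem higher up precisely when the change to $B$ propagates. Because the hypotheses give \typeleaf{$A$}, $\round{A}\le\round{B}$, and that $(A,B)$ is an edge at round $\round{A}$, Lemma~\ref{lemma:value-lemma} pins down the admissible values of $\mu_1,\mu_2,\mu_3$ at $B$ and its neighbors, which is exactly what lets us read off $B$'s new type and round in each case.

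First I would dispatch the cases with $\round{A}<\round{B}$ (Cases~1--2), where $A$ is leaf-contracted strictly before $B$ is ever contracted. Generically (Case~2) $A$ simply joins $LST(B)$ as a new edge query and nothing above round $\round{A}$ moves, so correctness is immediate from the definition of leaf contraction. The sole exception is the boundary situation of Case~1: here $B$ was the root by the degree-one branch of Lemma~\ref{lemma:value-lemma}(1) (so $\mu_2(B)<\mu_1(B)=\round{A}=k-1$), and inserting $A$ as a second round-$(k-1)$ leaf turns the surviving configuration $M\!-\!B$, with $M=\rho_1(B)$, into a degree-two path $A\!-\!B\!-\!M$. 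I would check that the table's response is forced: $B$ is now line-contracted at round $k-1$ (hence $\round{B}\gets k-1$), the edge $(B,M)$ is removed and a new edge $(A,M)$ carrying the tree obtained by applying \downc to $BST(A,B)$ and $BST(B,M)$ is created, and $M$ inherits the root and gains a round ($\round{M}\gets\round{M}+1$); Lemma~\ref{lemma:down-correction} certifies that \downc seats $B$ correctly on the line from $A$ to $M$.

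Next I would treat $\round{A}=\round{B}$, where $B$'s role can genuinely change. If \typeleaf{$B$} (Case~3), then $A$ and $B$ are both leaves feeding the chain down to $E=\parent{B}$, and I would argue that extracting $(B,E)$ from $LST(E)$, forming $BST(A,E)$ by \downc on $BST(A,B)$ and $BST(B,E)$, and reinserting $(A,E)$ reproduces the contraction that $\mathcal{T}'$ performs once the extra leaf $A$ forces $B$ onto the $E$-chain. If \typeline{$B$} (Cases~4--6), inserting $A$ promotes $B$ to a leaf at round $\round{B}$, so the old $BST(E,F)$ must be split into $BST(E,B)$ and $BST(B,F)$ and $\round{B}$ incremented; which of the three cases applies is then decided purely by comparing $\round{B}+1$ against $\round{E}$ and $\round{F}$ and by $\type{E}$. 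I would verify that Case~4 reseats the two new edges inside a rebuilt $BST(E,F)$, that Case~5 migrates them up into $E$'s parent BST $BST(G,H)$, and that Case~6 is the genuinely recursive situation: $E$ is leaf-contracted into $B$, after which either $F$ is a degree-one root at round $\round{B}+1$ (terminating with $B$ as the new root, mirroring Case~1) or the disturbance propagates and is handled by the tail call to \upc on $B$ at insertion point $F$.

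The crux---and the step I expect to be the main obstacle---is confirming that Case~6's recursive call re-establishes the lemma's hypotheses and that the split on $(\round{E},\type{E},\round{F})$ is exhaustive and mutually exclusive. Concretely I must verify that after the split $B$ has \typeleaf{$B$}, that $LST(B)$ (now holding $E$) is correct, that $(B,F)$ is an edge at round $\round{B}$, and that $\round{B}\le\round{F}$, so that the recursive call is a legal---and strictly higher---instance; here Lemma~\ref{lemma:value-lemma} applied at $E$ and $F$ is what rules out the missing round-relationships and forces $\type{E}$ to be \leaf in Case~6. Granting this, termination and the running time follow together: the recursion only moves to a strictly larger $\round{B}$ (bounded by the $O(\log w)$ rounds) and touches component structures along a single root-to-leaf path of $\mathcal{T}$, each level performing $O(1)$ LST/BST splits, merges, and insertions bounded by the heights of the structures it touches; by Lemma~\ref{lemma:path} those heights sum to $O(h)$ along the path. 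Since \downc is invoked only in the non-recursive Cases~1 and~3, it runs at most once over the whole recursion and contributes $O(h)$ by Lemma~\ref{lemma:down-correction}. Summing, \upc yields $\mathcal{T}=\mathcal{T}'$ in $O(h)$ time.
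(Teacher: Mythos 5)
Your proposal is correct and follows essentially the same route as the paper's proof: the same six-case structural induction mirroring Table~\ref{tab:up}, with Lemma~\ref{lemma:value-lemma} pinning down the admissible configurations, \downc (via Lemma~\ref{lemma:down-correction}) confined to the terminal cases, and Case~6 as the sole tail-recursive step whose hypotheses you re-verify before recursing on $F$. Your running-time argument---charging each recursive level's $O(1)$ LST/BST operations to structures of strictly increasing round lying along a single root-to-leaf path, so the costs sum to $O(h)$---is exactly the paper's token-based amortization in slightly different dress.
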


\begin{proof}
We begin by showing that \upc correctly repairs the contraction process.  The proof is by structural induction.  Let $A$ be the node we are inserting at insertion point $B$ where $\round{A} \leq \round{B}=k$.  We distinguish six cases which are detailed below. 

\begin{description}

\item [Case 1]: {\bf $\round{A} < \round{B}$, \parentis{$B$}{{\sc null}}, and $\mu_2(B) < \mu_1(B) = \round{A} = k-1$}  Since \parentis{$B$}{{\sc null}}, $B$ is the root of $\mathcal{T}$.  By Lemma~\ref{lemma:value-lemma}, $B$ have either (i) $\mu_1(B) = k-1 > k-2 = \mu_2(B) = \mu_3(B)$ or (ii) $\mu_1(B) = \mu_2(B) = \mu_3(B) = k-1$.  This case handles the scenario that inserting $A$ into $LST(B)$ (which is the default action when \round{$A$} $<$ \round{$B$}) leads to $\mu_1(B) = \mu_2(B) = k-1 > k-2 = \mu_3(B)$ and $B$ ceases to be the root.

Before insertion, at the beginning of iteration $k-1$, the partial order tree consisted of nodes $B$ and $\rho_1(B) = M$ connected by a (possibly empty) chain of degree-2 nodes. We line contracted the chain into $BST(B,M)$ and installed $B$ as the root node. During the insertion procedure so far, $M$ was neither stolen from $B$, nor removed from $LST(B)$; $M$ survives after iteration $k-2$.  Since {\sc round}$(A) = k-1$, $A$ also survives. Thus, at the beginning of iteration $k-1$ the poset consists of nodes $M$ and $A$ connected by a chain of degree-2 nodes containing $B$. Now, we should line contract this chain into $BST(A,M)$.  We leaf-contract $A$ into $M$ and arbitrarily assign $M$ to be the root.

To repair the data structure, we replace $B$ with $M$ as the root node (we decrease $\round{B}$ to $k-1$ and we increase \round{$M$} to $k$). We also remove edge $(B,M)$ from $LST(B)$ and we insert a newly created edge $(M,A)$ into $LST(M)$. To construct the new $BST(A,M)$ we appeal to down correction:  the contraction process is correct along the path from $A$ to $M$---only $\round{B}$ has changed.

\item [Case 2: \round{$A$} $<$ \round{$B$} and Case 1 does not apply]  Node $A$ is leaf contracted into $LST(B)$ before iteration $k$ and does not change $\mathcal{T}$ beyond this iteration. In the data structure, we insert $(A,B)$ into $LST(B)$.

\item [Case 3: $\round{A}=\round{B}$, \typeleaf{$B$}, and $\parentis{B}{E}$]  Before insertion, at the beginning of iteration $k$, $B$ had degree 1 and was connected to $E$ through a (possibly empty) chain of degree-2 nodes. The chain was line contracted into $BST(B,E)$ and $B$ was leaf contracted into $LST(E)$. After insertion, $B$ has degree 2 with neighbors $A$ on one side and the chain ending with $E$ on the other. The chain, together with $B$, is line contracted into $BST(A,E)$ and $A$ should be subsequently leaf contracted into $LST(E)$ instead of $B$.  To repair the data structure, we replace edge $(B,E)$ with edge $(A,E)$ in $LST(E)$ where $BST(A,E)$ comes from down correcting $BST(A,B)$ and $BST(B,E)$.
\end{description}

\noindent For all subsequent cases (4--6), \round{$A$} $=$ \round{$B$} and \typeline{$B$}. Prior to insertion, at the beginning of iteration $k$, $B$ had degree 2 and was part of a chain of degree 2 nodes connecting $E$ and $F$. The chain was line contracted into $BST(E,F)$. After insertion, $B$ has degree 3 with the chain accounting for 2 and $A$ accounting for 1. The two sides of the chain are now line contracted, independently, into $BST(E,B)$ and $BST(B,F)$, respectively.  $A$ is subsequently leaf contracted into $LST(B)$. Thus, $B$ survives an extra iteration and we examine the fate of $B$, $E$, and $F$ in the cases below.

To repair the data structure, we split the former $BST(E,F)$ into $BST(E,B)$ and $BST(B,F)$ which we associate with edges $(E,B)$ and $(B,F)$, respectively. We also insert edge $(A,B)$ into $LST(B)$. Since $B$ survives one extra iteration, {\sc round}$(B)$ increases by 1 after Up Correction. To avoid confusion, throughout the rest of this section we continue to let {\sc round}$(B) = k$ refer to the pre-insertion value unless otherwise specified. We also assume w.l.o.g. that min $\{${\sc round}$(E)$, {\sc round}$(F)\}$ $=$ {\sc round}$(E)$.

\begin{description}

\item[Case 4]  This case has the following two sub-cases:
\begin{description}
\item [(a) $\round{B}+1< \round{E}, \round{F}$.] After insertion, at the beginning of iteration $k+1$, $B$ has degree 2 and its only neighbors are $E$ and $F$. Thus, $B$ is the only node line contracted into $BST(E,F)$.
\item [(b) $\round{B}+1= \round{E} < \round{F} \mbox{ and }$ \typeleaf{$E$}.] Like  (a), at the beginning of iteration $k+1$, $B$ has degree 2 and its only neighbors are $E$ (with degree 1) and $F$. $B$ is the only node line contracted into $BST(E,F)$, with $E$ subsequently leaf contracted into $LST(F)$ in the same iteration.
\end{description}
In both cases, to repair the data structure, we create a new $BST(E,F)$ and insert the edges queries $(E,B)$ and $(B,F)$.  Then we replace the existing search tree associated with the edge query $(E,F)$ with our new $BST(E,F)$.

\item [Case 5: $\round{B}+1=\round{E}$ and \typeline{$E$}]  Before insertion, $B$ was line contracted into $BST(E,F)$ at iteration $k$ and $E$ was line contracted into some $BST(G,H)$ at iteration $k+1$. After insertion, at the beginning of iteration $k+1$, $E$ and $B$ are part of a chain of degree 2 nodes connecting $G$ and $H$.

If \round{$E$} $<$ \round{$F$}, then $H=F$. Thus, $B$ should be line contracted into $BST(G,F)$ together with $E$. Similarly, if \round{$E$} $=$ \round{$F$} and \typeleaf{$F$}, then $H=F$ and $B$ should again be line contracted into $BST(G,F)$ together with $E$. Finally, if \round{$E$} $=$ \round{$F$} and \typeline{$F$}, then $E$, $B$, and $F$ become part of the same chain of degree-2 nodes between $G$ and $H$, where \parentis{$E$}{\parentis{$F$}{$(G,H)$}}. $E$, $B$, and $F$ should all be line contracted into $BST(G,H)$.

In all cases, to repair the data structure, we remove $(E,F)$ from $BST(G,H)$ and we add edge queries $(E,B)$ and $(B,F)$ to $BST(G,H)$, where \parentis{$E$}{$(G,H)$} before insertion.

\item [Case 6: $\round{E} < \round{B}+ 1 \leq \round{F}$]  \parentis{$B$}{$(E,F)$} implies $k \leq$ \round{$E$}, with equality implying \typeleaf{$E$}. Since \round{$E$} $< k+1$ we conclude $k =$ \round{$E$} and \typeleaf{$E$}. Before insertion, at the beginning of iteration $k$, $E$ had degree 1 and $B$ was part of a chain of degree 2 nodes connecting $E$ and $F$. The chain was line contracted into $BST(E,F)$ and then $E$ was leaf contracted into $LST(F)$. After insertion, $B$ has degree 3, connected to each of $E$ (degree 1), $A$ (degree 1), and $F$ by (possibly empty) chains of degree 2 nodes. Thus, $E$ and $A$ should now be leaf contracted into $LST(B)$. To repair the data structure, we remove $(E,F)$ from $LST(F)$ and we add $(E,B)$ to $LST(B)$.

In the event that \parentis{$F$}{{\sc null}} and {\sc round}$(F) = k+1$, we must reconfigure the top of the \lltree. We examine $LST(F)$, which has not yet been modified by the current call to \upc.   Without loss of generality, $\mu_1(F) = k$ and $\rho_1(F) = E$. If $\mu_2(F) < k$, then Lemma~\ref{lemma:value-lemma} implies $\mu_2(F) = \mu_3(F) = k-1$. Thus, at the beginning of iteration $k$, $B$ has degree 3 as described above and all of $E$, $A$, and $F$ have degree 1. They should be leaf contracted into $LST(B)$, leaving $B$ as the new root and sole node to survive until iteration $k+1$. However, if $\mu_2(F) = k$, then Lemma~\ref{lemma:value-lemma} implies $\mu_2(F) = \mu_3(F) = k$. Thus, at the beginning of iteration $k$, $B$ has degree 3 as described above, but $F$ has degree at least $3$ as well. After another leaf contraction, only $B$ and $F$ remain. For consistency, we choose $B$ as the root (replacing $F$). To repair the data structure, we replace $F$ with $B$ as the root node (we decrease \round{$F$} to $k$). We also insert $(B,F)$ into $LST(B)$.

Otherwise (\parent{$F$} $\neq$ {\sc null} or {\sc round}$(F) \neq k+1$), so we can apply \upc recursively to determine how $B$ and $F$ interact. Thus, the fate of edge query $(B,F)$ is determined by up-correcting $B$ at insertion point $F$.
\end{description}

What's left to show is that \upc operates in $O(h)$ time.  In all of the non-recursive cases (1-5, parts of 6) \upc performs at most $O(1)$ BST or LST operations.  Furthermore, all calls to \downc occur in non-recursive cases, so, by Lemma~\ref{lemma:down-correction} all the non-recursive cases meet the desired bound.  Now we address the recursive part of case 6.  Consider the path in the \lltree from the root of $\mathcal{T}$ to $B$.  This path includes a sequence of queries in $LST(F)$ down to $(F,E)$, a sequence of queries in $BST(E,F)$ ending at $B$.   We associate $O(1)$ tokens with each edge query on this path and $O(1)$ tokens with each node appearing in an edge query on this path so that the total number of tokens allocated is at most $O(h)$.  We use these tokens to pay for the operations collectively performed by {\em all the recursive calls}.  First, note that because the round of the insertion point always increases, we never consider a particular LST or BST more than twice on any complete execution of \upc.  Prior to the recursive call, we insert $(A,B)$ and $(B,E)$ into $LST(B)$ which takes $2$ tokens away from $B$ since insertion occurs at the head of the LST.   We remove edge $(E,F)$ from $LST(F)$ which takes $1$ token away from each edge query in $LST(F)$ leading down to $(E,F)$.  Finally, the splitting $BST(E,F)$ takes time proportional to the height of $BST(E,F)$, so we can pay for this operation using the tokens allocated to the edge queries appearing in $BST(E,F)$.  Since we only consider each of these data structures twice on any complete execution of \upc, we never run out of tokens.  Thus, we have the desired bound.  
\end{proof}

\subsection{Delete} \label{sec:delete}

In this section we describe an efficient method to restructure the \lltree when a node $A$ is deleted from $\mathcal{T}$.  We begin with a definition.

\begin{definition}
Let $B$ be a node in the \lltree such that \roundis{$B$}{$k$}.  $B$ is {\em fragile} if any of the following hold:
\begin{enumerate}
\item \parentis{$B$}{{\sc null}} and $\mu_1(B) = k-1 > k-2 = \mu_2(B) = \mu_3(B) \geq  \mu_4(B)$ or $\mu_1(B) = \mu_2(B) = \mu_3(B) = k-1 > \mu_4(B)$.
\item $\typeleaf{B}$ and $\mu_1(B) = \mu_2(B) = k-1 > \mu_3(B)$.
\item $\typeline{B}$ and $\mu_1(B) = k-1 > \mu_2(B)$.
\end{enumerate}
\end{definition}

A fragile node is one that barely adheres to Lemma~\ref{lemma:value-lemma}.  Fragile nodes play an important role in deletion because they are not robust to changes:  removing an edge query $(A,B)$ from $LST(B)$ where $B$ is a fragile node and $\round{A}=\mu_{1}(B)$ (or, in the case that \parentis{$B$}{{\sc null}} and $\mu_{1}(B) > \mu_2(B) = \mu_3(B) > \mu_4(B)$,  when $\round{A}=\mu_{3}(B)$) invalidate the contraction process.  This motivates the notion of instability:  a node $B$ becomes {\em unstable} if and only if we remove some edge query $(A,B)$ from $LST(B)$, or change the round of some node $A$ such that $(A,B)$ appears in $LST(B)$, and Lemma~\ref{lemma:value-lemma} no longer holds.  A node is {\em stable} if it adheres to Lemma~\ref{lemma:value-lemma}.  Table~\ref{tab:stabilize} gives a formal description of the {\sc stabilize} procedure which repairs the contraction process in a \lltree $\mathcal{T}$ when a single node $B$ becomes unstable.

\begin{lemma} \label{lemma:stabilize}
{\sc stabilize} correctly repairs the contraction process when a single node becomes unstable in $O(\log w) \cdot OPT$ time so that $\mathcal{T}=\mathcal{T'}$.
\end{lemma}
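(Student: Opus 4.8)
The plan is to establish correctness and the running time separately, mirroring the templates used for \upc (Lemma~\ref{lemma:up-correction}) and \downc (Lemma~\ref{lemma:down-correction}). For correctness I would argue by structural induction on \round{$B$}, with the case analysis organized exactly as in Table~\ref{tab:stabilize} and stratified by the three fragility conditions---that is, by \type{$B$} and \parent{$B$}. The driving observation is that $B$ becomes unstable precisely because it was \emph{fragile} and lost the single structure (an edge query at round $k-1$, or the round-$(k-1)$ support in the \parentis{$B$}{\textsc{null}} branch) that allowed it to survive to round $k$; hence the repaired contraction process must contract $B$ one round earlier. For each case I would verify that decreasing \round{$B$} and re-contracting $B$ accordingly---repositioning $B$'s edge query within the LST of \parent{$B$}, or splitting or merging the enclosing BST and floating $B$ into the round-$(k-1)$ component via \downc---reproduces exactly the component structure the static construction of Section~\ref{sec:construction} would build on the updated $S$, namely the tree $\mathcal{T'}$.

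Since altering \round{$B$} changes only the position of the single edge query relating $B$ to \parent{$B$} in the latter's search structure, the sole node that can be newly destabilized is \parent{$B$}, which furnishes the inductive step. The recursion terminates in one of two base cases: either the round change leaves \parent{$B$} stable, so Lemma~\ref{lemma:value-lemma} still holds and no further repair is needed, or $B$ climbs to the root, where the \parentis{$B$}{\textsc{null}} branch reassigns the root and halts. The crux of the correctness argument is verifying that \emph{no} node other than \parent{$B$} is disturbed at each step, so that the cascade is a single monotone walk upward through the component search structures, touching one component per round; this is where the fragility bookkeeping must be handled with care, exactly as in the delicate case analysis for {\sc insert}.

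The main obstacle is the time bound. Because the round of the node being stabilized strictly increases as the cascade climbs toward the root, the walk visits at most one component per round and hence $O(\log w)$ components in all; it therefore suffices to bound the work per round by $O(OPT)$. A leaf-type step merely repositions one edge query in an LST, costing $O(\Delta(S)) = O(OPT)$, while a line-type step performs a constant number of BST operations together with one call to \downc. The subtlety is this \downc call: Lemma~\ref{lemma:down-correction} gives only $O(h)$ in general, and $O(\log w)$ such calls would yield $O(h\log w)$. I would resolve this by observing that each stabilization floats its node down by exactly one round, so the \downc invoked here traverses only the $O(1)$ BSTs bordering that round, each of height $O(\log D) = O(OPT)$; every step is then genuinely $O(OPT)$ and the total is $O(\log w)\cdot OPT$. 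Should this localization prove too brittle, the fallback is the amortized token argument used for \upc: place $O(1)$ tokens on each edge query and node along the root-to-leaf path through $B$, for $O(h)$ tokens in all, and charge the entire cascade against them. As the rounds of the stabilized nodes strictly increase, no component is revisited more than twice, the tokens suffice, and the running time is $O(h) = O(\log w)\cdot OPT$.
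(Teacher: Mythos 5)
Your primary argument is essentially the paper's proof. The paper also proceeds by structural induction with the cases stratified by \type{$B$} and \parent{$B$} (root, leaf, line), observes that only \parent{$B$} can be newly destabilized (in the leaf case, and only when the replacement query does not carry the same {\sc round}, {\sc type}, and {\sc parent} as the one it lost), so the cascade climbs through strictly increasing rounds and touches $O(\log w)$ components; and it resolves the \downc subtlety exactly as you do by localization: each \downc call inside {\sc stabilize} is on $BST(X,Y)$ and $BST(Y,Z)$ with \round{$Y$} $= k-1$, where $BST(X,Y)$ has round $k-1$ and $BST(Y,Z)$ has round at most $k$, so \downc recurses at most once before hitting a base case, costing $O(1)$ BST operations, i.e., $O(OPT)$ per round and $O(\log w)\cdot OPT$ overall.

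One warning about your fallback: it is not viable, and the paper says so explicitly. The token-charging argument for \upc (Lemma~\ref{lemma:up-correction}) depends on every manipulated structure lying along a single root-to-leaf path of $\mathcal{T}$, against whose $O(h)$ edge queries the tokens are allocated. The paper remarks, immediately after Theorem~\ref{thm:delete}, that when {\sc stabilize} cascades through a sequence of LSTs, the resulting \downc calls may operate on BSTs having \emph{no} ancestor/descendant relationship in the tree---which is precisely why deletion is the one operation for which no $O(h)$ bound is claimed, only $O(\log w)\cdot OPT$. Had you leaned on the fallback rather than the localization argument, the time bound would have a genuine gap; as written, your main route carries the proof and the fallback should simply be discarded.
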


\begin{proof}
We begin by proving correctness by structural induction.  Suppose $B$ is unstable.

\begin{table}[tbh]
\begin{center}
\caption{{\sc stabilize} : Given a \lltree $\mathcal{T}$ with a single unstable node $B$, repair $\mathcal{T}$ so that every node is stable. \label{tab:stabilize}}
\scriptsize
\renewcommand{\tabcolsep}{2mm}
\renewcommand{\arraystretch}{1.4}
\begin{tabular}{|l|l|}
\hline

\multicolumn{2}{|l|}{{\sc stabilize}$(B)$ } \\ \hline \hline

\multicolumn{1}{|c|}{{\bf Updated Properties}} & \multicolumn{1}{c|}{{\bf Data Structure Updates}} \\ \hline \hline

\multicolumn{2}{|l|}{{\bf Case 1:} \parentis{$B$}{{\sc null}}, and (1) $\mu_1(B) = \mu_2(B) = \mu_3(B) = k-2$, or} \\
\multicolumn{2}{|l|}{(2) $\mu_1(B) = k-1 > \mu_2(B) = k-2 > \mu_3(B) = k-3$, or} \\
\multicolumn{2}{|l|}{(3) $\mu_1(B) = \mu_2(B) = k-1 > k-2 = \mu_3(B)$} \\ \hline \hline

\round{$B$} $\gets$ $\mu_3(B) + 1$ & \\ \hline
\multicolumn{2}{|c|}{{\bf if} $\mu_1(B) = k-1$ {\bf then}} \\ \hline
\multicolumn{2}{|c|}{Let $\rho_1(B) = M$ and $\rho_2(B) = N$} \\ \hline
\multirow{3}{*}{\round{$M$} $\gets$ $\mu_2(B) + 1$} & remove edges $(B,M)$ and $(B,N)$ from $LST(B)$ \\
& $BST(M,B) \gets$ \downc $(N,B)$ and $(B,M)$ \\
\parent{$M$} $\gets$ {\sc null}& create edge $(M,N)$ from $BST(M,N)$ and insert it into $LST(M)$ \\

& $M$ becomes new root of the \lltree \\ \hline \hline

\multicolumn{2}{|l|}{{\bf Case 2:} \typeleaf{$B$}, \parentis{$B$}{$E$}, and $\mu_1(B) = k-1 > k-2 = \mu_2(B)$} \\ \hline \hline

\multirow{2}{*}{\round{$B$} $\gets$ $\mu_2(B) + 1$} & remove edge $(B,E)$ from $LST(E)$ \\
& remove edge $(M,B)$ from $LST(B)$ \\ \hline
\multicolumn{2}{|c|}{Let $\rho_1(B) = M$ and let $(B,N)$ be the edge query representing $B$ in $BST(B,E)$} \\ \hline
\multicolumn{2}{|c|}{{\bf if} $N \neq E$ and \roundis{$N$}{$k$} {\bf then}} \\ \hline
\multirow{2}{*}{\type{$N$} $\gets$ {\sc leaf}} & remove edge $(B,N)$ from $BST(B,E)$ which becomes $BST(N,E)$ \\
\multirow{2}{*}{\parent{$N$} $\gets$ $E$} & create edge $(N,E)$ from $BST(N,E)$ and insert it into $LST(E)$ \\
\multirow{2}{*}{\parent{$M$} $\gets$ $N$} & $BST(M,N) \gets$ \downc $BST(M,B)$ and $BST(B,N)$ \\
& create edge $(M,N)$ from $BST(M,N)$ and insert it into $LST(N)$ \\ \hline

\multicolumn{2}{|c|}{{\bf else}} \\ \hline
\multirow{2}{*}{\parent{$M$} $\gets$ $E$} & $BST(M,E) \gets$ \downc $BST(M,B)$ and $BST(B,E)$ \\
& create edge $(M,E)$ from $BST(M,E)$ and insert it into $LST(E)$ \\ \hline
\multicolumn{2}{|c|}{{\bf if} $E$ is {\em unstable} {\bf then} {\sc stabilize}($E$)} \\ \hline \hline

\multicolumn{2}{|l|}{{\bf Case 3:} \typeline{$B$}, \parentis{$B$}{$(E,F)$}, and $\mu_1(B) = k-2$} \\ \hline \hline

\multirow{3}{*}{\round{$B$} $\gets$ $\mu_1(B) + 1$} & split $BST(E,F)$ into $BST(E,B)$ and $BST(B,F)$ \\
& $BST(E,F) \gets$ \downc $BST(E,B)$ and $BST(B,F)$ \\
& attach new $BST(E,F)$ to edge $(E,F)$ \\

\hline
\end{tabular}
\end{center}
\end{table}

\subsubsection*{Case 1: \parentis{$B$}{{\sc null}}}

(1) Suppose $\mu_1(B) = k-1 > k-2 = \mu_2(B) = \mu_3(B)$. Before the change, $M = \rho_1(B)$ and $B$ were the only two nodes that survived until the final leaf contraction at {\sc round} $k-1$. After deletion, we may have the following two anomalies, which may appear when $B$ is a fragile node:

(a) Suppose $\mu_1(B) = \mu_2(B) = \mu_3(B) = k-2$. This happens if $M$ changes.  Now $B$ alone survives to iteration $k-1$. We set {\sc round}$(B) = \mu_3(B) + 1 = k-1$ and keep $B$ as the root node.

(b) Suppose $\mu_1(B) = k-1 > k-2 = \mu_2(B) > k-3 = \mu_3(B) = \mu_4(B)$. This happens if the changed node is $\rho_2(B)$ or $\rho_{3}(B)$.   Call this node $N$. Now, at the leaf contraction step of iteration $k-2$, $M$ is a full node and all its children have degree 1 (including $B$). Thus, we set \roundis{$B$}{$\mu_3(B) + 1 = k - 2$}, \roundis{$M$}{$\mu_2(B) + 1 = k-1$}, and install $M$ as the new root. To repair the data structure, we remove edges $(B,N)$ and $(M,B)$ from $LST(B)$ and we create edge $(M,N)$ which we insert into $LST(M)$ (the new root) after we \downc $(M,B)$ and $(B,N)$.

(2)  Suppose $\mu_1(B) = \mu_2(B) = \mu_3(B) = k-1$. After deletion, we may have only one anomaly which may appear when $B$ is a fragile node: $\mu_1(B) = \mu_2(B) = k-1 > k-2 = \mu_3(B)$. This happens if the changed node is $\rho_3(B)$. Let $\rho_2(B) = N$. After deletion, at the beginning of iteration $k-1$, $M$ and $N$ have degree 1 and are connected by a chain of degree-2 nodes containing $B$. We line contract $B$ into $BST(M,N)$ and choose $M$ as root. Thus we set \roundis{$B$}{$\mu_3(B) + 1 = k - 1$}, \roundis{$M$}{$\mu_2(B) + 1 = k$}. To repair the data structure, we remove edges $(B,N)$ and $(M,B)$ from $LST(B)$ and we create edge $(M,N)$ which we insert into $LST(M)$ (the new root) after we \downc $(M,B)$ and $(B,N)$.

\subsubsection*{Case 2: \parentis{$B$}{$E$}}

Let (B,N) be the edge query representing $B$ in $BST(B,E)$. If $N \neq E$ (i.e. $BST(B,E)$ was not empty) and $\roundis{N}{k}$, then after deletion, at the beginning of iteration $k-1$, node $\rho_1(B)  = M$ has degree 1 and is connected to node $N$ through a chain of degree-2 nodes containing $B$. We line contract $B$ into $BST(M,N)$ and leaf contract $M$ into $LST(N)$. Subsequently, at the beginning of iteration $k$, $N$ has degree 1 and the path $N..E$ is a (possibly void) chain of degree 2 nodes. We line contract this path into $BST(N,E)$ and leaf contract $N$ into $LST(E)$. To repair the data structure, we remove edge $(B,N)$ from $BST(B,E)$ which yields $BST(N,E)$. Then, we set {\sc round}$(B) = \mu_2(B) + 1 = k-1$ and let $BST(M,N)$ be the result of \downc $BST(M,B)$ and $BST(N,B)$.  We create a new edge $(M,N)$ with $BST(M,N)$ attached and insert it into $LST(N)$. Finally, we create a new edge $(N,E)$ with $BST(N,E)$ attached and we insert it into $LST(E)$. Since $N$ effectively replaces $B$ (with the same {\sc round}, {\sc type}, and {\sc parent}) in $LST(E)$, $E$ cannot become unstable and the correction process is complete.

Otherwise, after the change to $LST(B)$, at the beginning of iteration $k-1$, node $\rho_1(B)  = M$ has degree 1, and is connected to node $E$ through a chain of degree-2 nodes containing $B$. We line contract $B$ into $BST(M,E)$ and leaf contract $M$ into $LST(E)$. To repair the data structure, we set {\sc round}$(B) = \mu_2(B) + 1 = k-1$ and let $BST(M,E)$ be the result of \downc $BST(M,B)$ and $BST(B,E)$.  We create a new edge $(M,E)$ with $BST(M,E)$ attached and replace edge query $(B,E)$ in $LST(E)$ with $(M,E)$.  Since $E$ has now lost a node of {\sc round} $k$ (namely $B$) from its $LST$, it may be unstable.  If this is the case, then recursively stabilizing $E$ finishes the correction process.

\subsubsection*{Case 3: \parentis{$B$}{$(E,F)$}}

After the change to $LST(B)$, at the beginning of iteration $k-1$, node $B$ has degree 2. We know $B$ is line contracted on the path $E$ to $F$. To repair the data structure, we split $BST(E,F)$ into $BST(E,B)$ and $BST(B,F)$ and let the new $BST(E,F)$ be the result of \downc $BST(E,B)$ and $(B,F)$.  where {\sc round}$(B) = \mu_1(B) + 1 = k-1$.

What's left to show is that we can perform these operations in $O(\log w) \cdot OPT$ time.  Each recursive call to {\sc stabilize} performs at most $O(1)$ BST operations and at most 1 call to \downc.  The call to \downc always happens with $BST(X,Y)$ and $BST(Y,Z)$ where $\roundis{Y}{k-1}$, the round of $BST(X,Y)$ is $k-1$ and the round of $BST(Y,Z)$ is at most $k$.  Thus, \downc will recursive at most once before hitting a base case.  This means we perform at most $O(1)$ BST operations for each \downc call.  Since there are at most $O(\log w)$ recursive calls and each BST operation takes at most $O(OPT)$ time, we have the desired bound.  
\end{proof}

\begin{table}[tb]
\begin{center}
\scriptsize
\caption{{\sc delete:} given a node $A$, remove $A$ from $\mathcal{T}$ and repair the contraction process so that $\mathcal{T}=\mathcal{T'}$ \label{tab:delete}}
\renewcommand{\tabcolsep}{2mm}
\renewcommand{\arraystretch}{1.4}
\begin{tabular}{|l|l|}
\hline
\multicolumn{2}{|l|}{{\sc delete}$(A)$} \\ \hline
\multicolumn{1}{|c|}{{\bf Updated Properties}} & \multicolumn{1}{c|}{{\bf Data Structure Updates}} \\ \hline \hline

& remove edge $(A,B)$ from $H_{S'}$ \\
& replace $A$ with $B$ nominally \\
& merge $LST(A \leftarrow B)$ and $LST(B)$ \\ \hline
\multicolumn{2}{|c|}{{\bf if} \round{$B$} $<$ \round{$A$} or} \\
\multicolumn{2}{|c|}{\round{$B$} $=$ \round{$A$} and \typeleaf{$A$}, \typeline{$B$} {\bf then}} \\ \hline
\type{$B$} $\gets$ \type{$A$} & \\
\round{$B$} $\gets$ \round{$A$} & insert $LST(B)$ at location $A$ \\
\parent{$B$} $\gets$ \parent{$A$} & \\ \hline
\multicolumn{2}{|c|}{{\bf else} keep $LST(B)$ at location $B$} \\ \hline \hline

\multicolumn{2}{|c|}{{\bf finally if} $B$ is $unstable$ {\bf then} {\sc stabilize} $B$} \\

\hline
\end{tabular}
\end{center}
\end{table}

With {\sc stabilize} in hand, we can formally define {\sc delete} which appears in Table~\ref{tab:delete} and prove its correctness and time bound.

\begin{theorem} \label{thm:delete}
{\sc Delete} takes $O(\log w) \cdot OPT$ time.
\end{theorem}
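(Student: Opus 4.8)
The plan is to prove correctness first and then read the time bound off the cost of the individual steps, exactly as in the proofs of {\sc Insert} and the preceding operations. The goal is to show that {\sc Delete} transforms $\mathcal{T}$ into $\mathcal{T'}$, the \lltree that the static construction would build on $S' = S \setminus \{A\}$, while touching only a root-to-leaf amount of structure plus a single call to {\sc stabilize}. The guiding observation is that {\sc Delete} inverts the construction: by Property~\ref{prop:pred} the node $A$ has a unique predecessor $B$ in $H_S$, and removing $A$ reconnects every successor of $A$ to $B$. Hence in the \lltree the whole role of $A$ is absorbed by $B$, which is precisely what renaming $A$ to $B$ and performing ``merge $LST(A \leftarrow B)$ and $LST(B)$'' accomplishes: the children $A$ had accumulated now become children of $B$.

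For correctness I would first identify $B$ as the predecessor of $A$, locate $A$ with {\sc Test Membership}, and read $B$ off $H_S$. The central claim is that after deleting the edge $(A,B)$ and merging the two LSTs, the contraction process on $T_{S'}$ agrees with the stored structure at every node except possibly $B$: every other node retains the same neighborhood, \round, and \type in $S'$ as in $S$, since its contraction happens below $A$ and $B$; only $B$'s $LST$ has changed. The case split in Table~\ref{tab:delete} then fixes $B$'s structural position: if \round{$B$} $<$ \round{$A$}, or the rounds tie with \typeleaf{$A$} and \typeline{$B$}, then $A$ survived at least as long as $B$ did in the old process, so $B$ inherits \round{$A$}, \type{$A$}, and \parent{$A$} and is spliced into $A$'s former slot; otherwise $B$ keeps its own slot. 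In both branches I would check, against Lemma~\ref{lemma:value-lemma}, that the merged $LST(B)$ either already satisfies the lemma---so $\mathcal{T}=\mathcal{T'}$ immediately---or violates it in exactly one of the ways that make $B$ \emph{unstable}. Since $B$ is then the unique unstable node, Lemma~\ref{lemma:stabilize} guarantees that the closing call {\sc stabilize}$(B)$ repairs the contraction process and yields $\mathcal{T}=\mathcal{T'}$.

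The time bound then follows by accounting. Locating $A$ costs $O(h)$ by Theorem~\ref{theorem:test-membership}, and reading $B$ from $H_S$ is constant time. Renaming $A$ to $B$ and merging the two linked lists costs time proportional to their combined height, which is $O(h)$, and splicing $LST(B)$ into $A$'s slot is a constant number of pointer updates. The only ingredient exceeding $O(h)$ is the single call {\sc stabilize}$(B)$, which runs in $O(\log w) \cdot OPT$ by Lemma~\ref{lemma:stabilize}. Because Theorem~\ref{thm:height} bounds the height at all times by $h = O(\log w) \cdot OPT$, every preliminary step is itself within $O(\log w) \cdot OPT$, and so is the total.

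The main obstacle is the correctness claim that the merge leaves \emph{only} $B$ possibly unstable and that its instability is of a form {\sc stabilize} is designed to handle. Establishing this requires mirroring the \round/\type case analysis of Table~\ref{tab:delete} and, in each branch, tracking which of $\mu_1(B), \mu_2(B), \dots$ can shift when $A$'s children are folded in and $A$ itself disappears. The delicate subcases are \parentis{$B$}{{\sc null}} and the equal-round case, where the merge can perturb the top of the \lltree or change whether $B$ is line- or leaf-contracted; I would handle these through the fragility characterization, arguing that a single deletion can push $B$ across at most one of the boundary inequalities of Lemma~\ref{lemma:value-lemma}, which is exactly the single-node instability that {\sc stabilize} is proved to resolve.
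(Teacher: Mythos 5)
Your proposal is correct and follows the paper's own proof essentially step for step: merge $LST(A)$ into the predecessor $B$'s LST, use the {\sc round}/{\sc type} comparison of Table~\ref{tab:delete} to decide whether $B$ replaces $A$ (inheriting its {\sc round}, {\sc type}, and {\sc parent}) or absorbs it, verify via the case analysis against Lemma~\ref{lemma:value-lemma} that $B$ is the only node that can become unstable, and let the closing call to {\sc stabilize} repair the contraction process, its $O(\log w) \cdot OPT$ cost (Lemma~\ref{lemma:stabilize}) dominating the $O(h)$ preliminary LST work. The delicate subcases you single out---the root case and the equal-round case---are precisely the ones the paper's proof walks through, so there is no substantive difference in approach.
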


\begin{proof}
Let $A$ be the node we wish to delete and let $B$ be its predecessor in $H_{S}$.  Let $S'=S \setminus \{A\}$.   In $T_{S'}$, all the successors of $A$ are successors of $B$.  Thus, as a starting point in deletion, we must
\begin{enumerate}
\item remove the edge $(A,B)$ from $H_{S}$;
\item replace $A$ with $B$ in all edges $(A,X)$ in $H_{S}$ where $X \neq B$; and
\item insert every edge query $(A,X)$ from $LST(A)$ into $LST(B)$ where $X \neq B$.
\end{enumerate}
If, before deletion, either (1) $\round{A} > \round{B}$, or (2) $\round{A} = \round{B}$ and $\typeleaf{A}, \typeline{B}$, then $B$ essentially replaces $A$ in the remaining rounds of the contraction process so $\type{B}=\type{A}$, $\round{B}=\round{A}$, and $\parent{B}=\parent{A}$.  However, if $\round{A}$ $\leq$ $\round{B}$, then $B$ lasts as long as $A$ in the contraction process so there is no need, initially, to update its properties.  As with insertion, these steps can be performed in $O(h)$ time.  Of course, deleting $A$ may cause $B$ to become unstable.  We analyze when this occurs and appeal to the {\sc stabilize} procedure for correctness.

\begin{description}
\item [$\round{B} < \round{A}$ or $\round{A} = \round{B}$] {\bf  and $\typeleaf{A}$ and  $\typeline{B}$:}  In this case, $B$ was either (a) line contracted between $A$ and some other node $M$ or (b) leaf contracted into $A$. Let \roundis{$A$}{$k$} before deletion.
\begin{description}
\item [(a)] Suppose $B$ was line contracted before deleting $A$.  We must remove $(B,A)$ from $BST(M,A)$ which, after deletion, becomes $BST(M,B)$.  Furthermore, after deletion, $LST(B)$ contains all the edge queries from $LST(A)$.  Since $\round{B} < \round{A}$ before deletion, $B$ has a higher {\sc round} after deletion.  In fact, $B$ survives exactly as long as $A$ did before deletion, effectively replacing $A$ in all iterations of the contraction algorithm.  Because we replaced $A$ with $B$ in all edges $(A,X)$ in $H_{S}$, the contraction process has been corrected and we are finished.
\item [(b)] Suppose that $B$ was leaf contracted into $A$. Then $LST(B)$ contains all the edge queries from $LST(A)$ except for $(A,B)$.  If $A$ was a fringe node before deletion and \roundis{$B$}{$k-1$} then $B$ will not be a full node at iteration $k-1$.  Instead, it will have degree 2 which violates Lemma~\ref{lemma:value-lemma}.  In this case, we must continue to correct the contraction process which we do through the {\sc patch} procedure.  Otherwise, Lemma~\ref{lemma:value-lemma} holds and the contraction process is repaired.
\end{description}

\item [$\round{B} > \round{A}$ or $\round{B} = \round{A}$ and $\typeline{A}$:] Here, $B$ absorbs $A$ in the search tree. $B$ was either (a) line contracted together with $A$ into some $BST$, (b) the {\sc parent} of $A$, or (c) either $E$ or $F$ in the event that \parentis{$A$}{$(E,F)$}.

(a) Suppose $B$ was line contracted together with $A$. Then $B$ holds all nodes formerly in either $LST(A)$ or $LST(B)$. By Lemma~\ref{lemma:value-lemma}, after deletion $\mu_1(B) = \mu_2(B) = k-1$, and so $B$ is line contracted as before. The contraction process doesn't change any further.

(b) Suppose $B$ was \parent{$A$}. There is a possibility that before insertion \roundis{$A$}{$k-1$} and $B$ was a fringe node. If this is the case, then $B$ will not be a full node at iteration $k-1$; instead, it will have degree 2. We address this situation in Anti-Up Correction. Otherwise, Lemma~\ref{lemma:value-lemma} holds for $B$ and $B$ effectively absorbs $A$.

(c) Suppose \parent{$A$} was $(B,F)$ w.l.o.g. Then we simply add nodes to $LST(B)$ of {\sc round} smaller than \round{$B$}, without removing any others. The contraction process does not change any further.

\end{description}

To summarize, if $B$ was contracted before $A$, then $B$ replaces $A$: we put the merged $LST(B)$ in the position formerly occupied by $A$ and $B$ inherits $A$'s attributes: {\sc type}, {\sc round}, and {\sc parent}. Otherwise, $B$ absorbs $A$ and we put the merged $LST(B)$ in the position formerly occupied by $B$, while $B$ keeps its own attributes (see Table~\ref{tab:delete} for a succinct description of the algorithm).

To prove the bound on the running time, we observe that {\sc delete} makes at most $O(1)$ LST operations, each of which is at most $O(h)$.  Thus, the call to {\sc stabilize} dominates the running time of {\sc delete}.  Therefore, by Lemma~\ref{lemma:stabilize} we have the desired bound. 
\end{proof}

Deletion is the only operation for which we do not have an $O(h)$ bound on the running time.  Here is the problem:  suppose we want to delete $A$ and the path in $\mathcal{T}$ runs only through LSTs.  If we recursively need to call {\sc stabilize} on nodes appearing these LSTs, then each call to \downc may operate on a BSTs which have no ancestor / descendent relationship in the tree---with insertion, this never happens because any time we manipulate a BST, it's on path from the root down to the predecessor of the node we wish to insert.

\section{Empirical Results} \label{sec:empirical}

\begin{figure}[tb]
\subfigure[]{\includegraphics[scale=0.38]{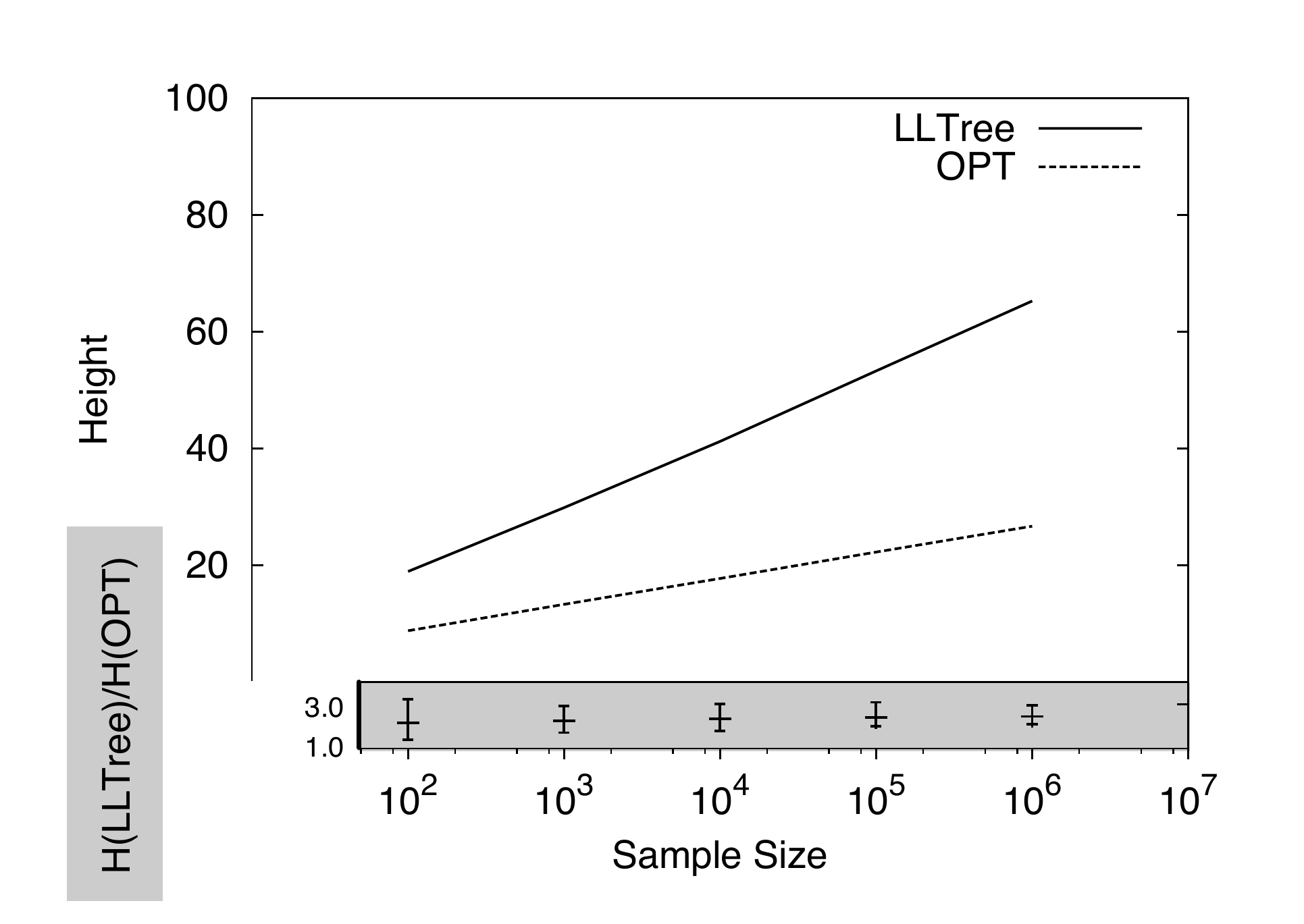}}
\subfigure[]{\includegraphics[scale=0.38]{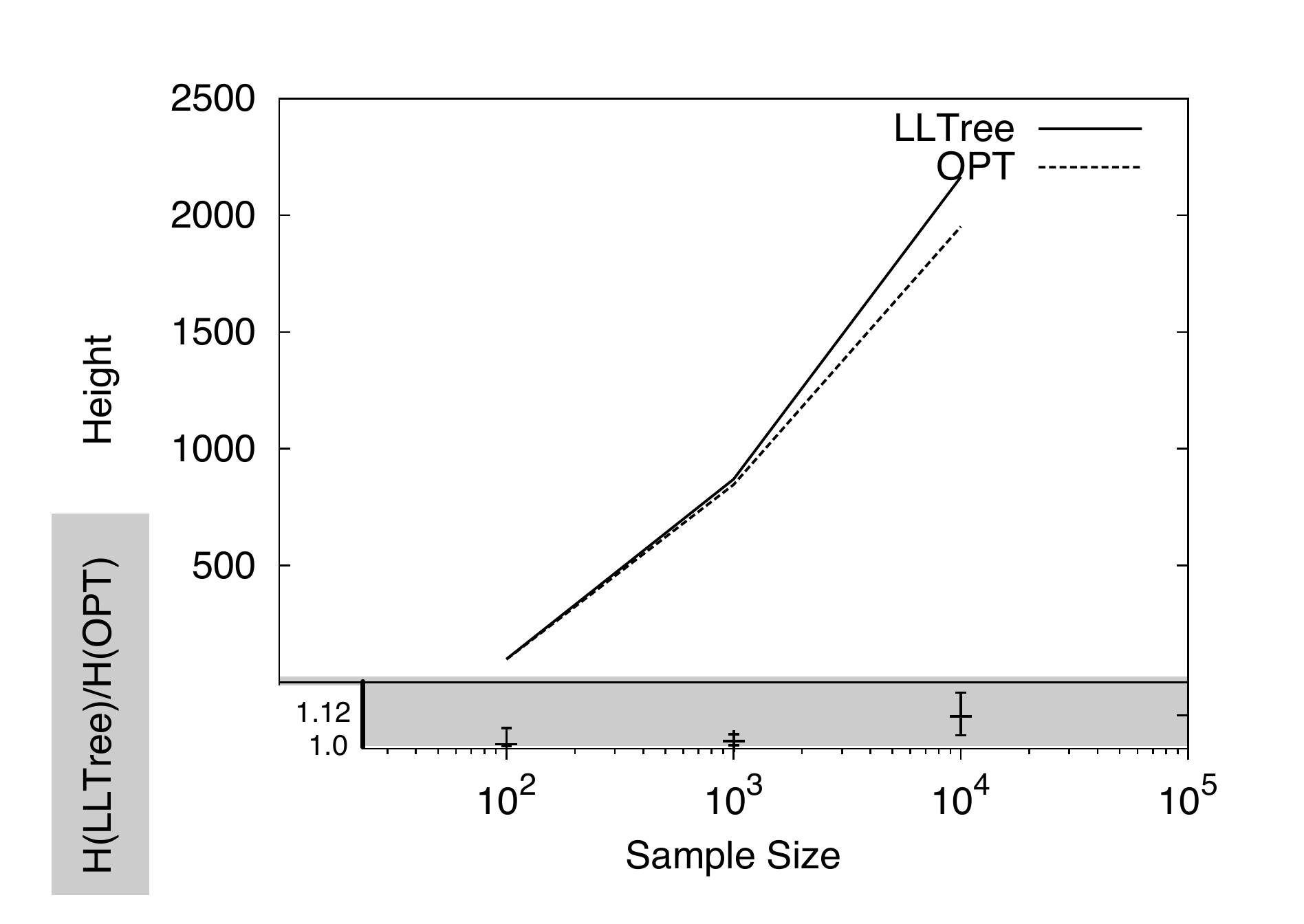}}
\caption{Results comparing the height of the \lltree to the optimal static search search tree on (a) random tree-like partial orders; and (b) a large portion of the UNIX filesystem.   The non-shaded areas show the average height of both the \lltree and optimal static algorithm.  The shaded area shows their ratio (as well as the min and max values over the 1000 iterations). \label{fig:exps}    }
\end{figure}

Here we show the results of two experiments which compare the height of a \lltree with the height of an optimal {\em static} search tree for a tree-like set $S$.  For these experiments, we consider the height of a search tree to be the maximum number of edge queries performed on any root-to-leaf path.  So any dynamic edge query in a \lltree counts as two edge queries in our experiments.

In the first experiment, we examine tree-like partial orders of
increasing size $n$.  For each $n$, we independently sample $1000$
partial-orders uniformly at random from all tree-like partial orders
with $n$ nodes.\footnote{In keeping with the uniform model for general
partial orders defined in~\cite{carmo-etal:tcs2004}, we assume
$\mathcal{P}(n)$ is the set of all rooted, labeled, oriented trees on
$1, \ldots, n$ such that every root-to-leaf path has labels that
increase.  The set $\mathcal{P}(n)$ is in one-to-one correspondence
with the set of {\em increasing trees} (these tree are also known as
{\em heap-ordered} and {\em recursive
trees})~\cite{meir-moon:cjm1978}.  The expected worst-case and average
height of a random increasing tree is $\Theta(\log
n)$~\cite{bergeron-etal:caap1992,drmota:aoc2009,grimmett:branching}.  This is in contrast
to general partial orders which, on average, have height 3.}  The
non-shaded area of Figure~\ref{fig:exps} (a) shows the heights of the
\lltree and the optimal static tree averaged over the samples.  The
important thing to note is that both appear to grow linearly in $\log n$.
We suspect that the differing slopes come mainly from the overheard of dynamic edge queries,
and we conjecture that the \lltree performs within a small constant factor of $OPT$ with high
probability in the uniform tree-like model. The shaded area of Figure~\ref{fig:exps} (a) shows the
average, minimum, and maximum approximation ratio over the samples.

Although the first experiment shows that the \lltree is competitive with the optimal static tree on {\em average} tree-like partial orders, it may be that, in practice, tree-like partial orders are distributed non-uniformly. Thus, for our second experiment, we took the {\tt /usr} directory of an Ubuntu 10.04 Linux distribution as our universe $\mathcal{U}$ and independently sampled 1000 sets of size $n=100$, $n=1000$, and $n=10000$ from $\mathcal{U}$ respectively.  The {\tt /usr} directory contains 23,328 nodes, of which 17,340 are leaves.  The largest directory is {\tt /usr/share/doc} which contains 1551 files.  The height of {\tt /usr} is 12.  We believe that this directory is somewhat representative of the use cases found in our motivation.  As with our first experiment, the shaded area in Figure~\ref{fig:exps} (b) shows the ratio of the height of the \lltree to the height of the optimal static search tree, averaged over all 1000 samples for each sample size.  The non-shaded area shows the actual heights averaged over the samples.  The \lltree is again very competitive with the optimal static search tree, performing at most a small constant factor more queries than the optimal search tree.

\subsubsection*{{\bf Acknowledgements}}

We would like to thank T. Andrew Lorenzen for his help in running the experiments discussed in Section~\ref{sec:empirical}.

\bibliographystyle{splncs}
\bibliography{../arxiv}

\input{texnames.sty}
\begin{thebibliography}{10}

\bibitem{ben-asher-etal:sjc1999}
Ben-Asher, Y., Farchi, E., Newman, I.:
\newblock Optimal search in trees.
\newblock SIAM J. Comput. \textbf{28}(6) (1999)  2090--2102

\bibitem{carmo-etal:tcs2004}
Carmo, R., Donadelli, J., Kohayakawa, Y., Laber, E.S.:
\newblock Searching in random partially ordered sets.
\newblock Theor. Comput. Sci. \textbf{321}(1) (2004)  41--57

\bibitem{mozes-etal:soda2008}
Mozes, S., Onak, K., Weimann, O.:
\newblock Finding an optimal tree searching strategy in linear time.
\newblock In: SODA '08: Proceedings of the nineteenth annual ACM-SIAM symposium
  on Discrete algorithms, Philadelphia, PA, USA, Society for Industrial and
  Applied Mathematics (2008)  1096--1105

\bibitem{onak-parys:focs2006}
Onak, K., Parys, P.:
\newblock Generalization of binary search: Searching in trees and forest-like
  partial orders.
\newblock In: FOCS '06: Proceedings of the 47th Annual IEEE Symposium on
  Foundations of Computer Science, Washington, DC, USA, IEEE Computer Society
  (2006)  379--388

\bibitem{dereniowski-dariusz:dam2008}
Dereniowski, D.:
\newblock Edge ranking and searching in partial orders.
\newblock Discrete Appl. Math. \textbf{156}(13) (2008)  2493--2500

\bibitem{jacobs-etal:icalp2010}
Jacobs, T., Cicalese, F., Laber, E.S., Molinaro, M.:
\newblock On the complexity of searching in trees: Average-case minimization.
\newblock In: ICALP 2010. (2010)  527--539

\bibitem{laber-molinaro:icalp2008}
Laber, E., Molinaro, M.:
\newblock An approximation algorithm for binary searching in trees.
\newblock In: ICALP '08: Proceedings of the 35th international colloquium on
  Automata, Languages and Programming, Part I, Berlin, Heidelberg,
  Springer-Verlag (2008)  459--471

\bibitem{daskalakis-etal:soda2009}
Daskalakis, C., Karp, R.M., Mossel, E., Riesenfeld, S., Verbin, E.:
\newblock Sorting and selection in posets.
\newblock In: SODA '09: Proceedings of the Nineteenth Annual ACM -SIAM
  Symposium on Discrete Algorithms, Philadelphia, PA, USA, Society for
  Industrial and Applied Mathematics (2009)  392--401

\bibitem{daskalakis-etal:arxiv2007}
Daskalakis, C., Karp, R.M., Mossel, E., Riesenfeld, S., Verbin, E.:
\newblock Sorting and selection in posets.
\newblock CoRR \textbf{abs/0707.1532} (2007)

\bibitem{laber-nogueira:endm2001}
Laber, E., Nogueira, L.T.:
\newblock Fast searching in trees.
\newblock Electronic Notes in Discrete Mathematics \textbf{7} (2001)  1--4

\bibitem{meir-moon:cjm1978}
Meir, A., Moon, J.W.:
\newblock On the altitude of nodes in random trees.
\newblock Canadian Journal of Mathematics \textbf{30} (1978)  997--1015

\bibitem{bergeron-etal:caap1992}
Bergeron, F., Flajolet, P., Salvy, B.:
\newblock Varieties of increasing trees.
\newblock In: CAAP '92: Proceedings of the 17th Colloquium on Trees in Algebra
  and Programming, London, UK, Springer-Verlag (1992)  24--48

\bibitem{drmota:aoc2009}
Drmota, M.:
\newblock The height of increasing trees.
\newblock Annals of Combinatorics \textbf{12} (2009)  373--402
  10.1007/s00026-009-0009-x.

\bibitem{grimmett:branching}
Grimmett, G.R.:
\newblock Random labelled trees and their branching networks.
\newblock J. Austral. Math. Soc. Ser. A \textbf{30}(2) (1980/81)  229--237

\end{thebibliography}

\newpage

\appendix

\section{Figures for Insertion and Deletion} \label{sec:figures}

Here we provide figures describing each case of the Insert, Up Correction, Down Correction, Stabilize, Transition, and Delete procedures described in the main text (Table~\ref{tab:figures}).

\begin{table}[h]
\begin{center}
\renewcommand{\tabcolsep}{2mm}
\renewcommand{\arraystretch}{1.4}
\begin{tabular}{|c|c|c|}
\hline

\multicolumn{1}{|c|}{{\bf Procedure}} & \multicolumn{1}{c|}{{\bf Case}} & \multicolumn{1}{c|}{{\bf Figure}} \\ \hline \hline

\multirow{6}{*}{Up Correction} & 1 & Fig.~\ref{fig:up1}  \\
& 2 & Fig.~\ref{fig:up2}  \\
& 3 & Fig.~\ref{fig:up3}  \\
& 4 & Fig.~\ref{fig:up4}  \\
& 5 & Fig.~\ref{fig:up5}  \\
& 6 & Fig.~\ref{fig:up6}  \\ \hline \hline

\multirow{5}{*}{Down Correction} & 1 & Fig.~\ref{fig:down1}  \\
& 2 & Fig.~\ref{fig:down2}  \\
& 3 & Fig.~\ref{fig:down3}  \\
& 4 & Fig.~\ref{fig:down4}  \\
& 5 & Fig.~\ref{fig:down5}  \\ \hline \hline

\multirow{1}{*}{Transition} & -- & Fig.~\ref{fig:transition1},~\ref{fig:transition2}  \\ \hline \hline

\multirow{5}{*}{Insert} & 1 & Fig.~\ref{fig:insert1a},~\ref{fig:insert1b}  \\
& 2 & Fig.~\ref{fig:insert2}  \\
& 3 & Fig.~\ref{fig:insert3}  \\
& 4 & Fig.~\ref{fig:insert4}  \\
& 5 & Fig.~\ref{fig:insert5}  \\ \hline \hline

\multirow{3}{*}{Stabilize} & 1 & Fig.~\ref{fig:stabilize1}  \\
& 2 & Fig.~\ref{fig:stabilize2}  \\
& 3 & Fig.~\ref{fig:stabilize3}  \\ \hline \hline

\multirow{1}{*}{Delete} & -- & Fig.~\ref{fig:delete1} \\

\hline
\end{tabular}
\end{center}
\vspace{2mm}
\caption{Index of figures describing each case of the dynamic operations on the \lltree. \label{tab:figures}}
\end{table}

\begin{figure}[tb]
\begin{center}
\includegraphics[width=\textwidth]{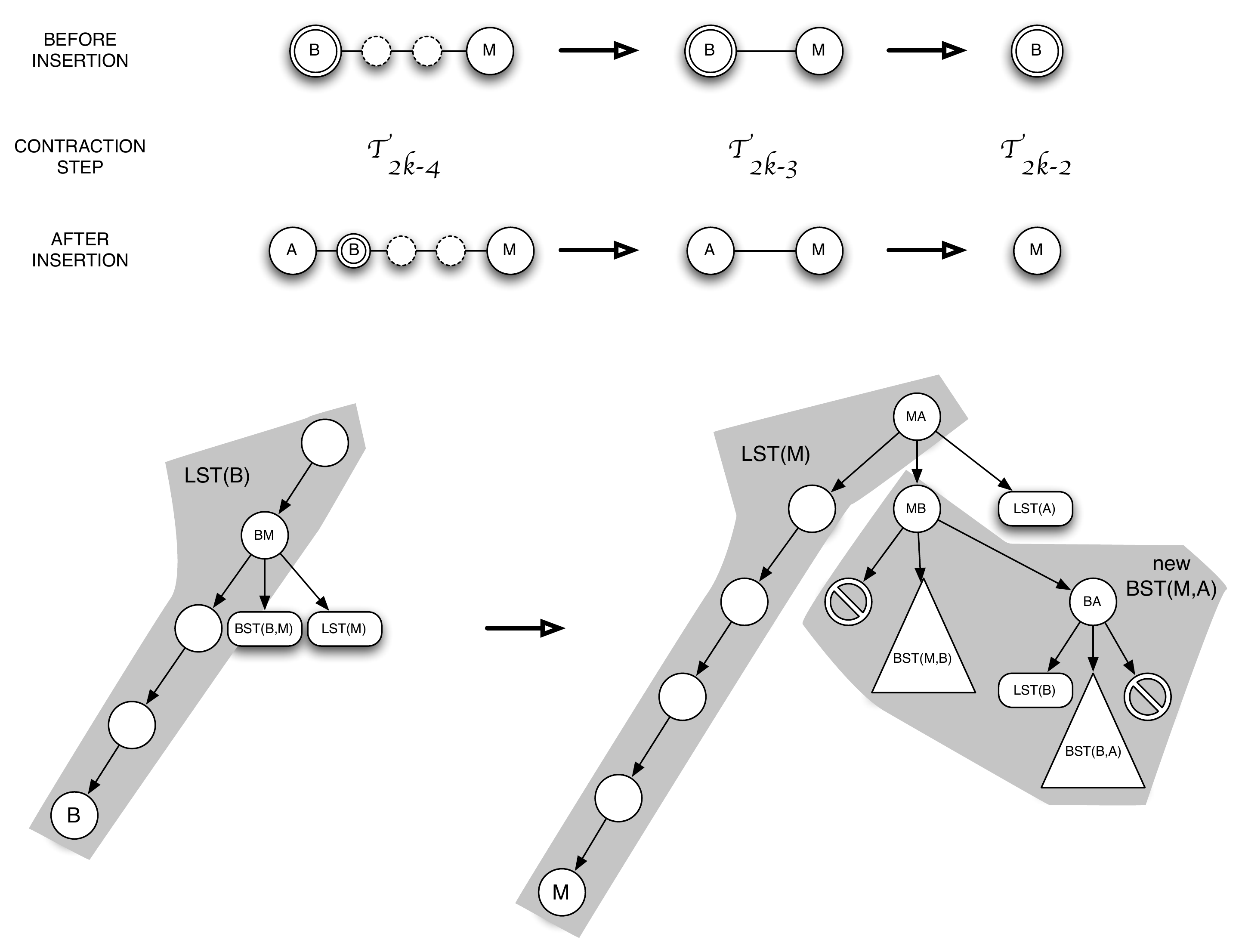}
\end{center}
\vspace{-0.2in}
\caption{Up Correction, Case 1.}
\label{fig:up1}
\end{figure}

\begin{figure}[tb]
\begin{center}
\includegraphics[scale=0.4]{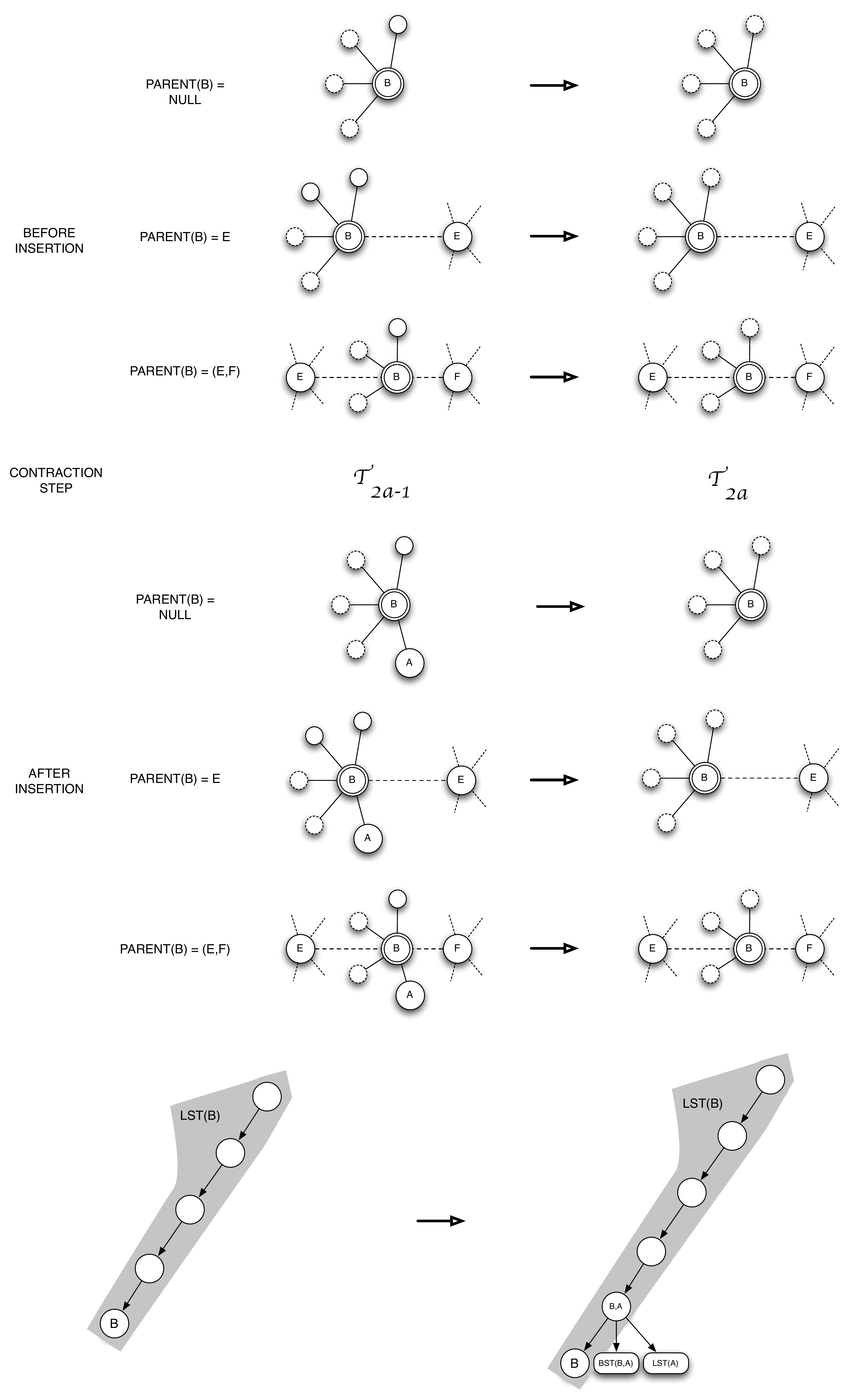}
\end{center}
\vspace{-0.2in}
\caption{Up Correction, Case 2.}
\label{fig:up2}
\end{figure}

\clearpage

\begin{figure}[tb]
\begin{center}
\includegraphics[width=\textwidth]{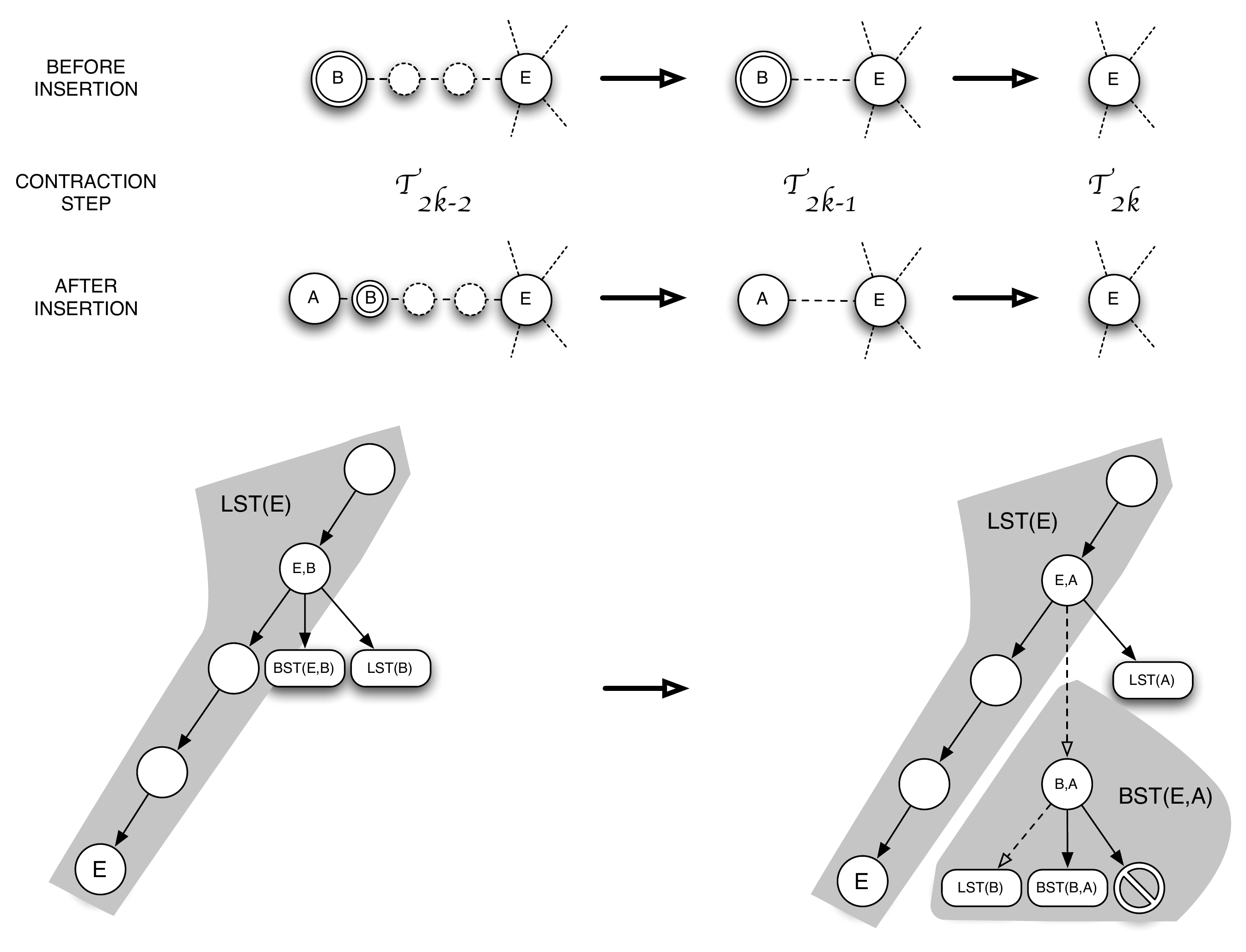}
\end{center}
\vspace{-0.2in}
\caption{Up Correction, Case 3.}
\label{fig:up3}
\end{figure}

\clearpage

\begin{figure}[tb]
\begin{center}
\includegraphics[width=\textwidth]{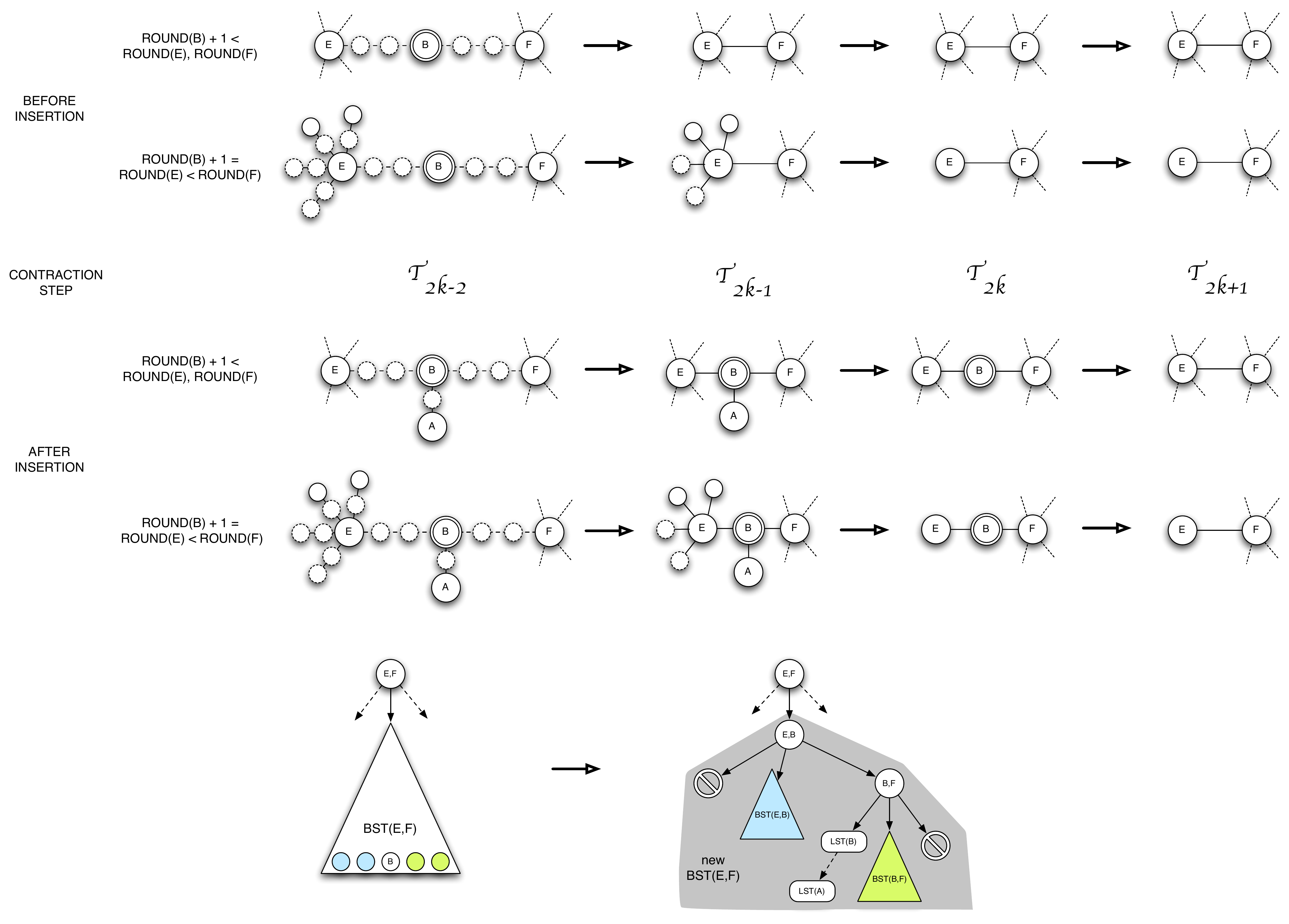}
\end{center}
\vspace{-0.2in}
\caption{Up Correction, Case 4.}
\label{fig:up4}
\end{figure}

\clearpage

\begin{figure}[tb]
\begin{center}
\includegraphics[width=\textwidth]{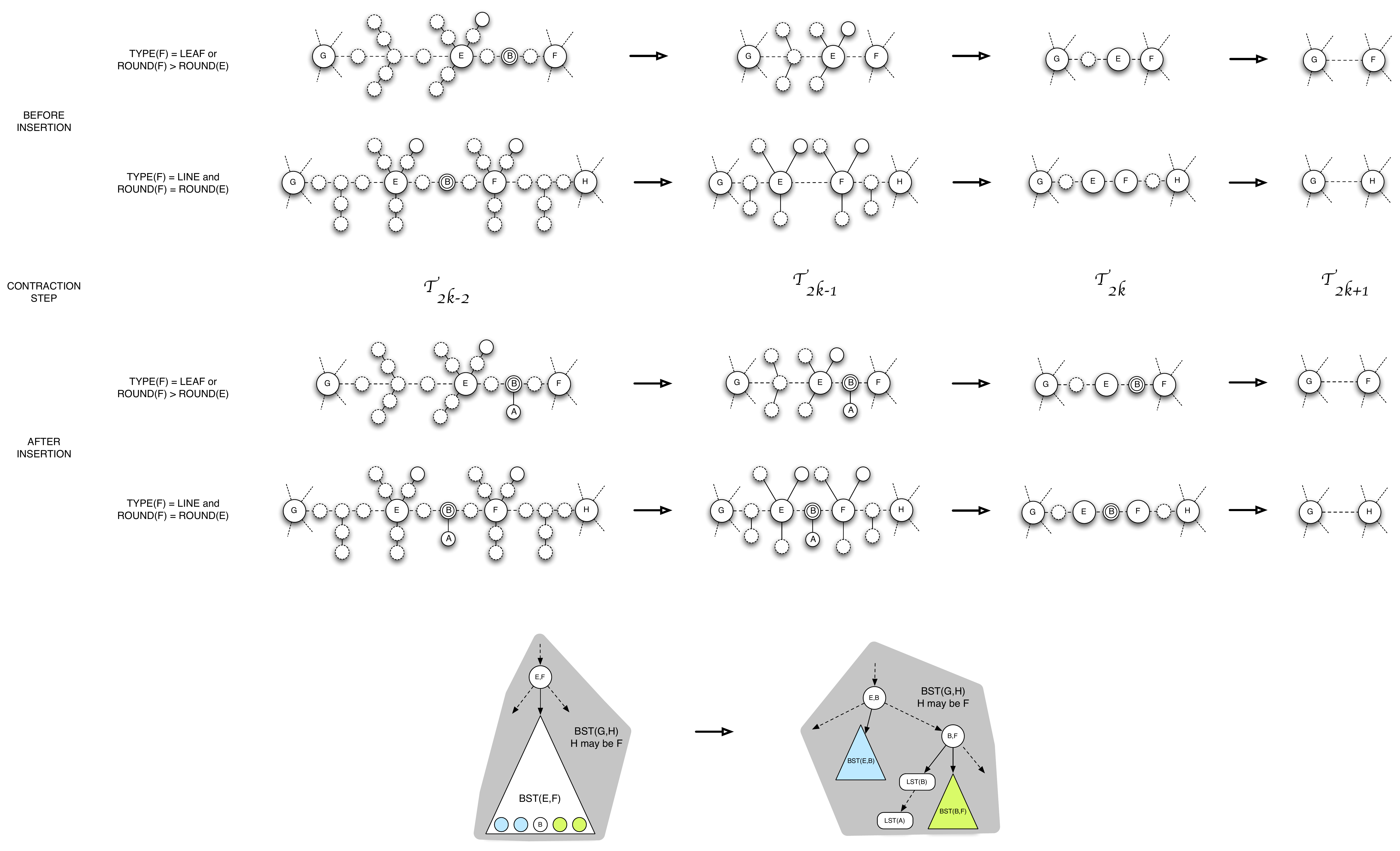}
\end{center}
\vspace{-0.2in}
\caption{Up Correction, Case 5.}
\label{fig:up5}
\end{figure}

\clearpage

\begin{figure}[htb]
\begin{center}
\includegraphics[width=\textwidth]{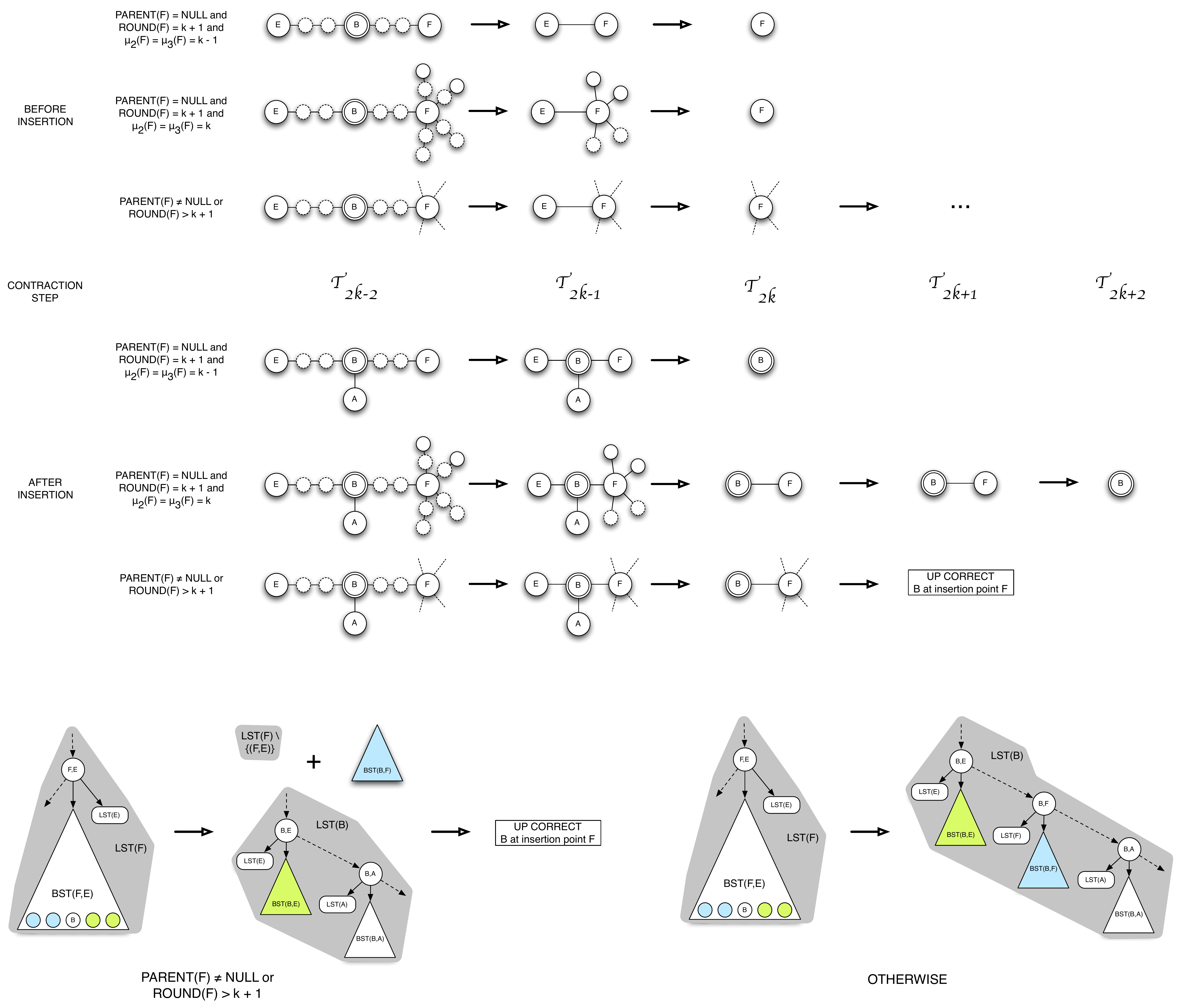}
\end{center}
\vspace{-0.2in}
\caption{Up Correction, Case 6.}
\label{fig:up6}
\end{figure}

\clearpage

\begin{figure}[tb]
\begin{center}
\includegraphics[width=\textwidth]{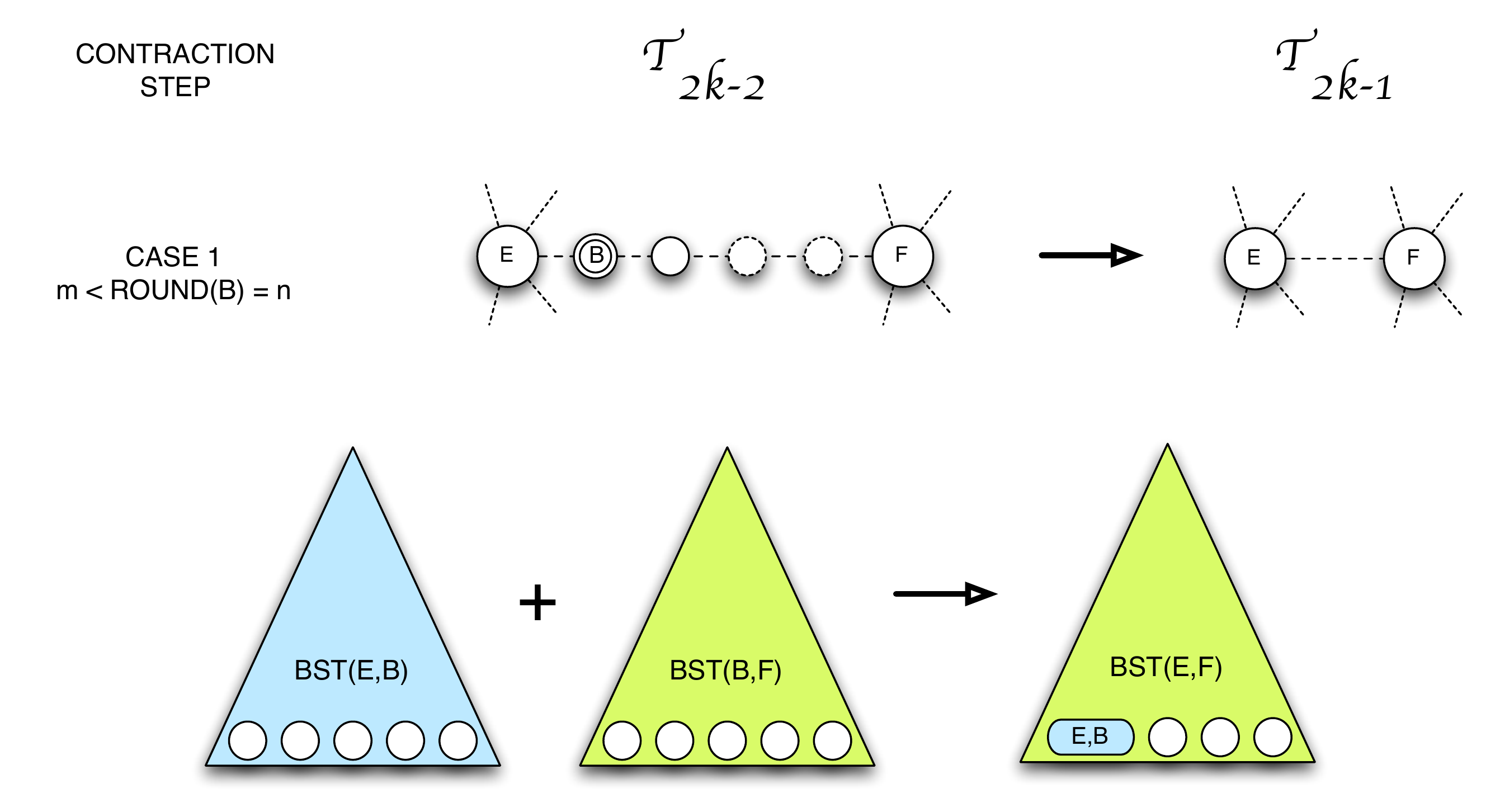}
\end{center}
\vspace{-0.2in}
\caption{Down Correction, Case 1.}
\label{fig:down1}
\end{figure}

\clearpage

\begin{figure}[tb]
\begin{center}
\includegraphics[width=\textwidth]{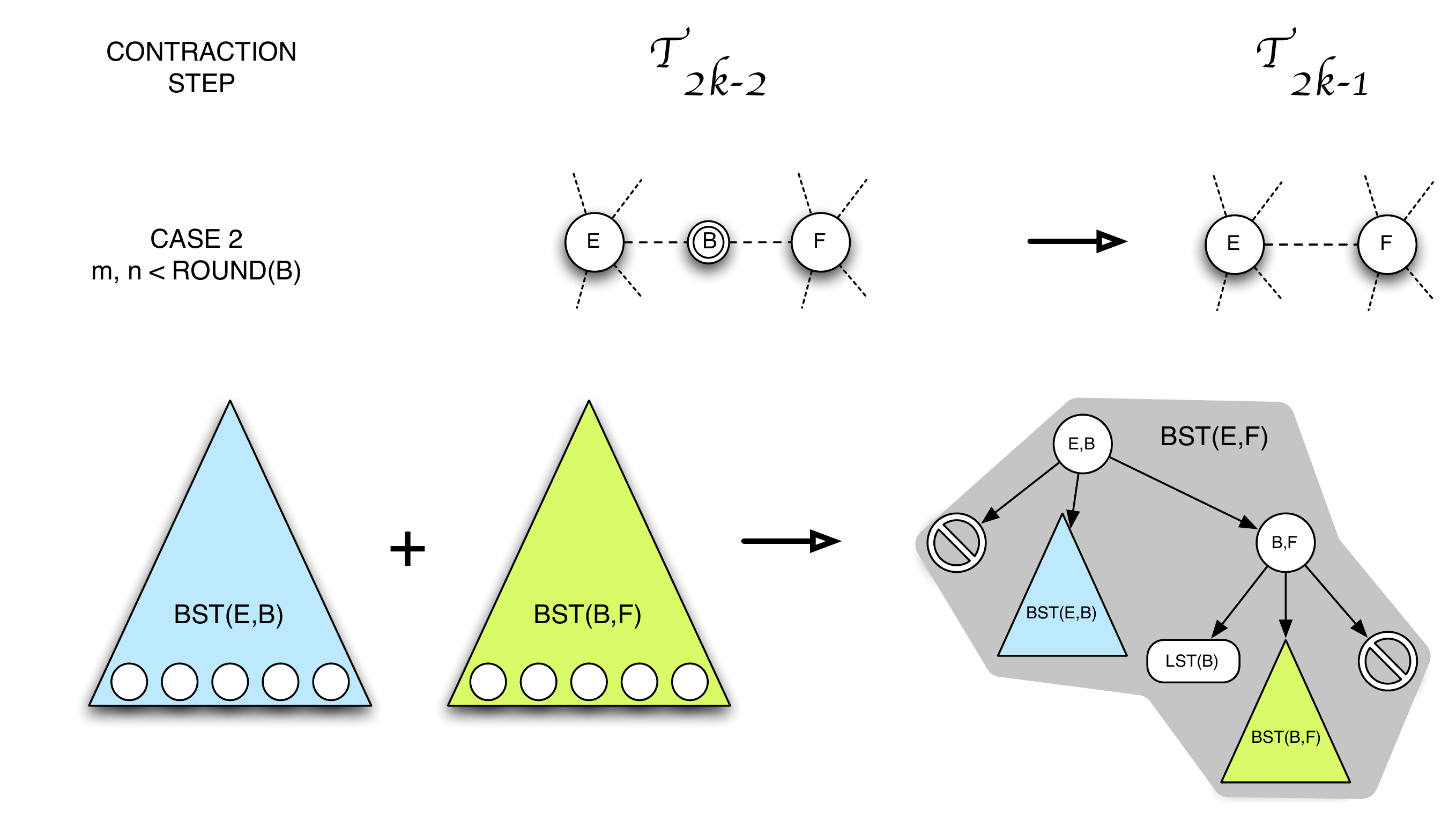}
\end{center}
\vspace{-0.2in}
\caption{Down Correction, Case 2.}
\label{fig:down2}
\end{figure}

\clearpage

\begin{figure}[tb]
\begin{center}
\includegraphics[width=\textwidth]{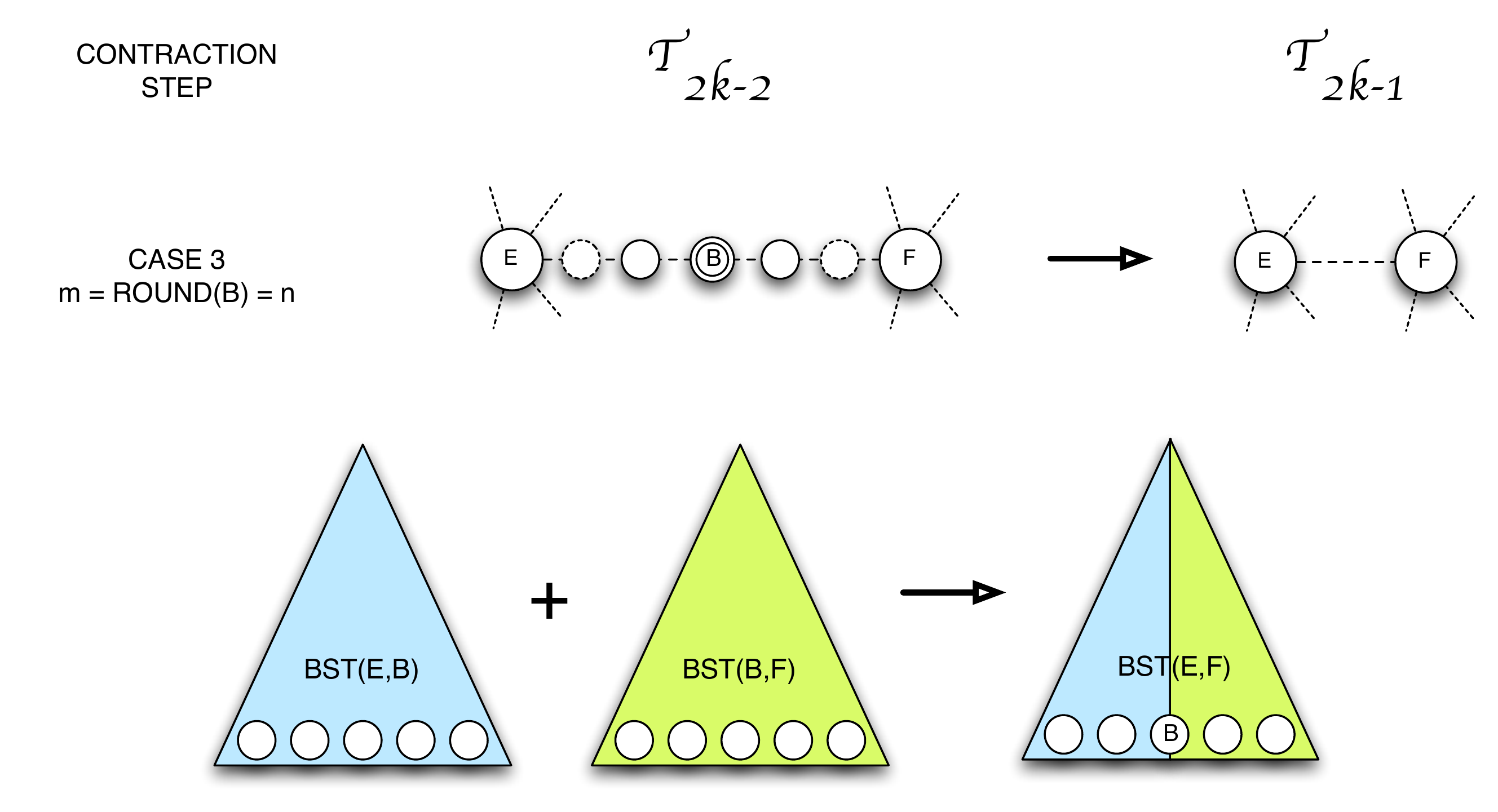}
\end{center}
\vspace{-0.2in}
\caption{Down Correction, Case 3.}
\label{fig:down3}
\end{figure}

\clearpage

\begin{figure}[tb]
\begin{center}
\includegraphics[width=\textwidth]{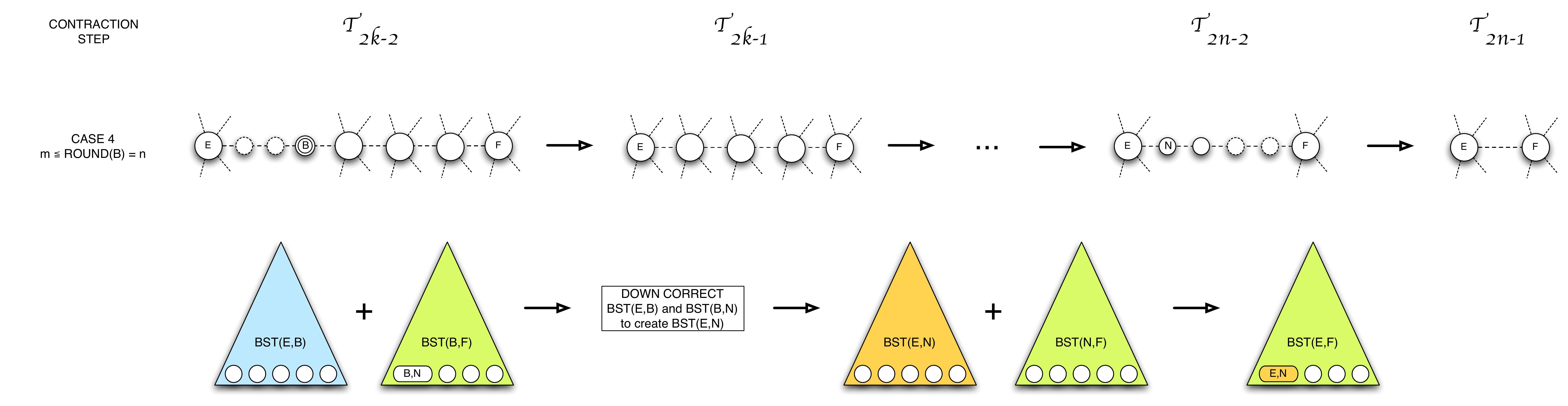}
\end{center}
\vspace{-0.2in}
\caption{Down Correction, Case 4.}
\label{fig:down4}
\end{figure}

\clearpage

\begin{figure}[tb]
\begin{center}
\includegraphics[width=\textwidth]{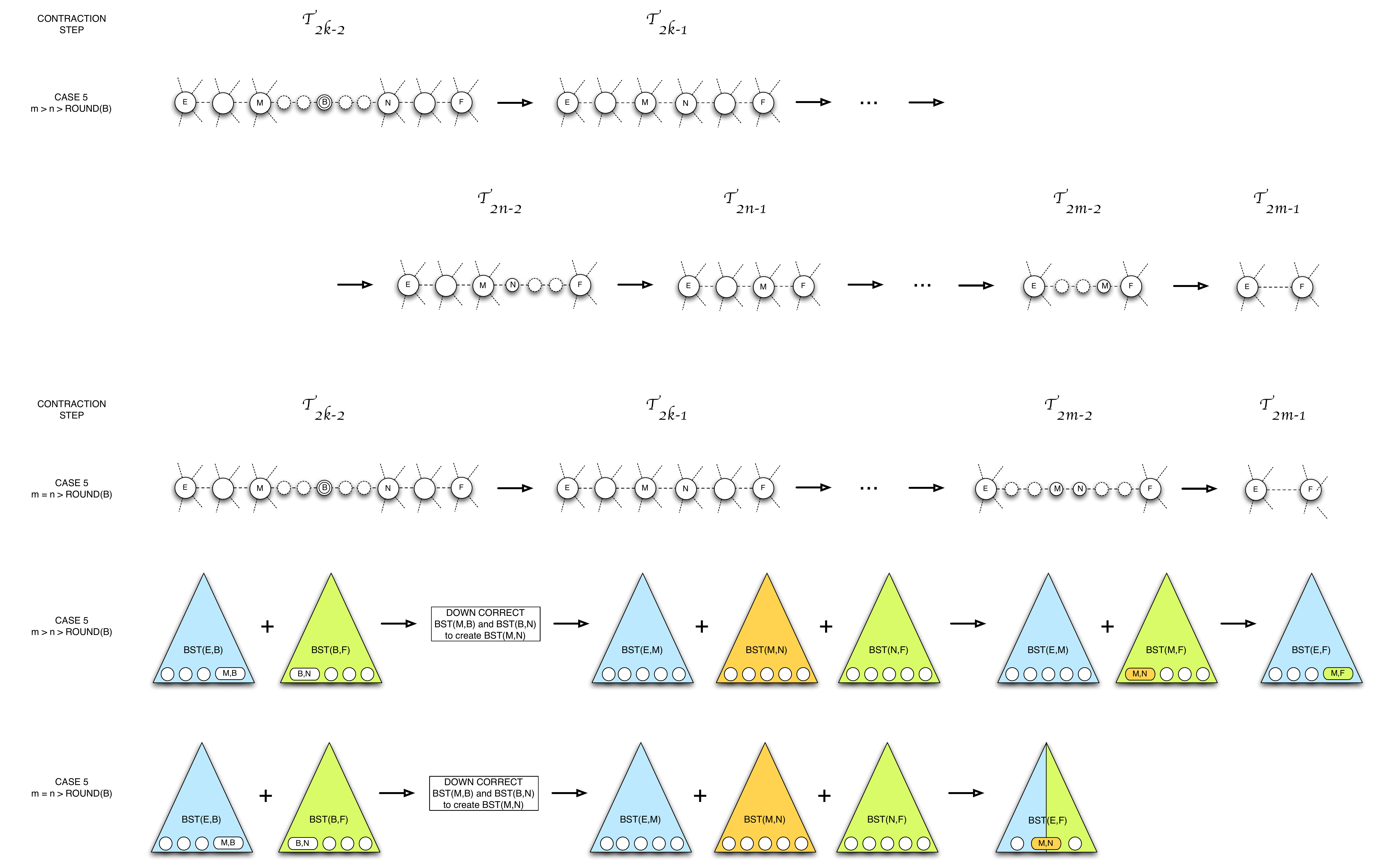}
\end{center}
\vspace{-0.2in}
\caption{Down Correction, Case 5.}
\label{fig:down5}
\end{figure}

\clearpage

\begin{figure}[tb]
\begin{center}
\includegraphics[width=\textwidth]{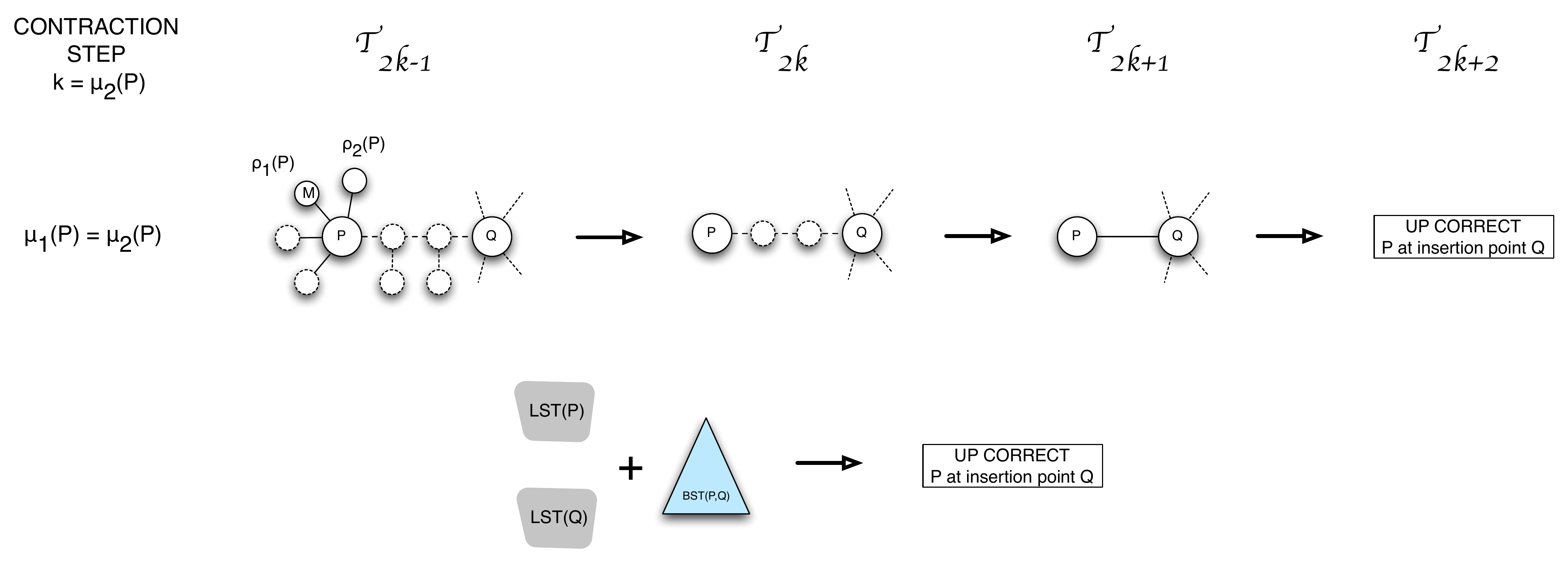}
\end{center}
\vspace{-0.2in}
\caption{Transition, Part 1.}
\label{fig:transition1}
\end{figure}

\clearpage

\begin{figure}[tb]
\begin{center}
\includegraphics[width=\textwidth]{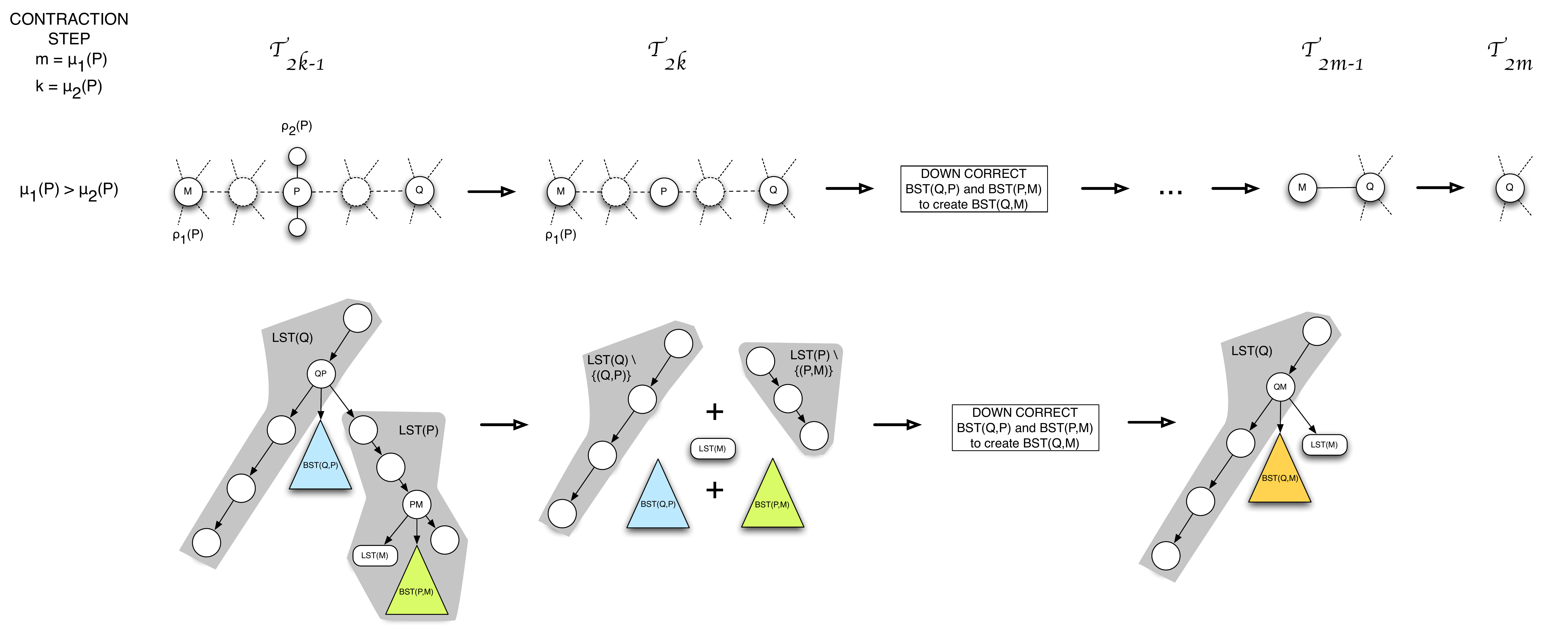}
\end{center}
\vspace{-0.2in}
\caption{Transition, Part 2.}
\label{fig:transition2}
\end{figure}

\clearpage

\begin{figure}[tb]
\begin{center}
\includegraphics[width=\textwidth]{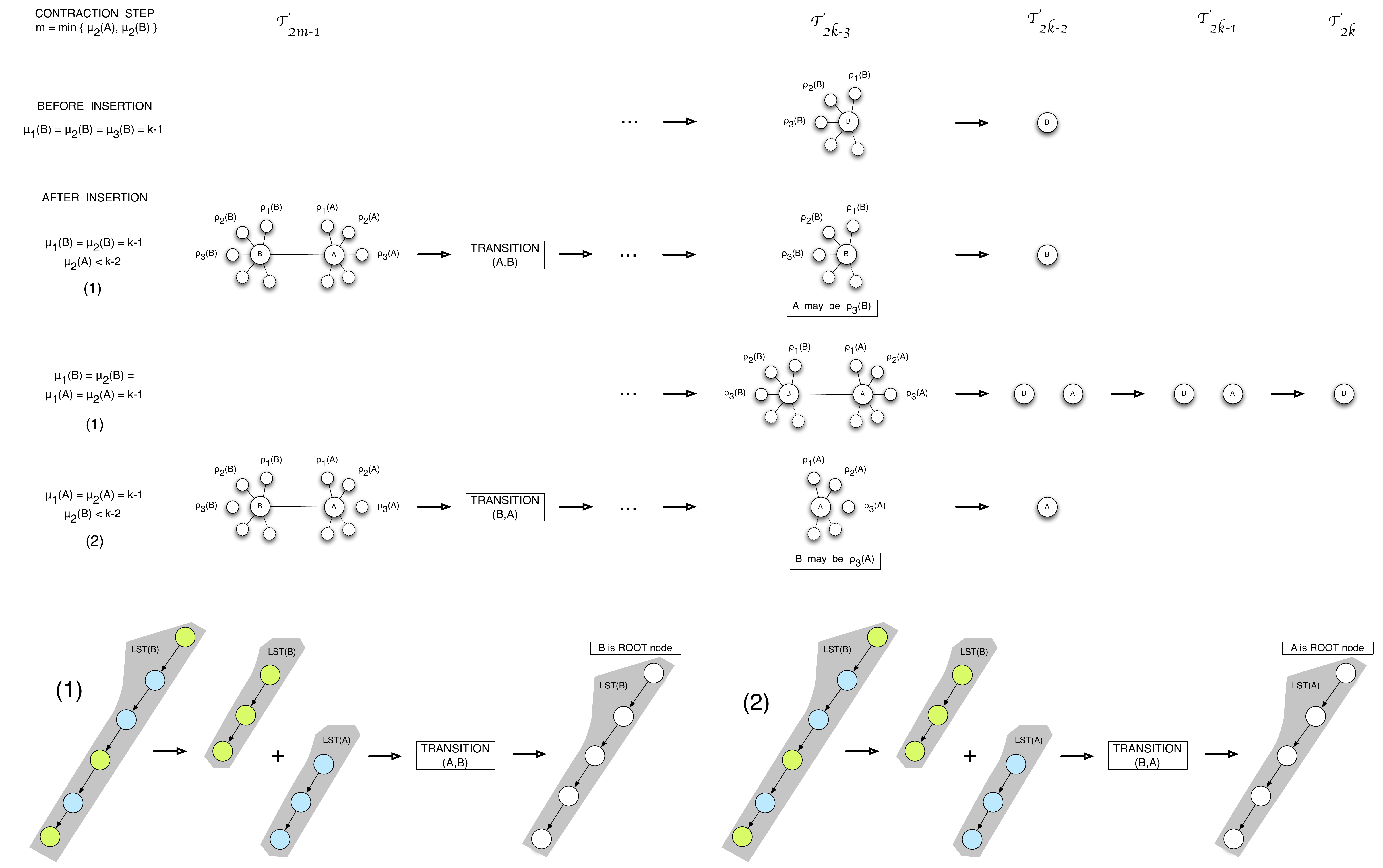}
\end{center}
\vspace{-0.2in}
\caption{Insert, Case 1, Part 1.}
\label{fig:insert1a}
\end{figure}

\clearpage

\begin{figure}[tb]
\begin{center}
\includegraphics[width=\textwidth]{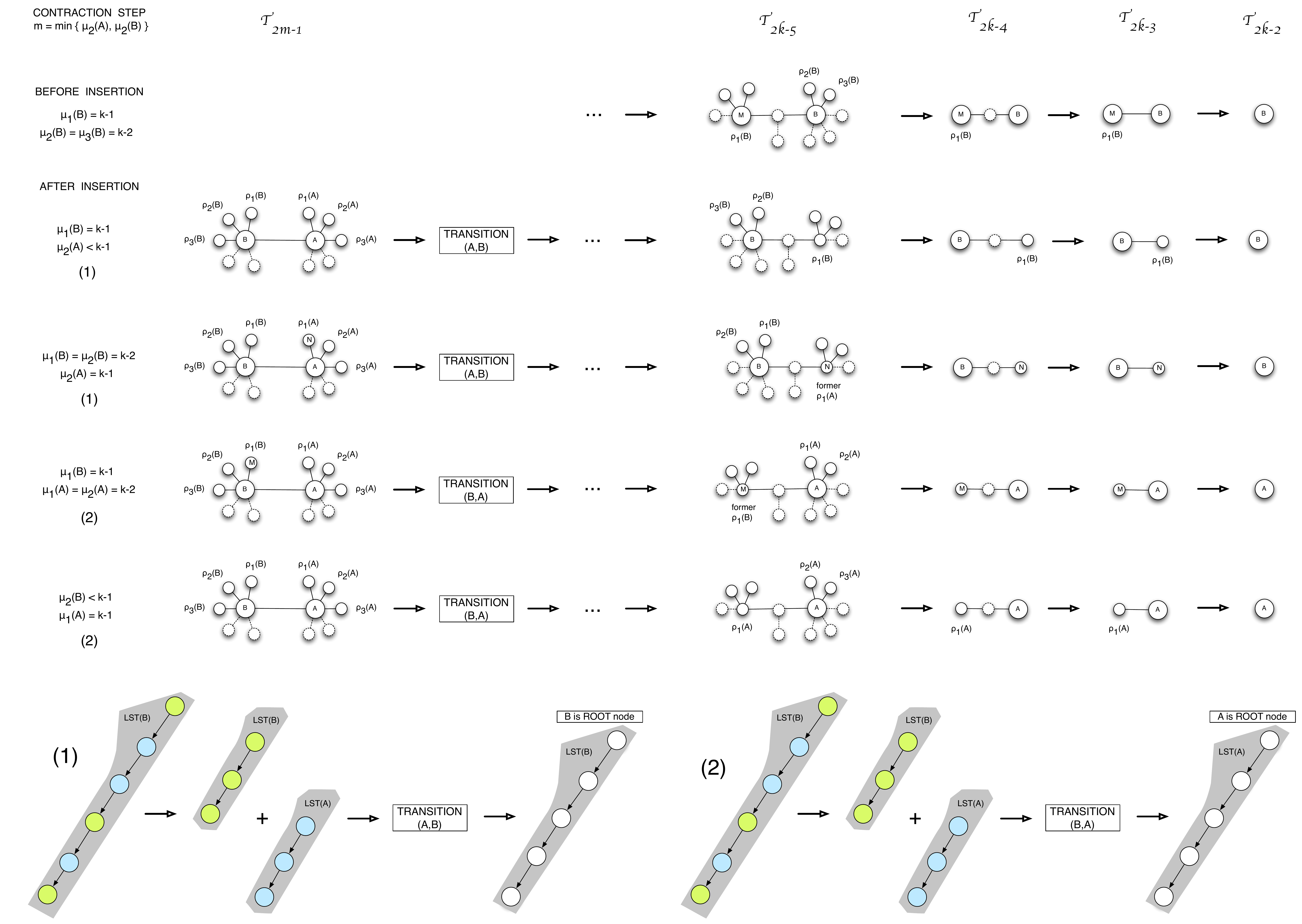}
\end{center}
\vspace{-0.2in}
\caption{Insert, Case 1, Part 2.}
\label{fig:insert1b}
\end{figure}

\clearpage

\begin{figure}[tb]
\begin{center}
\includegraphics[width=\textwidth]{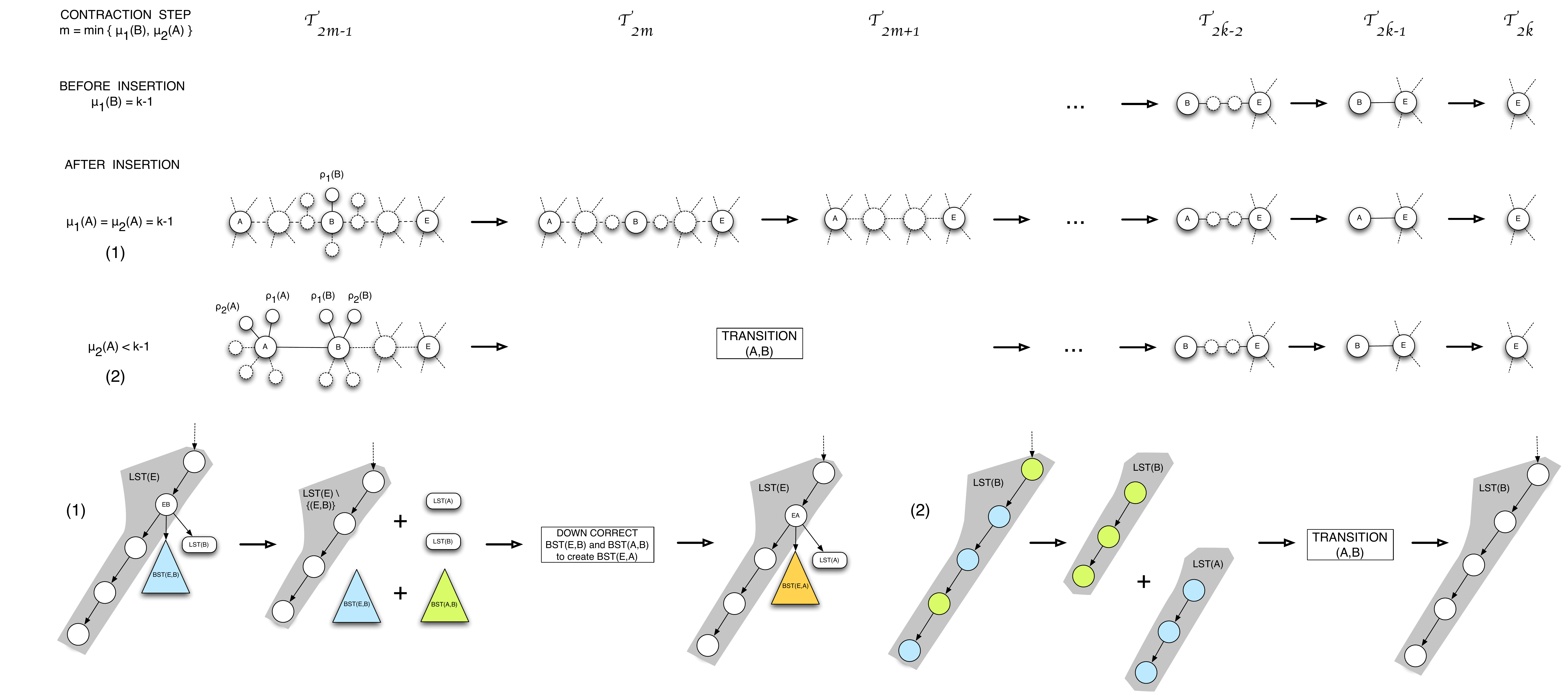}
\end{center}
\vspace{-0.2in}
\caption{Insert, Case 2.}
\label{fig:insert2}
\end{figure}

\clearpage

\begin{figure}[tb]
\begin{center}
\includegraphics[width=\textwidth]{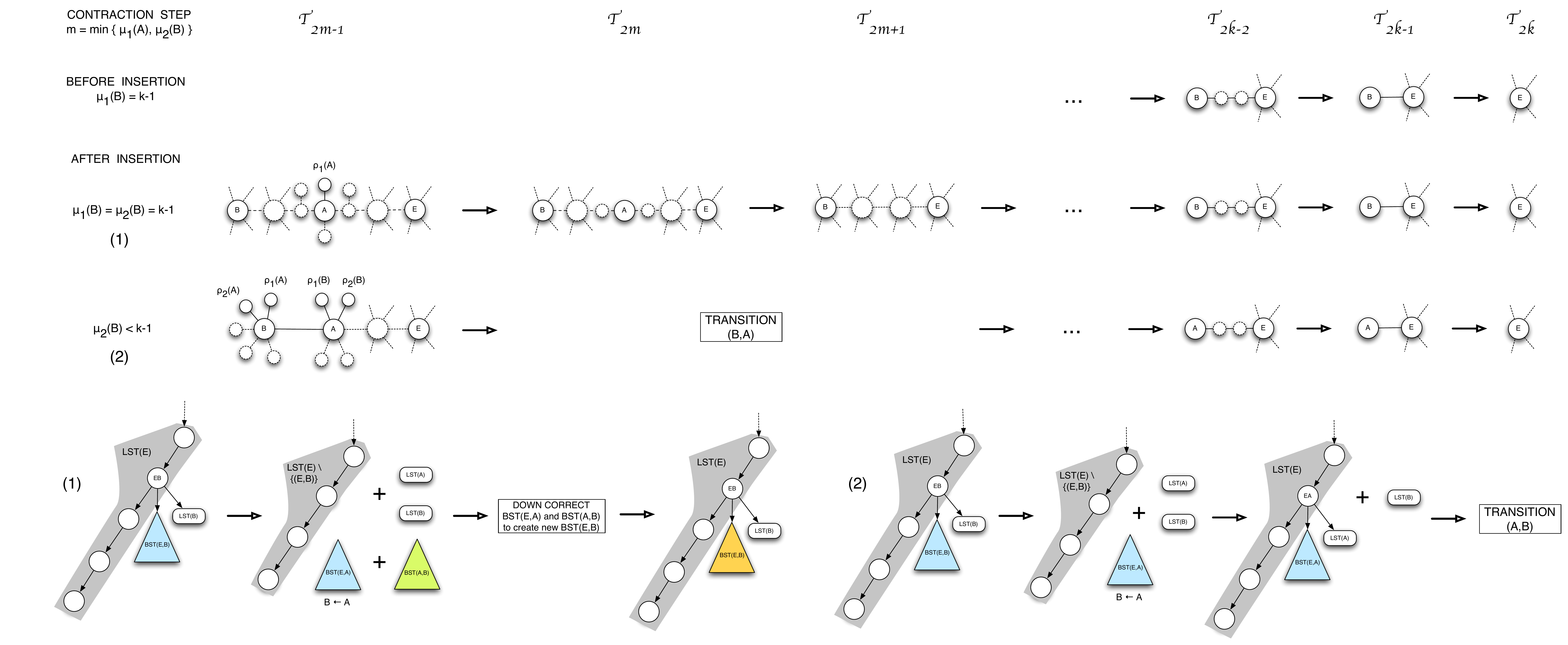}
\end{center}
\vspace{-0.2in}
\caption{Insert, Case 3.}
\label{fig:insert3}
\end{figure}

\clearpage

\begin{figure}[tb]
\begin{center}
\includegraphics[width=\textwidth]{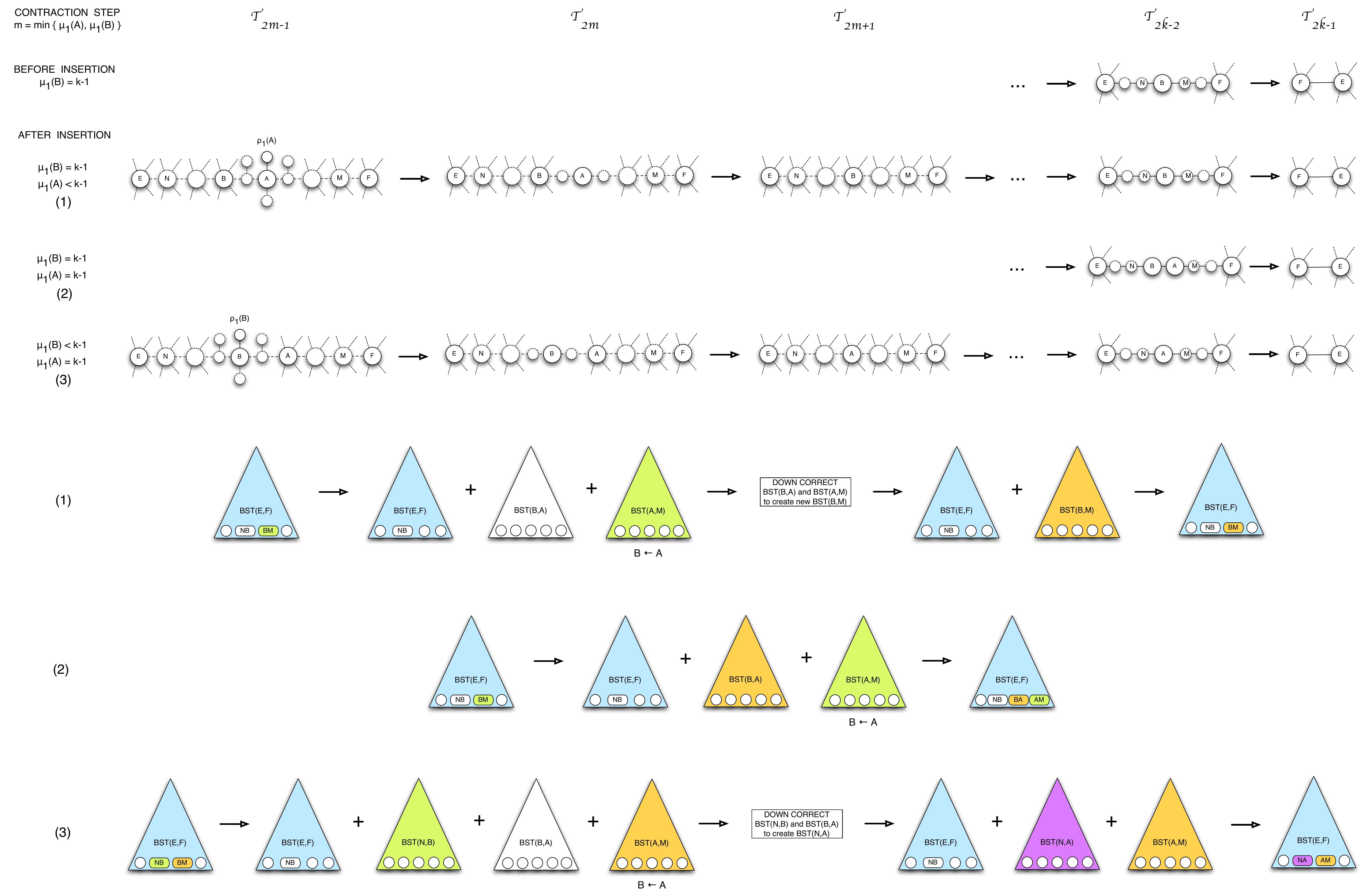}
\end{center}
\vspace{-0.2in}
\caption{Insert, Case 4.}
\label{fig:insert4}
\end{figure}

\clearpage

\begin{figure}[tb]
\begin{center}
\includegraphics[width=\textwidth]{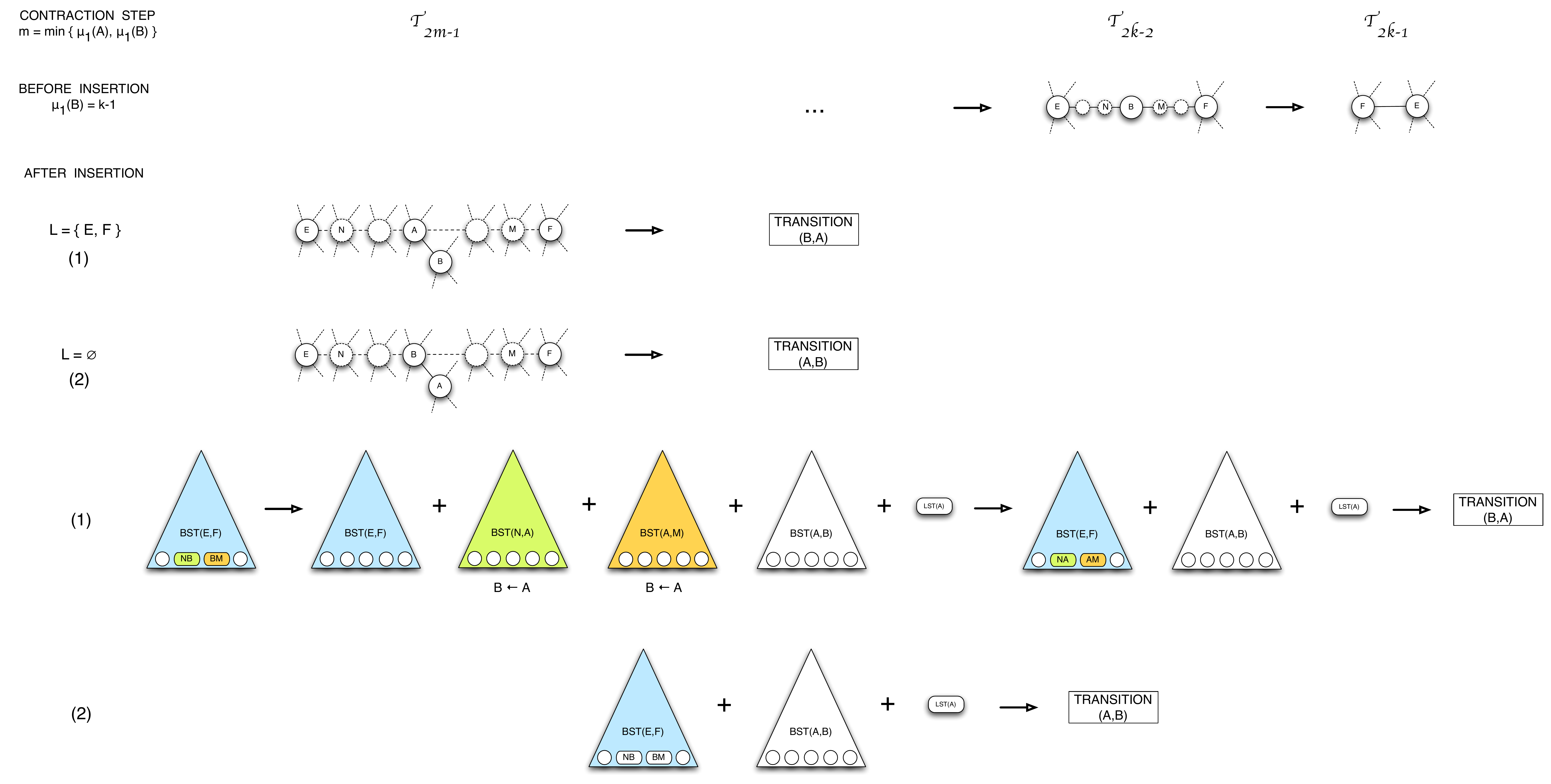}
\end{center}
\vspace{-0.2in}
\caption{Insert, Case 5.}
\label{fig:insert5}
\end{figure}

\clearpage

\begin{figure}[tb]
\begin{center}
\includegraphics[width=\textwidth]{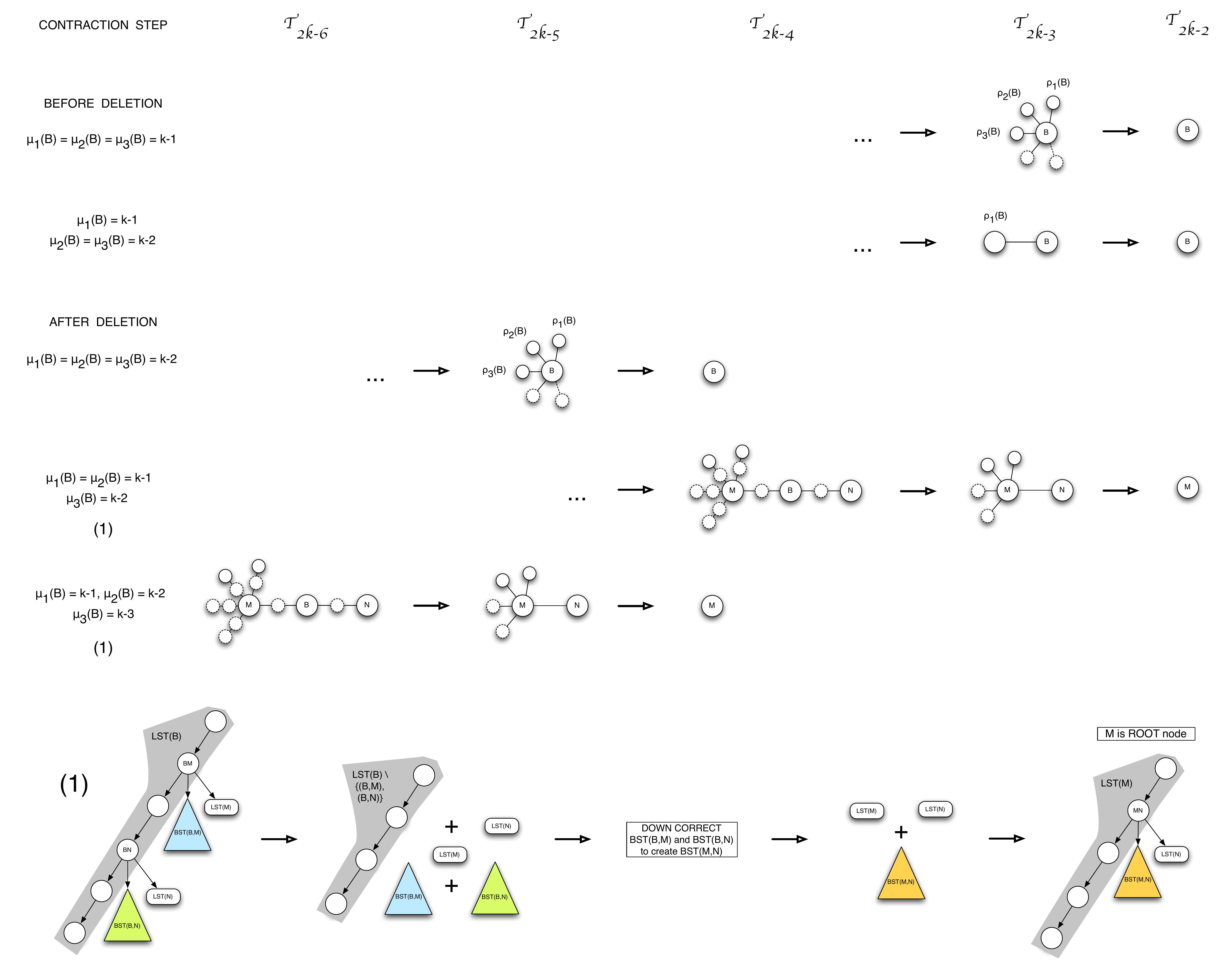}
\end{center}
\vspace{-0.2in}
\caption{Stabilize, Case 1.}
\label{fig:stabilize1}
\end{figure}

\clearpage

\begin{figure}[tb]
\begin{center}
\includegraphics[width=\textwidth]{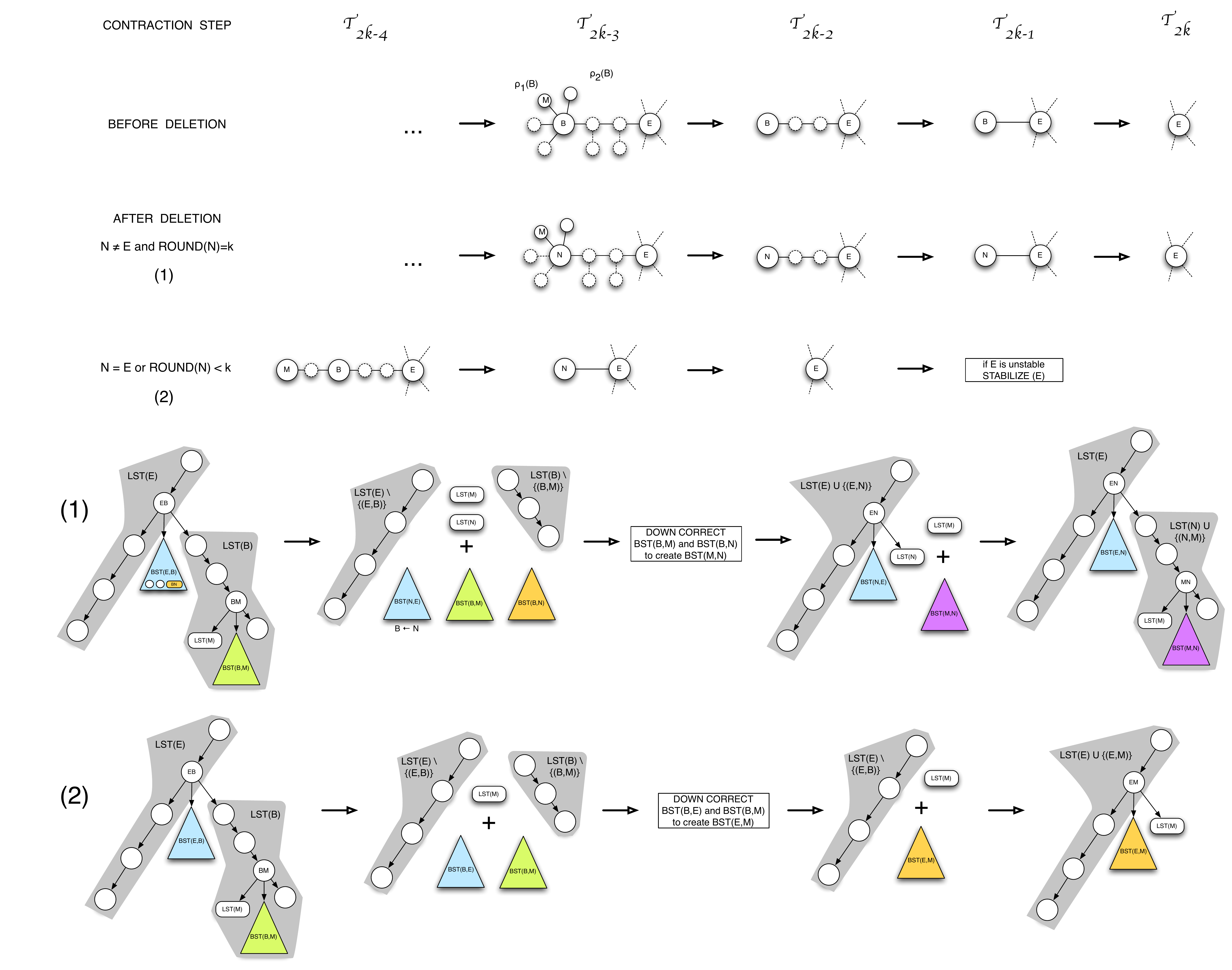}
\end{center}
\vspace{-0.2in}
\caption{Stabilize, Case 2.}
\label{fig:stabilize2}
\end{figure}

\clearpage

\begin{figure}[tb]
\begin{center}
\includegraphics[width=\textwidth]{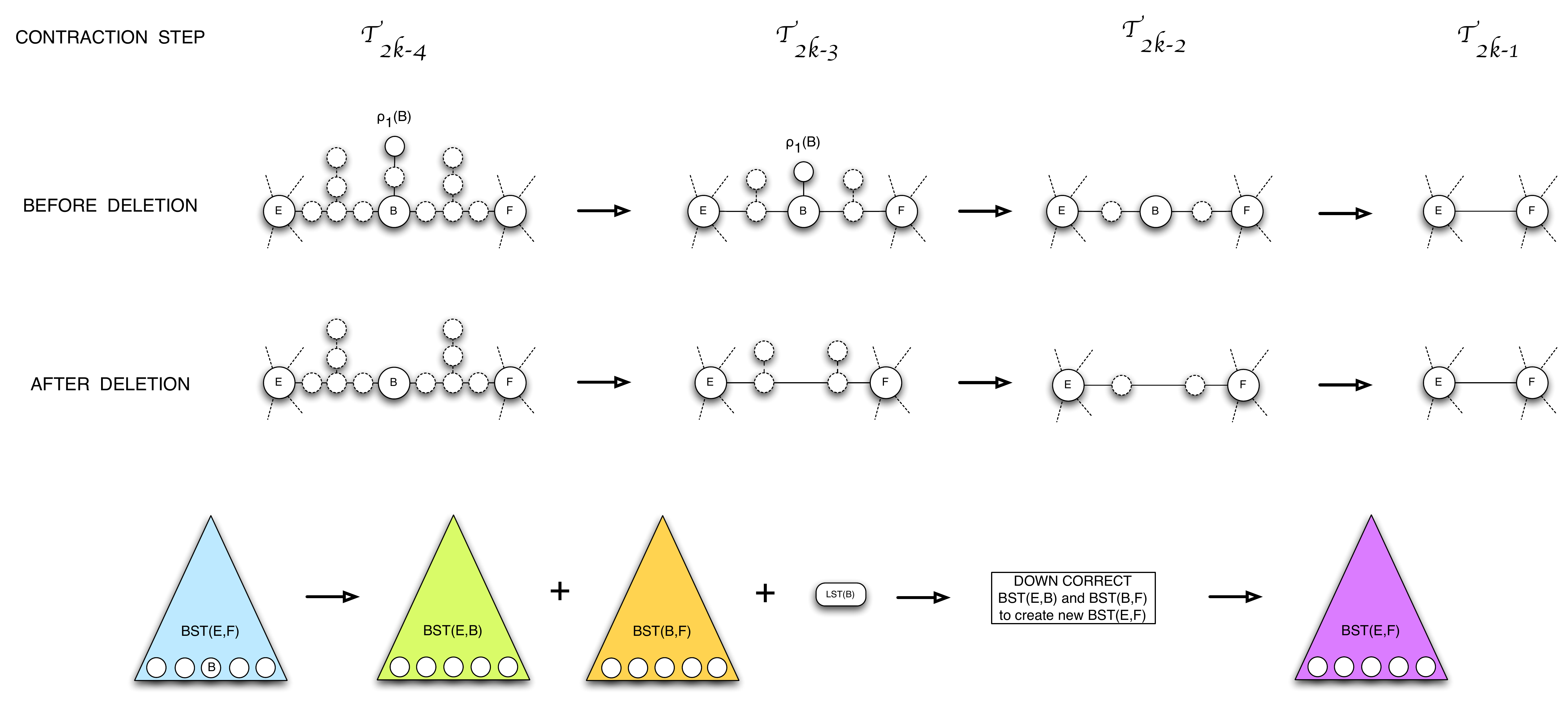}
\end{center}
\vspace{-0.2in}
\caption{Stabilize, Case 3.}
\label{fig:stabilize3}
\end{figure}

\clearpage

\begin{figure}[tb]
\begin{center}
\includegraphics[width=\textwidth]{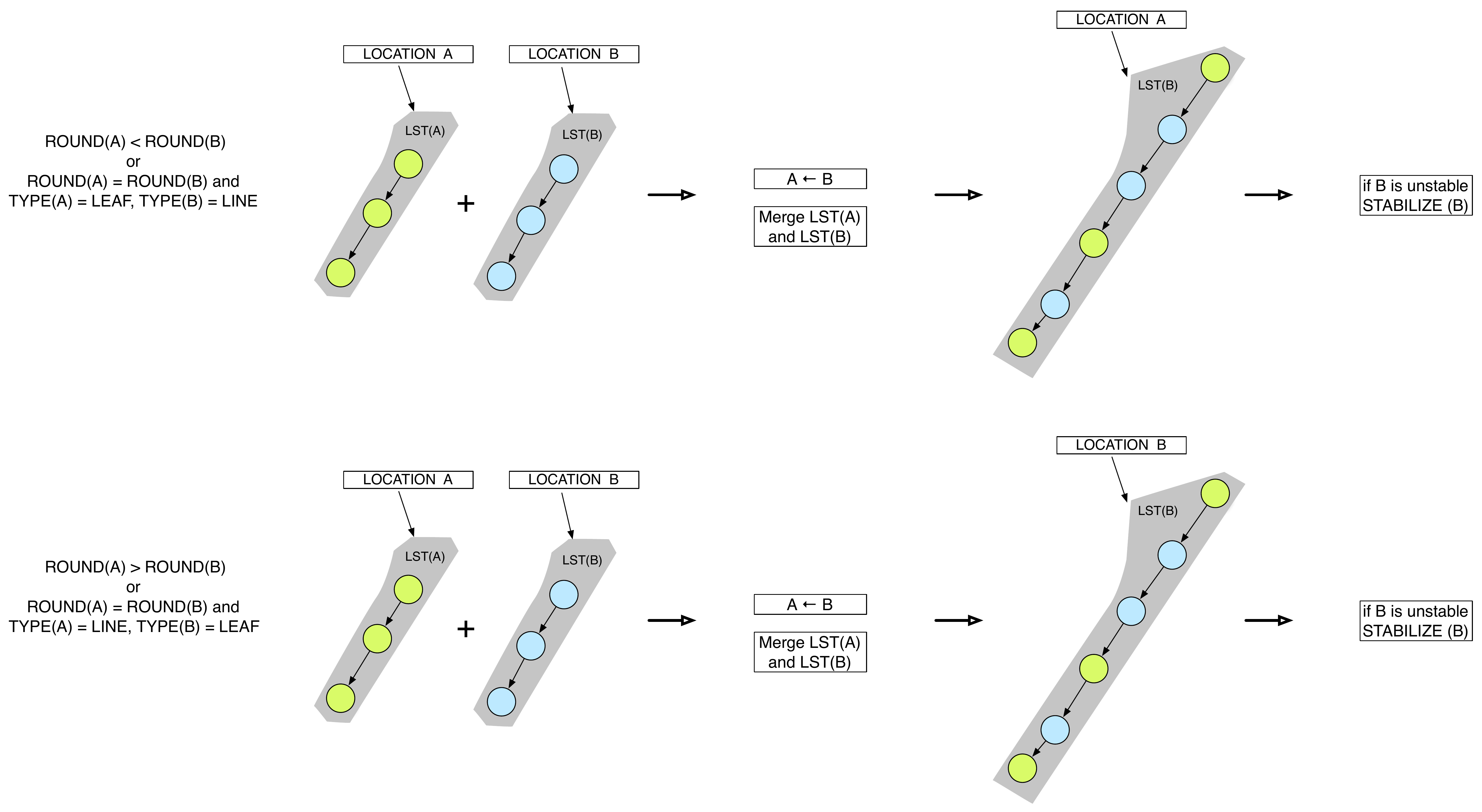}
\end{center}
\vspace{-0.2in}
\caption{Delete.}
\label{fig:delete1}
\end{figure}

\end{document}